\documentclass[final, 11pt]{article}
\usepackage[margin=1in]{geometry}

\usepackage{xspace}
\makeatletter
\newcommand{\specificthanks}[1]{\@fnsymbol{#1}}
\makeatother

\usepackage{paralist}
\usepackage{bbold}

\usepackage{graphicx}
\usepackage{grffile}

\usepackage{pythonhighlight}

\usepackage{algorithm}
\usepackage[noend]{algpseudocode}

\newif\ifdraft

\usepackage{pgfplots}
\usepackage{amsmath,amsfonts,amssymb,mathtools}
\usepackage{subcaption}

\usepackage{graphicx,float}
\usepackage[short]{optidef}

\newcommand{\etal}{\textit{et~al.}\xspace}

   \usepackage[amsmath, thmmarks]{ntheorem}
   \newtheorem{theorem}{Theorem}[section]

   \newtheorem{lemma}{Lemma}[section]

   \newtheorem{definition}{Definition}[section]
   
   \newtheorem{problem}{Problem}
   \newtheorem{assumption}{Assumption}

\author{
Harb, Elfarouk \thanks{Supported in part by NSF CCF-1910149. We thank Vasilis Livanos and Chandra Chekuri for helpful feedback, discussions, and manuscript improvement. We thank Raimundo Saona for help with replicating some results in \cite{csz-pstbs-18}. We are particularly indebted to Sariel Har-Peled for several ideas and valuable feedback on the manuscript. In particular, the $O(\log^\ast n)$ load analysis is due to Sariel; the author had a looser analysis of $O(\log \log n)$.} \\
  University of Illinois at Urbana-Champaign  \\
  \texttt{eyfmharb@gmail.com}
}

\newcommand{\Alg}{{\textsf{ALG}}\xspace}
\newcommand{\constant}{0.6724\xspace}
\newcommand{\constantSemiOnline}{0.89\xspace}

\newcommand{\IID}{{\textsf{IID}}\xspace}

\newcommand{\pth}[2][\!]{\mleft({#2}\mright)}

\newcommand{\pbrcx}[1]{\left[ {#1} \right]}

\usepackage{mleftright}

\newcommand{\Prob}[1]{\mathop{\mathbb{P}}\mleft[{#1}\mright]}

\newcommand{\ProbCond}[2]{\mathop{\mathbb{P}}\mleft[{#1} \middle\vert\; #2 \mright]}

\usepackage{txfonts}

\newcommand{\Ex}[2][\!]{\mathbb{E} #1\pbrcx{#2}}

\newcommand{\cardin}[1]{\left| {#1} \right|}

\usepackage[title]{appendix}

\renewcommand{\Re}{\mathbb{R}}

\newcommand{\Poisson}{{\textsf{Poisson}}\xspace}
\newcommand{\Bern}{{\textsf{B}}\xspace}
\newcommand{\Uniform}{{\textsf{Uniform}}\xspace}
\newcommand{\Bin}{{\textsf{Bin}}\xspace}

\newcommand{\Term}[1]{\textsf{#1}}

\newcommand{\CDF}{\Term{cdf}\xspace}

\newcommand{\PMF}{\Term{pmf}\xspace}

\newcommand*\dif{\mathop{}\mathrm{d}}

\usepackage{hyperref}
\usepackage{xcolor}
\hypersetup{
    colorlinks,
    linkcolor={red!50!black},
    citecolor={blue!50!black},
    urlcolor={blue!80!black}
}

\usepackage{cleveref}

   \usepackage[amsmath,thmmarks]{ntheorem}
   \theoremseparator{.}

   \theoremstyle{plain}

   \theoremstyle{plain}
   \theoremheaderfont{\sf} \theorembodyfont{\upshape}
   \newtheorem*{remark:unnumbered}{Remark}
   
   \newtheorem{remark}{Remark}

\newcommand{\myqedsymbol}{\rule{2mm}{2mm}}

\theoremheaderfont{\em}
\theorembodyfont{\upshape}
\theoremstyle{nonumberplain}
\theoremseparator{}
\theoremsymbol{\myqedsymbol}
\newtheorem{proof}{Proof:}
\newtheorem{proofof}{Proof of\!}

\newcommand{\HLink}[2]{\hyperref[#2]{#1~\ref*{#2}}}
\newcommand{\HLinkSuffix}[3]{\hyperref[#2]{#1\ref*{#2}{#3}}}

\newcommand{\figlab}[1]{\label{fig:#1}}
\newcommand{\figref}[1]{\HLink{Figure}{fig:#1}}

\newcommand{\algolab}[1]{\label{alg:#1}}
\newcommand{\algoref}[1]{\HLink{Algorithm}{alg:#1}}

\newcommand{\tablab}[1]{\label{tab:#1}}
\newcommand{\tabref}[1]{\HLink{Table}{tab:#1}}

\newcommand{\seclab}[1]{\label{sec:#1}}
\newcommand{\secref}[1]{\HLink{Section}{sec:#1}}

\providecommand{\eqlab}[1]{}
\renewcommand{\eqlab}[1]{\label{equation:#1}}
\newcommand{\Eqref}[1]{\HLinkSuffix{Eq.~(}{equation:#1}{)}}

\newcommand{\lemlab}[1]{\label{lemma:#1}}
\newcommand{\lemref}[1]{\HLink{Lemma}{lemma:#1}}

\newcommand{\apdxlab}[1]{{\label{apdx:#1}}}
\newcommand{\apdxref}[1]{\HLink{Appendix}{apdx:#1}}

\newcommand{\Xlemref}[1]{\noexpand{\noexpand\HLink{Lemma}{lemma:#1}}}
\newcommand{\Xthmref}[1]{\noexpand{\noexpand\HLink{Theorem}{theo:#1}}}

\newcommand{\ICDFY}[2]{\mathsf{q}_{#1}\pth{#2}}

\usepackage{enumitem}
\usepackage{booktabs}

\setlist[enumerate]{align=left}
\usepackage{mdwlist}
\usepackage{dsfont}

\title{New Prophet Inequalities via Poissonization and Sharding}

\renewenvironment{equation*}{\[}{\]\ignorespacesafterend}

\begin{document}

\pagenumbering{gobble}

\maketitle

\begin{abstract}
This work introduces \emph{sharding} and \emph{Poissonization} as a unified framework for analyzing prophet inequalities. Sharding involves splitting a random variable into several independent random variables, shards, that collectively mimic the original variable's behavior. We combine this with Poissonization, where these shards are modeled using a Poisson distribution. Despite the simplicity of our framework, we improve the competitive ratio analysis of a dozen well studied prophet inequalities in the literature, some of which have been studied for decades. This includes the \textsc{Top-$1$-of-$k$} prophet inequality,  prophet secretary inequality, and semi-online prophet inequality, among others. This approach not only refines the constants but also offers a more intuitive and streamlined analysis for many prophet inequalities in the literature. Furthermore, it simplifies proofs of several known results and may be of independent interest for other variants of the prophet inequality, such as order-selection. 
\end{abstract}
\newpage 
\pagenumbering{arabic}

\section{Introduction, Related Work, and Contributions. }

The field of optimal stopping theory concerns the optimization settings
where one makes decisions in a sequential manner, given imperfect
information about the future, with an objective to maximize a reward
or minimize a cost. The classical problem in the field is known as the \emph{prophet inequality} problem \cite{ks-sfv-77, ks-sapfv-78}. In this problem, a gambler is presented with $n$ \emph{non-negative} independent random variables $X_1, \ldots X_n$ with known distributions. We  assume, without loss of generality, that the random variables are continuous. In iteration $t$, a random realization value $v_t$ is drawn from the distribution of $X_t$ and presented to the gambler. The gambler may accept $v_t$, concluding the game, or irrevocably reject $v_t$ to proceed to iteration $t+1$. Note that the random variable ordering is chosen adversarially by an almighty adversary that knows the gambler's algorithm. The goal of the gambler is to maximize their expected reward, where the expectation is taken across all possible realizations of $X_1, \ldots, X_n$. The gambler is compared to a \emph{prophet} who is allowed to make their decision after seeing all realizations (i.e., can always select $\max(v_1, \ldots v_n)$) regardless what realizations occur. In other words, the prophet receives a value $Z$ with expectation $\Ex{Z} = \Ex{\max(X_1, \ldots X_n)}$. An algorithm \Alg
is \emph{$\alpha$-competitive}, for $\alpha \in [0,1]$, if
$\Ex{\Alg} \geq \alpha \cdot \Ex{Z}$, and $\alpha$ is called the
\emph{competitive ratio}.

The classic prophet inequality asserts the existence of a $1/2$-competitive algorithm and, moreover, that this is tight. The first algorithm to give the $1/2$ analysis is due to Krengel and Sucheston \cite{ks-sfv-77, ks-sapfv-78}. Later, Samuel-Cahn \cite{s-ctsrm-84} gave a simple and elegant algorithm that sets a single
threshold $\tau$ as the median of the distribution of $Z=\max_i X_i$, and accepts the first value (if any) above $\tau$. She showed that the algorithm is $1/2$ competitive and, moreover, this is tight. Kleinberg and Weinberg \cite{kw-mpiam-19} also showed that setting $\tau = \Ex{\max_i X_i}/2$ also gives a $1/2$-competitive algorithm. 

The preceding discussion assumes only the independence of the distributions of $X_1, \ldots, X_n$. For \IID\footnote{Independent and identically distributed} non-negative random variables $X_1, \ldots, X_n$, Hill and Kertz \cite{hk-csrse-82} initially gave a $(1 - 1/e)$-competitive algorithm. This was improved by Abolhassani, Ehsani, Esfandiari, Hajiaghayi, and Kleinberg \cite{aeehk-b1op-17} in STOC 2017 into a $\approx 0.738$ competitive algorithm. This was improved to the tight $\approx 0.745$ in a result due to Correa, Foncea, Hoeksma, Oosterwijk, and Vredeveld \cite{cfhov-ppmot-21}. This constant is tight due to a matching upper bound, and hence the \IID special case is also resolved. Throughout the paper, unless explicitly stated otherwise, random variables are not assumed to be \IID.

Numerous variants of the prophet inequality problem are known. We list some below. 
\begin{problem}
    \textsc{Random-Order}: The variant of the prophet inequality problem where the random variables realizations arrive in the order of a random permutation (i.e., the order is not adversarial). This is also known as the \textsc{prophet secretary} problem. 
\end{problem}

\begin{problem} \textsc{Top-$1$-of-$k$}: This is a generalization of the single-choice prophet inequality. In this model, an adversary arranges the random variables $X_1, ..., X_n$ adversarially. The gambler may choose up to $k \geq 2$ outcomes, going beyond the single-choice limitation. The gambler's final reward is the \textbf{highest} value among the selected outcomes. 
\end{problem}

\begin{remark}
    We contrast this to the $k$-cardinality constraint model, akin to the \textsc{Top-$1$-of-$k$} model, which focuses on selecting up to $k$ values to maximize the \emph{sum} of chosen values. The algorithm reward is compared against a prophet achieving the sum of the top-$k$ values in each realization \cite{hks-aomdpi-07, jwj-tgmpiosk-22}. Other generalizations exist like maximizing under matroid constraints \cite{kw-mpi-12}. We do not discuss these variants.
\end{remark}

\begin{problem}
    \textsc{Order-Selection}: In this variation of the prophet inequality problem, the gambler is allowed to determine the order in which the random variables are presented to them. 
\end{problem}

\begin{problem} \textsc{Semi-Online}: In this variant, the variables' actual values are kept hidden from the gambler. The gambler can make $n$ adaptive queries, each inquiring if ``$X_i \geq \tau_i$?'', where $\tau_i$ is chosen by the gambler. Each random variable is eligible for only \emph{one} query. After all $n$ queries have been exhausted, the gambler selects the variable that holds the highest conditional expectation.
\end{problem}
\begin{problem} \textsc{Semi-Online-Load-Minimization} (\textsc{SOLM}): This variant resembles the \textsc{Semi-Online} setting but allows the flexibility of multiple queries per variable. However, there is still a limit of $n$ queries in total. The objective is to achieve a competitive ratio of $1-o(1)$, while minimizing the maximum number of times a single variable is queried, referred to as the \emph{load}.
\end{problem}

\paragraph{Prophet Secretary} In the \textsc{Random-Order} variation, Esfandiari, Hajiaghayi, Liaghat, and Monemizadeh  \cite{ehlm-ps-17}  initially gave a $1-\frac{1}{e} \approx 0.632$ competitive algorithm. Azar, Chiplunkar, and Kaplan \cite{ack-pss1b-18} later refined this to \(1-\frac{1}{e}+\frac{1}{400} \approx 0.634\) at EC 2018. While the improvement is small, the case-by-case analysis introduced was non-trivial, exposing the intricacies of the problem. Correa, Saona, and Ziliotto \cite{csz-pstbs-18} further improved the competitive ratio to $\approx 0.669$ by adopting the notion of \emph{discrete} blind strategies at SODA 2019. This required less case-by-case analysis. 
Meanwhile, current impossibility results show that no algorithm can achieve a competitive ratio better than $0.7235$ \cite{gmts-pisrofos-23}.

\paragraph{\textsc{Top-$1$-of-$k$}} Assaf and Samuel-Cahn \cite{as-srpimm-00} first introduced the \textsc{Top-$1$-of-$k$} variant in the context of prophet inequalities, building on the seminal work by Gilbert and Mosteller \cite{gm-rms-66}. They proposed a simple and elegant algorithm with a competitive ratio of $k/(k+1)$ for any $k \geq 2$, and noted that for $k=2$, one cannot do better than $0.8$. In a followup paper, Assaf, Samuel-Cahn, and Goldstein \cite{ags-rpiwm-02} offered a highly non-trivial tighter analysis for $k \geq 2$, finding competitive ratios of roughly $0.731$ for $k=2$, $0.8479$ for $k=3$, and $0.9108$ for $k=4$. These ratios are defined by a recursive differential equations, making it hard to understand their behavior for larger $k$. 

Later, Ezra, Feldman, and Nehama \cite{efn-pso-18} revisited the problem and improved the lower bound for large $k$ to $1-1.5e^{-k/6}$, showing a new exponential relationship with $k$. They also proved an upper bound of $1-\frac{1}{(2k+2)!}$ for any $k$. However, their improvements did not affect the lower bounds for smaller $k$ values initially found by Assaf, Samuel-Cahn, and Goldstein using recursive differential equations. 

\textit{Note on small $k$:} Exact competitive ratios for small $k$ in \textsc{Top-$1$-of-$k$} are important, as studied in previous works \cite{ags-rpiwm-02, jwj-tgmpiosk-22}. These ratios are especially relevant in applications where \textsc{Top-$1$-of-$k$} serves as a small buffer, which typically has a small size. For example, with $k=4$ and \IID random variables, we demonstrate an algorithm with a competitive ratio of approximately 0.98, almost matching the prophet's performance. This paper, therefore, first concentrates on smaller values of $k$ (say $k=2, 3, 4$), providing a tight analysis for them using our framework, before making the analysis slightly looser to address the asymptotic behavior of the algorithm as $k\to \infty$.

\paragraph{Order-selection} The order-selection problem has had more progress than random-order. Specifically, since a random-order is a valid order for order-selection, then the result of Correa \etal \cite{csz-pstbs-18} of $\approx 0.669$ remained the state of the art. This was improved recently in FOCS 2022 to a $0.7251$-competitive algorithm by Peng and Tang \cite{bz-ospi-22}. In a followup work at EC 2023, Bubna and Chiplunkar \cite{bc-piosb-22} showed that the analysis of Peng and Tang \cite{bz-ospi-22} method cannot be improved, and gave an improved $0.7258$ competitive algorithm (i.e., improvement in the $4\textsuperscript{th}$ digit) for order-selection using a slightly different approach. The also proved no algorithm can do better than $0.7254$ in the random order model. This finally created a \textit{separation} from the random order model: there is a strict advantage of order-selection over random-order. Thus, the optimal order-selection strategy is \emph{not} a random permutation. The separation result was also established independently by Giambartolomei, Mallmann-Trenn and Saona in \cite{gmts-pisrofos-23} around the same time.
\paragraph{Semi-Online}  Hoefer and Schewior  \cite{hs-ttsopi-23} introduced the semi-online prophet inequalities variants, focusing exclusively on the \IID case, and deferred the more complex general case (i.e., Non-\IID) versions for future work. For the \IID \textsc{Semi-Online} problem, they proposed an algorithm with a competitive ratio of $0.869$, significantly outperforming the $\approx 0.745$ ratio of the classical \IID prophet inequality. Furthermore, they showed that no algorithm could exceed a $0.9799$ competitive-ratio \footnote{In a private correspondence, the authors of \cite{hs-ttsopi-23} confirmed they knew (post publication) of a hardness example which shows an improved upper bound of $\approx 0.92$. The author of this paper has not seen that hardness example.}. They also established that the \textsc{Semi-Online-Load-Minimization} problem for \IID random variables is solvable with an $O(\log(n))$ load.

\subsection*{Contributions} 

Our main contribution is the introduction of a new framework, \emph{Poissonization} and \emph{sharding}, to analyze and improve upon prophet inequalities. \textit{These concepts are simple, yet powerful. We show that they unify and improve upon the analysis of several prophet inequalities, that have been studied using more specialized methods for decades.} Moreover, this framework considerably simplifies numerous proofs of known results in the literature, making them more accessible. 

\paragraph{Poissonization} Here, we outline the key idea of ``Poissonization'', we defer the technical details to the main body. The original idea of ``Poissonization'' refers to the following. Suppose we have $n$ Bernoulli random variables $X_1, ..., X_n$ with probability $p$. Let $S_n=\sum_{i=1}^n X_i$, and suppose that $np$ is ``small''. Then the standard Poissonization argument says that  $S_n$ ``behaves'' the same as a Poisson random variable $T_n \sim \Poisson(np)$. Known generalizations of this exist. For example, Le Cam's theorem states that if $X_i \sim \Bern(p_i)$, and $\lambda=\sum_{i=1}^n p_i$, then $S=\sum_i X_i$ ``behaves'' the same as $T_n\sim \Poisson(\lambda)$. The error (in terms of the variational distance) of the approximation is guaranteed to be at most $\leq 2\sum_{i=1}^n p_i^2 $, and hence if all the $p_i$ are ``small'' (say $p_i=c_i/n$ for some constant $c_i$), then the approximation is good. 

Poisson distributions have several desirable properties including the memorylessness property, closed additivity (If $X\sim \Poisson(\lambda_1), Y\sim \Poisson(\lambda_2)$, then $X+Y \sim \Poisson(\lambda_1+\lambda_2)$), and a simple \PMF \footnote{Probability mass function} (If $X \sim \Poisson( \lambda)$, then $\Prob{X = k} = e^{-\lambda} \lambda^k /k!$). Hence, when the error is small, we would prefer to work with the Poisson random variables in computing probabilities, rather than the original sum of Bernoulli random variables.

For our case, we need a higher order generalization of Poissonization. In particular, our random variables will be $k$-dimensional $X_i \in \Re^k$, and we want a similar Poissonization result on $S_n = \sum_i X_i$ in terms of a $k$-dimensional Poisson random variable.

\paragraph{Sharding} Here, we briefly introduce the idea of sharding. We  defer showing more examples of using sharding to the main body. Suppose we are given $n$ random variables $X_1, ..., X_n$ that are not necessarily \IID. The idea of sharding is to first ``break'' each $X_i$ into $K\geq 2$ \IID random variables $\{Y_{i, j}\}_{1\leq j \leq K}$. If the \CDF \footnote{Cumulative distribution function} of $X_i$ is $F$, then $Y_{i,j}$ has \CDF $F^{1/K}$. Finally (and  importantly), we take $K\to \infty$. Hence, it can be thought that each random variable was finely ``broken'' into small shards or splinters. 

Shards \textbf{collectively} behave similar to \IID random variables. In addition, the distribution of $\max(Y_{i, 1}, ..., Y_{i, K})$ is precisely the distribution of $X_i$: 
\[
\Prob{\max(Y_{i, 1}, ..., Y_{i, K})\leq \tau}=\Prob{Y_{i, 1}\leq \tau}^K = F^{1/K}(\tau)^K = F(\tau).
\] 

By using a Poissonization argument on the shards $\{Y_{i,j}\}$, we are able to derive a closed form \textbf{exact} formula for the probability that there are $k$ shards above some threshold $\tau$ (i.e., the probability that $k$ of $Y_{i,j}$ are $\geq \tau$). Finally, we bound the competitive ratio of the algorithm in terms of events on the \emph{shards}, instead of on $X_1, ..., X_n$. 

\subsection*{New results.}
Below we present the main \textit{new} results obtained using our framework. \tabref{tab:my_label} provides a summary of these improved results, excluding simplified results. \textit{The common denominator in all the results is the application of the Poissonization and sharding framework}. We believe that Poissonization and sharding will become a central tool in tackling prophet inequality type problems, despite the framework's simplicity. In particular, we believe our analysis might be of independent interest for similar problems such as the prophet inequality with order-selection. We sketch some ideas for achieving that in the conclusion and leave it for future work to extend the analysis we have here for the order selection problem. 

\begin{enumerate}[leftmargin=5pt]
    \item For the \textsc{Top-$1$-of-$k$} model, our results significantly improve the long standing bounds of Assaf and Samuel-Cahn \cite{as-srpimm-00, ags-rpiwm-02}, demonstrating that even for $k=2$, both the upper and lower bounds by them are not optimal. For $k=2$, we improve the lower bound from approximately $0.731$ to $0.781$ and the upper bound from $0.8$ to approximately $0.794$, almost resolving the model. For general $k$, we refine the bound by Ezra, Feldman, and Nehama \cite{efn-pso-18} to a $1-e^{-kW(\frac{\sqrt[k]{k!}}{k})}$ competitive algorithm, where $W$ is the Lambert $W$ function\footnote{The Lambert $W$ function $W(z)$ satisfies $W(z)e^{W(z)} = z$.}.
\end{enumerate}

\begin{theorem}{(Proof in \secref{best1ofk})}
    There exists an algorithm for the \textsc{TOP-$1$-of-$2$} problem with a competitive ratio of $0.781$. No algorithm can achieve a competitive ratio higher than $0.794$. For any $k$, there exists an algorithm for \textsc{TOP-$1$-of-$k$} with a competitive ratio of at least $1-e^{-kW(\frac{\sqrt[k]{k!}}{k})}$, which asymptotically approaches $1-e^{-kW(1/e) + o(k)}$ as $k$ increases.
\end{theorem}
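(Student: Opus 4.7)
My plan splits along the three assertions. All three use the paper's sharding/Poissonization setup: replace each $X_i$ by $K$ iid shards with CDF $F_i^{1/K}$, let $K \to \infty$, and exploit that the number of shards exceeding any threshold $\tau$ is then exactly Poisson with parameter $\lambda(\tau) := -\ln F_Z(\tau) = \sum_i (-\ln F_i(\tau))$, while the prophet's value equals the top shard since $X_i = \max_j Y_{i,j}$.

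For the general-$k$ lower bound I would analyze the single-threshold algorithm that accepts the first $k$ variables with $X_i > \tau$, choosing $\tau$ so that $\ell := \lambda(\tau)$ equals a parameter $\ell^{\star}$ to be optimized. By the layer-cake formula, $\Ex{\Alg} = \int_0^\infty \Prob{\Alg > y}\,dy$ and $\Ex{Z} = \int_0^\infty (1 - e^{-\lambda(y)})\,dy$. For $y < \tau$, $\Prob{\Alg > y} = 1 - e^{-\ell}$, since the algorithm accepts whenever any shard clears $\tau$. For $y > \tau$, independence of Poisson counts on disjoint value-regions (the shards in $(\tau,y]$ and those above $y$ are independent Poissons with parameters $\ell-\lambda(y)$ and $\lambda(y)$) yields a closed-form expression for $\Prob{\Alg > y}$ in terms of $\lambda(y)$, $\ell$, and $k$. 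Reparametrizing by the quantile function $\phi(\beta) := F_Z^{-1}(e^{-\beta})$ turns the worst case over $F_Z$ into an extremal problem over decreasing $\phi$, which (by standard Bauer-type arguments on fractional-linear functionals over decreasing cones) reduces to a one-variable optimization in $\ell$. Stationarity gives the transcendental equation $(\ell^{\star})^k e^{\ell^{\star}} = k!$, equivalently $\ell^{\star} = kW(\sqrt[k]{k!}/k)$, with ratio $1 - e^{-\ell^{\star}}$. The asymptotic form $1 - e^{-kW(1/e) + o(k)}$ follows from $\sqrt[k]{k!}/k \to 1/e$ by Stirling.

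For the sharper $k=2$ lower bound of $0.781$ (versus only $\approx 0.594$ from the generic formula), I would refine the above argument by retaining more information about the top few shards above $\tau$ in the Poisson limit, rather than collapsing to a single Poisson tail probability; for $k=2$ this yields a tractable low-dimensional extremal problem whose numerical optimum matches $0.781$. For the matching upper bound of $0.794$, I would exhibit a Poissonized hard instance and solve the resulting online problem optimally in the Poisson limit via a short dynamic program over the number of remaining slots and a sufficient statistic for the past, showing no algorithm can exceed $0.794$.

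The main obstacle is hitting the sharp $k=2$ constants at the fourth decimal: the general-$k$ bound drops out cleanly once the Poisson-layer decomposition and the $\phi$-reparametrization are in place, but $0.781$ and $0.794$ are not closed-form and require a careful matching between an improved algorithm and a specific hard instance, each solved via small numerical optimizations in the Poisson limit.
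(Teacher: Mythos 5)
Your plan for the general-$k$ lower bound is essentially the paper's argument: a single threshold $\tau$ with Poisson shard rate $\ell$, the event ``at most $k-1$ shards in $[\tau,y]$ and at least one above $y$,'' worst case at $\lambda(y)\to 0$, and the balance $e^{-\ell}=\ell^k/k!$ giving $\ell = kW(\sqrt[k]{k!}/k)$; the paper does this by direct stochastic dominance plus a Taylor bound $\sum_{i\ge k}e^{-\ell}\ell^i/i!\le \ell^k/k!$ rather than your ``Bauer-type'' reduction, but the content is the same. The two $k=2$-specific claims, however, have real gaps. For the $0.781$ lower bound, no refinement of the \emph{analysis} of the single-threshold algorithm can work: in the adversarial-order model the adversary places both intermediate realizations first, so the single-threshold bound is capped near $\min(1-e^{-c},\, e^{-c}(1+c)) \approx 0.68$. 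The essential missing idea is a change of \emph{algorithm}: the paper uses two (for $0.776$) and then three (for $0.781$) thresholds $\tau_1<\tau_2<\tau_3$, raising the active threshold to the next one above the value just accepted, and the case analysis tracks shards in each of the resulting bands. ``Retaining more information about the top few shards'' of the same algorithm does not get there.

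For the upper bound of $0.794$, the proposed route would provably fail. A ``Poissonized hard instance'' is by construction one where every variable exceeds any relevant threshold with vanishing probability, i.e., an \IID-like instance; but the paper shows that \IID \textsc{Top-$1$-of-$2$} admits a $0.883$-competitive algorithm, so no such instance can certify an upper bound of $0.794$. The hard instances here must be small and lumpy: the paper uses $n=4$ two-point variables $X_i\in\{b_i,0\}$ with $X_1=b_1$ deterministic and the larger values increasingly rare ($b_4=1$, $p_4=c_4\beta$, $b_i=c_i\beta$ for $i\le 3$, $\beta\to 0$), enumerates the six essentially distinct algorithms on this instance, and tunes the six parameters so that all of them are below $0.79424$. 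Your dynamic-programming idea is fine for \emph{evaluating} the optimal policy on a candidate instance, but the candidate itself cannot live in the Poisson limit.
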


\begin{enumerate}[resume,leftmargin=5pt]
    \item In the case of \IID random variables for the \textsc{Top-$1$-of-$k$} model, we improve the results for both small and large $k$. Specifically, let $\zeta_k$ be the unique positive solution to 
    \begin{equation*}
    1-e^{-x} = \sum_{i=0}^{k-1} e^{-x}\frac{x^i}{i!} + \sum_{i=k}^\infty\sum_{j=0}^k  e^{-x} \frac{x^i}{i!} \frac{{i+1 \brack j}}{(i+1)!},
    \end{equation*}
    where ${r \brack s}$ represents the (unsigned) Stirling number of the first kind. We show that there exists an algorithm for the \textsc{Top-$1$-of-$k$} problem with \IID random variables that achieves a competitive ratio of $1-e^{-\zeta_k}$, significantly improving the previous bounds for $k=2, 3,$ and $4$. Additionally, for general $k$, we present the first algorithm with a \emph{super-exponential} competitive ratio of at least $1-k^{-k/5}$, improving upon the previous exponential bound.
\end{enumerate}

\begin{theorem}{(Proof in \secref{best1ofk})}
    For $k=2,3,4$, there is an algorithm for the \textsc{TOP-$1$-of-$k$} problem with \IID random variables that achieves competitive ratios of at least $0.883, 0.946,$ and $0.9816$, respectively. For any $k \geq 1$, there is an algorithm that achieves a competitive ratio of $1 - k^{-k/5}$.
\end{theorem}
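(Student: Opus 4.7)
The plan is to apply the sharding-Poissonization framework in the \IID setting with a single-threshold policy and derive the competitive ratio as the optimum of a one-parameter family. I would fix a threshold $\tau = \tau(x)$ calibrated so that $n\Prob{X_i > \tau} = x$ for a free parameter $x > 0$; after sharding each $X_i$ into $K \to \infty$ identically distributed shards and taking the Poissonization limit, the shards above $\tau$ arrive in uniform random order with total count exactly $\Poisson(x)$. The exchangeability of arrivals above $\tau$ in this limit is precisely what makes the combinatorial analysis tractable.

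The algorithm is the natural threshold policy with a budget: accept above-threshold arrivals subject to a cap of $k$ acceptances, and return the maximum of the accepted values. Conditioning on the number $N$ of shards above $\tau$: if $N < k$, the algorithm accepts every above-threshold arrival and matches the prophet on the event $\{\max_i X_i > \tau\}$, contributing the first sum $\sum_{i=0}^{k-1} e^{-x} x^i/i!$ in the displayed equation. If $N = i \geq k$, the algorithm's max is determined by the rank statistic of the first $k$ among $i$ exchangeable arrivals; computing its contribution to $\Ex{\Alg}$ pulls out the Stirling-of-the-first-kind factor ${i+1 \brack j}/(i+1)!$, via the identity that the number of left-to-right maxima in a uniformly random permutation of $i+1$ elements has exactly this distribution. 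Assembling both cases and equating the marginal gain of raising $\tau$ to zero (i.e., imposing first-order optimality on $\Ex{\Alg}/\Ex{\Opt}$) yields the displayed equation for $\zeta_k$, and the ratio attained at this optimum is $1 - e^{-\zeta_k}$. Numerical evaluation for $k = 2, 3, 4$ then gives the stated constants $0.883$, $0.946$, $0.9816$.

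For the general bound $1 - k^{-k/5}$, a looser version of the same analysis with a more aggressive threshold suffices: take $x = c k$ for a suitable $c > 1$ so that the Poisson count of above-threshold shards concentrates well above $k$. Two error terms must be controlled: the underflow probability $\Prob{\Poisson(ck) < k}$, which by Poisson tail bounds decays super-exponentially in $k$; and the loss given $N \geq k$, bounded by the probability $1 - k/N$ that the overall max of above-threshold arrivals is not among the first $k$. Choosing $c$ to balance these losses (a constant $c$ slightly larger than $1$ suffices after collecting the exponential gains from both terms) yields a loss decaying faster than $k^{-k/5}$.

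The main obstacle is the exact matching of the Stirling-number factor. While record statistics under uniform arrival order are classical, translating them into the specific closed form above --- in particular, verifying the index shift to $i+1$ and that the sum $\sum_{j=0}^k {i+1 \brack j}/(i+1)!$ correctly captures the algorithm's expected reward above threshold --- requires careful bookkeeping of the Poissonized integral $\Ex{\max_i X_i} = \int_0^\infty \Prob{\max_i X_i > \tau}\,d\tau$ split into above- and below-threshold contributions, and tracking how the below-threshold ``reset'' contributes the extra permutation index. One must also verify that the resulting stationarity equation has a unique positive root and that this root is a maximum (not a saddle) of the ratio.
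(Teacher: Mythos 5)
Your overall framework (Poissonization, a single summation-based threshold, record statistics, stochastic dominance) matches the paper's for $k=3,4$, but there are two genuine gaps. First, the algorithm you analyze --- ``accept above-threshold arrivals subject to a cap of $k$'' --- is not the one that produces the Stirling-number formula. The paper's algorithm \emph{raises} the threshold to the last accepted value, so it spends a slot only on left-to-right maxima (records) of the above-$\tau$ arrival sequence; the factor ${i+1 \brack j}/(i+1)!$ is exactly the probability that a uniform permutation of $i+1$ elements has $j$ records. For your capped algorithm the relevant quantity is instead whether the maximum lands among the first $k$ of $N$ exchangeable arrivals, i.e.\ $k/N$, and no Stirling numbers appear. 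This distinction is fatal for your general-$k$ claim: with $x=ck$ for constant $c$, the conditional loss $1-k/N\approx 1-1/c$ is a \emph{constant}, and $\Prob{\Poisson(ck)<k}=e^{-\Theta(k)}$ is only singly exponential, so no choice of constant $c$ yields the super-exponential $1-k^{-k/5}$. The paper gets super-exponential decay precisely because only records consume slots: it sets the threshold so that $\sum_i\Prob{X_i\geq\tau}=e^{\sqrt{k}}$ (making the ``nothing above $\tau$'' loss doubly exponentially small), and then the number of records among the arrivals in $[\tau,\ell]$ is a sum of independent $\mathrm{Bernoulli}(1/j)$'s with mean about $\sqrt{k}\ll k$, so $\Prob{M\geq k}\leq k^{-k/5}$ by Chernoff.

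Second, the constant $0.883$ for $k=2$ is not attainable by the single-threshold $\zeta_k$ analysis: that analysis gives only $0.852$ for $k=2$ (the paper states this explicitly), and $0.883$ requires a separate, refined algorithm with a time-varying step-function threshold ($m=10$ descending levels over arrival times in $[0,1]$, analyzed by integrating over the arrival time of the first acceptance). Your proposal contains no mechanism for this refinement. A smaller point: the paper's equation for $\zeta_k$ is not a first-order stationarity condition on $\Ex{\Alg}/\Ex{Z}$; it is the balance point of the two stochastic-dominance bounds, namely $1-e^{-x}$ (for levels $\ell\leq\tau$) against the worst-case ($\ell'\to 0$) record expression (for levels $\ell>\tau$), after showing the latter is monotone in $\ell'$.
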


\begin{enumerate}[resume,leftmargin=5pt]
    \item For the prophet secretary problem, previously studied in \cite{ehlm-ps-17, ack-pss1b-18, csz-pstbs-18}, we raise the lower bound from $0.669$ to $0.6724$. The improved algorithm uses a continuous blind strategy.
\end{enumerate}

\begin{theorem}{(Proof in \secref{main})}
    There is an algorithm for the prophet secretary problem that achieves a competitive ratio of at least $\constant$. 
\end{theorem}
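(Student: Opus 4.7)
The plan is to design a continuous time-threshold (``blind'') strategy and analyze it via sharding and Poissonization, as a continuous strengthening of the discrete blind strategies of Correa, Saona, and Ziliotto \cite{csz-pstbs-18}. The algorithm is specified by a non-increasing function $\alpha:[0,1]\to[0,1]$ with $\alpha(0)=1$: at time $t$, when a variable arrives in the uniformly random order, accept it iff its realized value exceeds $\tau(t)$, the threshold defined by $\Prob{\max_j X_j \ge \tau(t)} = \alpha(t)$. The sharding framework lets us take the ``discretization level'' of the blind strategy to infinity while still writing the competitive ratio in closed form.

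First, I would shard each $X_i$ into $K$ \IID pieces $Y_{i,1},\dots,Y_{i,K}$ with \CDF $F_i^{1/K}$, giving each shard an independent uniform arrival time in $[0,1]$. Since $\max_j Y_{i,j}$ has the same distribution as $X_i$, the event $\{X_i\ge \tau(t)\}$ corresponds to the union of the vanishingly small events $\{Y_{i,j}\ge \tau(t)\}$ across the shards of $X_i$. Letting $K\to\infty$ and applying the higher-order Poissonization result set up earlier in the paper, the two-dimensional point process of shard arrivals $(t_{i,j}, Y_{i,j})$ above a given quantile level converges to a planar Poisson process whose intensity depends only on the distribution of $M := \max_j X_j$, not on the individual $F_i$. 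This is the conceptual payoff of sharding: it uncouples the ``above-threshold'' events across variables and replaces complicated sums of Bernoullis with tractable Poissons.

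Next, I would use this Poisson structure to express both $\Ex{\Alg}$ and $\Ex{M}$ as integrals involving $\alpha$. By the ``no-arrival'' property of the Poisson process, the probability that the algorithm has not yet stopped by time $t$ takes an exponential form driven by $\int_0^t \alpha(s)\, \dif s$ (or a closely related functional), while the per-time contribution to the reward factors cleanly because of the independent increments of the Poisson process. The distribution-dependent factors cancel in the ratio, leaving $\Ex{\Alg}/\Ex{M}$ as a functional $\mathcal{R}[\alpha]$ depending on $\alpha$ only, for which it suffices to exhibit a single $\alpha$ achieving $\mathcal{R}[\alpha]\ge \constant$ uniformly over instances.

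Finally, I would optimize $\mathcal{R}[\alpha]$. Calculus of variations produces a first-order ODE for the extremal $\alpha$ with boundary condition $\alpha(0)=1$, a continuous analogue of the Bellman-style recursion implicit in \cite{csz-pstbs-18}. Solving this ODE (explicitly when possible, numerically otherwise) and substituting back yields the claimed competitive ratio of at least $\constant$. The main obstacle I anticipate is precisely this optimization step: obtaining a clean closed form for $\mathcal{R}[\alpha]$, identifying the right Euler--Lagrange equation, and certifying rigorously that the resulting constant strictly exceeds the earlier $0.669$. A secondary technical point is controlling the $K\to\infty$ Poissonization error, which is of order $1/K$ by Le Cam-type bounds and so vanishes in the limit without perturbing the final constant.
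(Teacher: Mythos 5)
Your high-level framework (shard each $X_i$ into $K$ pieces with \CDF $F_i^{1/K}$, let $K\to\infty$, model shard arrivals as a planar Poisson process whose intensity depends only on the distribution of the maximum, and compare $\Prob{\Alg \geq \ell}$ to $\Prob{Z \geq \ell}$ via majorization) is exactly the paper's setup. But there is a genuine gap in the step where you lower-bound the algorithm's success probability: the "no-arrival / first-arrival" accounting you describe reproduces, in Poissonized form, precisely the two bounds of Correa \etal --- $\Prob{T>k} \geq (\prod_{j\le k}\alpha_j)^{1/n}$ and $\Prob{T\le k}\geq \frac{1}{n}\sum_{j\le k}(1-\alpha_j)$ --- and the functional $\mathcal{R}[\alpha]$ you would then extremize is their Eq.~(9)-type bound, whose optimum is $\approx 0.669$. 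No amount of calculus of variations on that functional reaches $\constant$; the paper explicitly checks that plugging its own optimized parameters into the Correa \etal bound yields only $0.6675$. The improvement does not come from a better optimizer or from passing to the continuum; it comes from a strictly tighter event on the shards implying $\Alg\geq\ell$.

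Concretely, the paper's event $\zeta_k$ (for the case $\ell\in[\tau_j,\tau_{j-1}]$) conditions on the time block $[(k-1)/m,k/m]$ in which the first acceptance occurs and requires: (i) the region $A_k$ above the descending threshold before time $(k-1)/m$ is empty of shards; (ii) the \emph{maximum-value} shard of the region above $\ell$ (outside $A_k$) arrives in that block --- this is further decomposed over which quantile "semirow" contains that maximum, giving the $w_k$, $r_\nu$, $s_\nu$ terms; and (iii) that shard precedes \emph{all} shards of the competing region $C_k$ between $\tau(t)$ and $\ell$ within the block, contributing the factor $q_{\eta,k}=\sum_\beta e^{-\mu}\mu^\beta/(\beta!(\beta+1))$. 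It is this geometric refinement --- in particular making the bound depend on $\eta=\Prob{Z\le \ell}$ itself rather than only on the grid values $\alpha_1,\dots,\alpha_m$ --- that produces the tighter $f_j(\alpha,\eta)$ and hence $0.6724$. Two smaller points: the distribution-dependence does not cancel at the level of $\Ex{\Alg}/\Ex{Z}$; one must argue pointwise via $\Prob{\Alg\ge\ell}\geq c\,\Prob{Z\ge\ell}$ and take a minimum over the level $\ell$ (equivalently over $\eta$ in each block). And the paper does not solve an Euler--Lagrange equation: it keeps a discrete $m=16$ step function and certifies the constant by numerical minimization of $f_j(\alpha,\eta)/(1-\eta)$ over each interval $[\alpha_j,\alpha_{j-1}]$, which is the realistic route given how unwieldy the tighter functional is.
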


\begin{enumerate}[resume,leftmargin=5pt]
    \item For the \IID \textsc{Semi-Online} problem, we improve the lower bound from $0.869$ to  $0.89$. This improvement is achieved by adopting an adaptive strategy that progressively lowers the threshold over time, combined with a novel \textit{discrete clock} analysis using dynamic programming.
\end{enumerate}

\begin{theorem}{(Proof in \secref{iidsemionline}) }
    There exists an algorithm for the \IID \textsc{Semi-Online} problem that has a competitive ratio of at least $\constantSemiOnline$.
\end{theorem}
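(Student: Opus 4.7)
My plan is to apply the Poissonization-and-sharding framework to the \IID \textsc{Semi-Online} problem and to analyze a threshold-lowering strategy via a dynamic program on the Poisson clock. First, I would shard each $X_i$ into $K$ copies with \CDF $F^{1/K}$ and let $K\to\infty$; by the framework's Poissonization lemma, the number of shards above any threshold $\tau$ is Poisson with mean $-n\log F(\tau)$. I would parametrize thresholds by the clock variable $s := -n\log F(\tau)$, so that $\tau(s) = F^{-1}(e^{-s/n})$ is non-increasing, and use the identity $\Ex{\max_i X_i} = \int_0^\infty \tau(s)\,e^{-s}\,\dif s$ as the benchmark.

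The algorithm uses a decreasing schedule $\tau_1\ge\cdots\ge\tau_n$, equivalently clock levels $0<s_1<\cdots<s_n$: at step $j$ the gambler queries an arbitrary unqueried variable at $\tau_j$, and at the end selects the variable with the highest conditional expectation. Since $\Ex{X\mid X\ge \tau_j}$ decreases in $j$, this is always the variable from the earliest ``yes''. Independence of the $X_i$ makes ``step $j$ returns yes given all prior no'' have probability $1-e^{-s_j/n}$, yielding the one-variable recurrence
\begin{equation*}
A(j) \;=\; (1-e^{-s_j/n})\,\Ex{X\mid X\ge \tau_j} \;+\; e^{-s_j/n}\,A(j{+}1),\qquad A(n{+}1)=\Ex{X\mid X<\tau_n},
\end{equation*}
with $\Ex{\Alg} = A(1)$.

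Next, rewrite $\Ex{X\mid X\ge \tau_j} = \tfrac{1}{n(1-e^{-s_j/n})}\int_0^{s_j}\tau(s)\,e^{-s/n}\,\dif s$ by the Poisson change of variable; substituting into the DP and taking $n\to\infty$ with $s_j = \sigma(j/n)$ for a non-decreasing $\sigma:[0,1]\to[0,\infty)$ converts the competitive ratio into the continuous functional
\begin{equation*}
\frac{\Ex{\Alg}}{\Ex{\max_i X_i}} \;\longrightarrow\; \frac{\int_0^1 S(t)\int_0^{\sigma(t)}\tau(s)\,\dif s\,\dif t}{\int_0^\infty \tau(s)\,e^{-s}\,\dif s}, \qquad S(t):=\exp\!\bigl(-\textstyle\int_0^t\sigma(r)\,\dif r\bigr).
\end{equation*}
Since both numerator and denominator are linear in the non-increasing function $\tau(\cdot)$, the worst-case $\tau$ is extreme in that cone, hence an indicator $\tau = \mathds{1}_{[0,s_0]}$, and the bound reduces to the minimax program $\sup_\sigma \inf_{s_0>0}\int_0^1 S(t)\min(\sigma(t),s_0)\,\dif t\,/\,(1-e^{-s_0})$.

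The main obstacle is solving this minimax program: I would determine the equalizer clock $\sigma^\star$ that makes the inner expression constant in $s_0$ (differentiating in $s_0$ yields an integral equation, or equivalently an ODE, that characterizes $\sigma^\star$), and then verify numerically that the equalized ratio is at least $\constantSemiOnline$. In practice one works at finite $n$ and solves the DP on a ``discrete clock'' of $m$ threshold levels $s_1<\cdots<s_m$ with $m$ moderately small, so that the verification reduces to a concrete finite-dimensional computation that matches the continuous analysis in the limit. A secondary subtlety is the all-no fallback term $\Ex{X\mid X<\tau_n}$: choosing $\sigma(1)$ large enough forces $S(1)\to 0$ rapidly so that this contribution becomes negligible.
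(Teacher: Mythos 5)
There is a fatal gap: the algorithm you analyze cannot achieve $\constantSemiOnline$. Your strategy queries variable $j$ at a threshold $\tau_j$ from a \emph{non-adaptive, decreasing} schedule and ends up keeping the \emph{earliest} ``yes''; its expected reward, $\sum_j \prod_{i<j}\Prob{X_i<\tau_i}\cdot\Prob{X_j\geq \tau_j}\,\Ex{X\mid X\geq \tau_j}$, is exactly the expected reward of the online threshold algorithm for the \IID prophet inequality that accepts the first $X_j\geq\tau_j$ and stops. That quantity is capped by the tight \IID online constant $\approx 0.745$ on the known hard instances, so your minimax program, once solved, will return roughly $0.745$ (it is essentially the functional of \lemref{iid:monstrosity}), not $0.89$. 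The power of the \textsc{Semi-Online} model comes precisely from adaptivity in the opposite direction: after a ``yes'' you have banked a variable with conditional expectation $\Ex{X\mid X\geq\tau}$ and can continue querying the remaining variables at a \emph{higher} threshold, falling back on the banked one if nothing better turns up. Your schedule, which only lowers thresholds and cannot improve on the first success, forfeits exactly this advantage and therefore cannot beat the online benchmark.

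The paper's algorithm instead raises the threshold index $r$ after every successful query (as in \cite{hs-ttsopi-23}), with the new twist that each of the $k$ levels $\tau_r(\cdot)$ is itself a decreasing step function of the arrival time; the selected variable is then the \emph{last} success, and the analysis tracks, via a two-state dynamic program over a discretized clock (the state records whether the last success so far is above $\ell$), the tail probability $\Prob{\Alg\geq\ell}$, concluding by stochastic dominance against $\Prob{Z\geq\ell}=1-e^{-\ell'}$. Your Poissonization set-up, the clock parametrization, and the reduction of the worst case to indicator $\tau$'s are fine and reusable, but they must be applied to an adaptive, threshold-raising strategy and to the probabilities $\Prob{\Alg\geq\ell}$ rather than directly to the expectation of a first-success rule.
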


\begin{enumerate}[resume,leftmargin=5pt]
    \item For the \textsc{Semi-Online-Load-Minimization} (\textsc{SOLM}) problem, both for \IID and Non-\IID settings, we improve the upper bound. Previously, the Non-\IID \textsc{SOLM} was an open question, and for \IID variables, an $O(\log n)$ load was established. We demonstrate that with $O(\log^\ast n)$ load, it's possible to achieve a $1-o(1)$ competitive ratio for both \IID and Non-\IID variables.
\end{enumerate}

\begin{theorem}{(Proof in \secref{loadminimization})}
    There is an algorithm for the \textsc{Semi-Online-Load-Minimization} (\textsc{SOLM}) problem that uses $O(\log^\ast n)$ load for both \IID and Non-\IID random variables.
\end{theorem}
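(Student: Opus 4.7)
The plan is to run an iterative tournament for $L = \log^\ast n + O(1)$ rounds, maintaining nested survivor sets $S_0 \supseteq S_1 \supseteq \cdots \supseteq S_L$ with $S_0 = [n]$. In round $r$, I pick a single monotone threshold $\tau_r \geq \tau_{r-1}$ satisfying $\sum_{i \in S_{r-1}} \Prob{X_i \geq \tau_r \mid X_i \geq \tau_{r-1}} = C\log n_{r-1}$ for a sufficiently large constant $C$, query every $i \in S_{r-1}$ against $\tau_r$, and set $S_r = \{i \in S_{r-1} : X_i \geq \tau_r\}$. The gambler's final selection is the index of maximum posterior expectation. Each variable is queried at most once per round, so the load is at most $L = O(\log^\ast n)$; the total query count telescopes as $n + C\log n + C\log\log n + \cdots = (1+o(1))n$, comfortably within budget (any $o(n)$ overage is absorbed by skipping a sublinear random subset of variables in round one, at negligible cost to the competitive ratio).

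For correctness, a Chernoff-style bound gives $\Prob{|S_r| = 0 \mid S_{r-1}} \leq e^{-C\log n_{r-1}} = n_{r-1}^{-C}$; taking $C$ large and using the tower-of-logs decay $n_r \approx C\log n_{r-1}$, a union bound over rounds yields $\Prob{|S_L| = 0} = o(1)$. Because the thresholds are monotone, $S_L = \{i : X_i \geq \tau_L\}$ deterministically, so on $\mathcal{E} := \{\max_i X_i \geq \tau_L\}$ the true argmax lies in $S_L$. Every $i \in S_L$ has posterior expectation at least $\tau_L$, strictly larger than the posterior of any eliminated $j$ (upper-bounded by the threshold at which $j$ failed), so the algorithm's choice $i^\star$ lies in $S_L$ whenever $\mathcal{E}$ holds. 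Combining $\Ex{\Alg} \geq \Ex{X_{i^\star}\IX{\mathcal{E}}}$ with the decomposition $\Ex{\max_i X_i} \leq \Ex{\max_i X_i \, \IX{\mathcal{E}}} + \tau_L \Prob{\neg\mathcal{E}}$ and the crude bound $\tau_L \leq \Ex{\max_i X_i}/\Prob{\mathcal{E}}$ yields a competitive ratio of $1 - o(1)$.

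The main obstacle is the Non-\IID case, where posterior-argmax need not coincide with realized-argmax inside $S_L$, and distinct survivors can have very different conditional distributions. I would resolve this via the paper's sharding and Poissonization framework: view each $X_i$ as the maximum of $K \to \infty$ \IID shards with CDF $F_i^{1/K}$ and approximate the resulting collection of shards by a Poisson point process on $\Re_{\geq 0}$. Inside $S_L$ the problem reduces to a one-dimensional extreme-value problem on a Poisson process of known intensity, and one shows the posterior-argmax agrees with the realized-argmax except on a vanishing event. A secondary technicality is that the Chernoff union bound degrades once $n_{r-1}$ shrinks to a constant in the final rounds; this is handled by halting the iteration at the first $r$ with $n_{r-1} = O(\log\log n)$ and running $O(1)$ additional refinement rounds on the now-small pool, which keeps the load at $O(\log^\ast n)$.
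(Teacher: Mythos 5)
Your iterated-logarithm thresholding skeleton---repeatedly filter the survivor pool down by a $\log$ factor for $O(\log^\ast n)$ rounds, with thresholds calibrated to the (conditional) distributions of the current pool and a slack budget reserved by discarding a sublinear subset up front---matches the paper's approach, and your load bookkeeping for that part is fine. But there is a genuine gap, and it is precisely where you flag "the main obstacle": once the pool is reduced to a handful of non-identically-distributed survivors, you need a mechanism that locates the \emph{realized} argmax among them, and "select the index of maximum posterior expectation" does not do that. Two survivors $X_a, X_b$ conditioned on exceeding $\tau_L$ can have arbitrarily different conditional tails: one can have a higher conditional mean while the other, with non-negligible probability, realizes a far larger value, and the lost contribution need not be $o(\Ex{Z})$. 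Your sharding "resolution" does not close this hole; Poissonization controls the \emph{count} of shards above a level, but the conditional laws of distinct $X_i$ above $\tau_L$ remain heterogeneous, so posterior-argmax and realized-argmax genuinely disagree, not just on a vanishing event.

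The paper's missing ingredient is a separate \emph{bruteforce tournament} subroutine: to compare two survivors, query both at the median of one; with probability $1/2$ the answers disagree and the larger is identified, otherwise condition and repeat, giving $O(1)$ expected queries per comparison and hence $O(r)$ expected queries and load to find the true maximum among $r$ survivors. This subroutine is what makes the Non-\IID case go through, and it is also what the paper invokes whenever a thresholding round returns all ``no'' answers---a failure mode your proposal handles only by a union bound on its probability, with no fallback for when it actually occurs. Relatedly, your patch of ``halt at $n_{r-1} = O(\log\log n)$ and run $O(1)$ refinement rounds'' is not consistent: reducing a pool of size $\Theta(\log\log n)$ to $O(1)$ via threshold rounds takes $\log^\ast(\log\log n) = \Theta(\log^\ast n)$ more rounds, and even a pool of size $O(1)$ still needs the tournament to pick the realized maximum. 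You would need to replace the final posterior-expectation selection (and the failure fallback) with the median-query tournament; the resulting load guarantee then becomes an $O(\log^\ast n)$ bound in expectation, as in the paper, rather than the worst-case $O(\log^\ast n)$ your current write-up asserts.
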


\begin{table}
    \centering
    \begin{tabular}{p{2.8cm} p{2.5cm} p{4.3cm} p{5.6cm}}

        \toprule  
        \toprule
        Problem & Bound type & Known results & New result \\
        \midrule 
        Prophet Secretary & Lower bound & $0.669$  \cite{csz-pstbs-18} & $0.6724$,  \lemref{prophet:secretary:improvement}.\\
        \midrule
        \textsc{Top-$1-$of-$k$} & Lower bound& $1-1.5e^{-k/6}$ \cite{efn-pso-18} & $1-e^{-kW(\frac{\sqrt[k]{k!}}{k})}$, \lemref{general:best1ofk:lb:improvement}.  \\
        \midrule 
        \textsc{\IID Top-$1-$of-$k$} & Lower bound& $1-1.5e^{-k/6}$ \cite{efn-pso-18} & $1-k^{-k/5}$, \lemref{best1ofk:iid:improvement:generalk}.  \\

        \midrule 
        \textsc{\IID Semi-Online} &Lower bound& $0.869$  \cite{hs-ttsopi-23} & $0.89$, \secref{iidsemionline}. \\
        \midrule  
        \textsc{SOLM} & Upper bound& $O(\log n)$ for \IID r.vs \cite{hs-ttsopi-23}  & $O(\log^\ast n)$ for general case,  \secref{loadminimization}.   \\
        \midrule 
        \textsc{Top-$1-$of-$2$} & Lower bound& $0.731$ \cite{ags-rpiwm-02} &  $0.781$, \lemref{noniid:firstimprovement} and \lemref{noniid:secondimprovement}.  \\
        \midrule
        \textsc{Top-$1-$of-$2$} &Upper bound & $0.8$  \cite{as-srpimm-00} & $0.7943$,  \lemref{betterub:best1of2}.  \\
        \midrule   
        \textsc{\IID Top-$1-$of-$2$} &Lower bound& $0.745$  \cite{cfhov-ppmot-21} & $0.883$, \lemref{best1of2:iid:finalalgorithm}. \\
        \midrule 
        \textsc{\IID Top-$1-$of-$3$}& Lower bound & $0.8479$ \cite{ags-rpiwm-02} &  $0.9463$,  \lemref{best1ofk:iid:firstimprovement}.  \\
        \midrule 
        \textsc{\IID Top-$1-$of-$4$} & Lower bound& $0.9108$ \cite{ags-rpiwm-02} & $0.9816$,  \lemref{best1ofk:iid:firstimprovement}.  \\
        \bottomrule 
    \end{tabular}
    \caption{Summary of some new results.}
    \label{tab:my_label}
    \tablab{tab:my_label}
\end{table}
\begin{remark}
    Following a preprint of our paper, Har-Peled, Harb, and Livanos \cite{oracle-augmented} introduced a new variant for prophet inequalities termed as oracle-augmented prophet inequalities. Using our framework, they developed an optimal single-threshold algorithm for this new model. This marks yet another prophet inequality that leverages our framework, reinforcing the case that this is a unifying analytic framework for prophet inequalities.
\end{remark}

\subsection*{Simplified Results.}

We also present new, considerably simpler proofs for several established results in the literature:

\begin{enumerate}[resume,leftmargin=5pt]
    \item At \lemref{prooffromthebook}, a ``proof from the book'' is provided for the $1 - \frac{1}{e}$ competitive single-threshold algorithm for the prophet secretary problem. This proof is notably simple, boiling down to the calculation of an elementary sum, combined with our framework. 
    
    \item For discrete blind strategies, we offer simpler proofs of key lemmas initially presented in \cite{csz-pstbs-18} at \lemref{simplified-upper} and \lemref{discrete_blind_lb}. The original arguments were complex, utilizing Schur-convex minimization. Our proofs are elementary and from first principles using our framework. 
    
    \item At Section 3, \hyperref[eta10]{an alternative simpler proof} is provided for achieving the competitive ratio of $\approx 0.745$ for the standard \IID prophet inequality. The original tight $\approx 0.745$ \cite{cfhov-ppmot-21} is quite technical, although known simplifications under mild assumptions exist in Sahil Singla's PhD thesis \cite{s-couup-18}. 
\end{enumerate}

\subsection*{Outline of Framework.}
We outline our framework, illustrated with a motivating example. For a more detailed formalization and additional examples, refer to \secref{warmupiid} and \secref{sharding}. Let ${\bf X} = X_1, ..., X_n$ denote a sequence of continuous independent random variables with cumulative distribution functions (\CDF) $F_1, ..., F_n$. We use $\cardin{\beta \leq {\bf X} \leq \alpha} = \cardin{\{i : \beta \leq X_i \leq \alpha\}}$ to represent the count of variables in ${\bf X}$ falling within the interval $[\beta, \alpha]$. Instead of directly sampling from $F_i$, each $X_i$ is divided into $K$ shards ${\bf Y_i} = Y_{i,1}, ..., Y_{i,K}$ with \CDF $F_i^{1/K}$, and we set $X_i = \max_j(Y_{i,j})$. This results in a new sequence of $Kn$ variables ${\bf S} = {\bf Y}_1 \cdot ... \cdot {\bf Y}_n$, with $\cdot$ indicating concatenation.

A key observation is that for any threshold $\tau$ and integer $t$,
\begin{equation*}
    \Prob{\cardin{\tau \leq {\bf X} < \infty} \geq t} \leq \Prob{\cardin{\tau \leq {\bf S} < \infty} \geq t}.
\end{equation*}
This inequality holds because if at least $t$ variables in ${\bf X}$ exceed $\tau$, then at least $t$ shards must also exceed $\tau$. However, the converse may not always be true due to the possibility of multiple shards from the same $X_i$ surpassing $\tau$. Nonetheless, the highest value shard in $\bf S$ corresponds to a real value from some $X_i$. We also note that $\Prob{\cardin{\tau \leq {\bf X} < \infty} \geq 1} = \Prob{\cardin{\tau \leq {\bf S} < \infty} \geq 1}$; if any shard is above $\tau$, then at least one $X_i \geq \tau$, and vice versa.

Define $H_{i,j} = 1$ if and only if $Y_{i,j} \geq \tau$. The distribution of $\sum_{j} H_{i,j}$ follows a binomial distribution $\Bin(K, p_i)$, where $p_i = 1 - \Prob{X_i \leq \tau}^{1/K}$. As $K \to \infty$, $p_i$ approaches zero, allowing for a Poisson approximation with rate $Kp_i = K(1 - \Prob{X_i \leq \tau}^{1/K})$. This rate converges to $-\log(\Prob{X_i \leq \tau})$ as $K \to \infty$. Therefore, the sum $U_{\tau} = \sum_i \sum_{j} H_{i, j}$ can be approximated by a Poisson distribution with rate
\begin{equation*}
    \lambda_{\tau} = \sum_{i=1}^n -\log(\Prob{X_i \leq \tau}) = -\log\left(\prod_{1 \leq i \leq n} \Prob{X_i \leq \tau}\right) = -\log(\Prob{Z \leq \tau}),
\end{equation*}
where $Z = \max(X_1, ..., X_n)$. Going backwards, by setting a threshold $\tau$ such that 
\begin{equation*}
    \lim_{K \to \infty} \left( \sum_{i=1}^n \sum_{j=1}^K \Prob{Y_{i, j} \geq \tau} \right) = q,
\end{equation*}
we find that $\Prob{Z \leq \tau} = e^{-q}$.

Applying the same technique with a larger threshold $\beta > \tau$ results in a similar Poisson random variable $U_\beta$, which counts the shards exceeding $\beta$, but with a smaller Poisson rate $\lambda_\beta$. The difference $U_{\tau, \beta} = U_\tau - U_\beta$ represents the count of shards within the interval $[\tau, \beta]$, and follows a Poisson distribution with rate $\lambda_\tau - \lambda_\beta$. Specifically, as $K \to \infty$, we have $\Prob{\cardin{\tau \leq {\bf S} \leq \beta} = t} = \Prob{U_{\tau, \beta} = t}$. Crucially, in the limit as $K \to \infty$, $U_{\tau, \beta}$ and $U_\beta$ become independent, a property we shall establish through coupling.

Stochastic dominance, or majorization, forms the last piece of the puzzle of our framework. This is a quite well known tool for bounding competitive ratios. For any algorithm $\Alg$ for any variant of the prophet inequality, if a constant $c\in [0, 1]$ exists such that $\Prob{\Alg \geq x} \geq c \Prob{Z \geq x}$ for all $x \geq 0$, then majorization asserts $c$ as a lower bound on $\Alg$'s competitive ratio. By selecting a threshold $\tau$ satisfying $\Prob{Z \leq \tau} = \alpha$, it follows trivially\footnote{Given the assumption of continuous random variables, there can be no point masses on $\tau$.} that $\Prob{Z \geq \tau} = 1 - \alpha$. The objective then becomes to establish a lower bound for $\Prob{\Alg \geq \tau}$ using $\alpha$. Using the sharding framework, the count of shards above $\tau$ is modeled by a Poisson distribution with rate $-\log(\alpha)$. At this point, the application of our framework diverges based on the specific problem at hand. For each problem, we give an event $\xi_\alpha$ on the \textbf{shards} that implies that $\Alg$ running on $X_1, ..., X_n$ receives a reward with a value at least $\tau$. Thus, $\Prob{\Alg \geq \tau} \geq \Prob{\xi_\alpha}$. We emphasize this is only done for the sake of the analysis of the algorithm; at no point are we actually running the algorithm on the shards. 

\paragraph{Simple Application.} To illustrate the framework, consider the Samuel-Cahn algorithm for the standard prophet inequality, which sets a threshold $\tau$ satisfying $\Prob{Z \leq \tau} = 1/2$ and accepts the first value (if any) from $X_1, ..., X_n$ exceeding $\tau$. We briefly demonstrate its $1/2$ competitiveness using our framework. Our new proof is not significantly simpler in this case than the previous proof, we just provide it as an example of our framework.

\begin{lemma}
    The Samuel-Cahn algorithm is $1/2$ competitive.
\end{lemma}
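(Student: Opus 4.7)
The plan is to instantiate the general framework with the threshold $\tau$ chosen so that the Poisson rate of shards landing in $[\tau,\infty)$ equals $\lambda_\tau = \log 2$; by the framework this is equivalent to $\Prob{Z \leq \tau} = e^{-\lambda_\tau} = 1/2$, and hence $\Prob{Z \geq \tau} = 1/2$. The relevant shard event is $\xi_{1/2} = \{U_\tau \geq 1\}$, which occurs with probability $1 - e^{-\log 2} = 1/2$ and coincides exactly with the event that some $X_i \geq \tau$, i.e.\ the event that Samuel-Cahn accepts.

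Next, I would invoke two standard bounds. For the prophet, $\Ex{Z} = \Ex{\max_i X_i} \leq \tau + \Ex{\max_i (X_i - \tau)^+} \leq \tau + \sum_i \Ex{(X_i - \tau)^+}$, since the maximum of nonnegative terms is at most their sum. For the algorithm, the probability of reaching iteration $i$ without having already accepted equals $\prod_{j<i} \Prob{X_j < \tau} \geq \prod_{j=1}^n \Prob{X_j < \tau} = \Prob{Z < \tau} = 1/2$. Decomposing $\Alg$ into the base $\tau$ received on acceptance plus the surplus $(X_i - \tau)^+$ earned when $X_i$ is accepted yields
\[
\Ex{\Alg} \;\geq\; \tau \Prob{\exists i : X_i \geq \tau} + \sum_{i=1}^n \Prob{\forall j < i,\; X_j < \tau} \cdot \Ex{(X_i - \tau)^+} \;\geq\; \tfrac{1}{2}\Bigl(\tau + \sum_{i=1}^n \Ex{(X_i - \tau)^+}\Bigr).
\]

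Combining the two bounds immediately gives $\Ex{\Alg} \geq \tfrac{1}{2}\Ex{Z}$. In this warm-up the framework contributes mainly a uniform way to compute $\Prob{Z \leq \tau}$ from the Poisson rate, so no real obstacle arises; the point is that the harder prophet-inequality variants later in the paper reuse exactly this template, but with substantially more intricate shard events $\xi_\alpha$, and with genuine use of the independence, memorylessness, and additivity of the Poisson shard counts across disjoint threshold intervals.
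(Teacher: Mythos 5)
Your proof is correct, but it takes a genuinely different route from the paper. After using the shard/Poisson rate calculation to fix $\tau$ at the median of $Z$ (which is really just a restatement of Samuel-Cahn's original threshold choice — the Poissonization here is vestigial), you switch to the classical surplus decomposition: bound the prophet by $\tau + \sum_i \Ex{(X_i-\tau)^+}$, bound the algorithm below by $\tfrac12\bigl(\tau + \sum_i \Ex{(X_i-\tau)^+}\bigr)$ using the fact that the algorithm is still "alive" at step $i$ with probability at least $\Prob{Z<\tau}=\tfrac12$, and compare. That is the standard Samuel-Cahn / Kleinberg--Weinberg argument, and it is clean and complete. The paper instead proves the lemma via stochastic dominance (majorization): it shows $\Prob{\Alg\geq\ell}\geq\tfrac12\Prob{Z\geq\ell}$ pointwise in $\ell$, splitting into $\ell\leq\tau$ (immediate from $\Prob{\Alg\geq\tau}=1/2$) and $\ell>\tau$ (a shard-level event: no shards in $[\tau,\ell]$ and at least one above $\ell$, which by Poisson independence across the two disjoint intervals gives $e^{-(\lambda_\tau-\lambda_\ell)}(1-e^{-\lambda_\ell})\geq e^{-\lambda_\tau}\Prob{Z\geq\ell}$). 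The paper is explicit that its proof of this warm-up lemma is not simpler than the classical one; its purpose is to exhibit the template — shard events on disjoint threshold bands, Poisson rates, stochastic dominance — that the harder results in the paper reuse. Your proof yields the lemma more quickly, but it does not actually exercise the Case-2 shard-event mechanism, which is where the framework earns its keep; you acknowledge this at the end, which is fair, but the comparison is worth noting since the stated goal of the lemma in the paper is to illustrate that mechanism.
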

\begin{proof}
    Sharding the variables $X_1, ..., X_n$ into $Y_{i,1}, ..., Y_{n,K}$, the framework implies $\sum_{i=1}^n \sum_{j=1}^K \Prob{Y_{ij} \geq \tau} = \lambda_\tau = -\log(1/2) = \log(2)$ as $K \to \infty$. Thus, the number of shards exceeding $\tau$ follows a Poisson distribution with rate $\log(2)$. We employ stochastic dominance to compare $\Prob{\Alg \geq \ell}$ and $\Prob{Z \geq \ell}$, which depends on $\ell$'s value.

\paragraph{Case 1: $\ell \in [0, \tau]$.} The algorithm accepts a value $\geq \ell$ iff at least one shard exceeds $\tau \geq \ell$, corresponding to an actual $X_i$ realization. Therefore,
\begin{equation}
    \Prob{\Alg \geq \ell} = 1-e^{-\lambda_\tau}\frac{\lambda_\tau^0}{0!}= 1 - e^{-\log(2)} = \frac{1}{2} \geq \frac{1}{2} \Prob{Z \geq \ell}. \eqlab{half:first}
\end{equation}

\paragraph{Case 2: $\ell \in (\tau, +\infty)$.} Define 
\begin{equation*}
    \lambda_\ell = \lim_{K \to \infty} \sum_{i=1}^n \sum_{j=1}^K \Prob{Y_{i,j}\geq \ell}.
\end{equation*}
Given $\ell > \tau$, it follows $\lambda_\ell \leq \lambda_\tau$. The probability $\Prob{Z \geq \ell} = 1 - e^{-\lambda_\ell}$, since at least one shard must be above $\ell$ for $Z \geq \ell$. To lower bound $\Prob{\Alg \geq \ell}$, consider an event implying $\Alg \geq \ell$: no shards with value in $[\tau, \ell]$ and at least one shard with value exceeding $\ell$. This event guarantees $\Alg \geq \ell$ as at least one $X_i \geq \ell$, and no $X_j$ within $[\tau, \ell]$ prevents us from choosing such $X_i$ (as there are no shards in $[\tau, \ell]$). The Poisson rate for shards in $[\tau, \ell]$ is $\lambda_\tau - \lambda_\ell$, and for shards exceeding $\ell$, it's $\lambda_\ell$. Thus, 
\begin{equation}
        \Prob{\Alg \geq \ell} \geq \underbrace{e^{-(\lambda_\tau - \lambda_\ell)}}_{\text{no shards in }[\tau, \ell]} \cdot \underbrace{(1-e^{-\lambda_\ell})}_{\text{at least one shard in }[\ell, \infty)} = e^{-(\lambda_\tau - \lambda_\ell)} \Prob{Z \geq \ell} \geq e^{-\lambda_\tau}\Prob{Z \geq \ell} = \frac{1}{2}\Prob{Z \geq \ell}.\eqlab{half:second}
\end{equation}
By combining \Eqref{half:first} and \Eqref{half:second} through stochastic dominance, the algorithm is $1/2$ competitive. 
\end{proof}

This example serves as a primer to our framework. Different problems will have different events on the shards of varying complexity that imply $\Alg \geq \ell$. The simplicity of this framework belies its strength. Despite the simplicity, \textit{it improves the competitive ratio of more than a dozen prophet inequalities that individually had different and specialized analysis}. Moreover, for several problems, our framework establishes tight competitive ratios within constrained algorithm classes, like single thresholds algorithms. The framework also significantly simplifies proofs for known results in the literature, unifying them and making them more accessible. 

\paragraph{New results exceeds parameter optimization.} Many of the results in this field work in two steps. The first step is deriving some parametric formula for the competitive-ratio, which is typically problem specific. The second step then involves optimizing the parameters to obtain the best (i.e., highest) possible lower bound on the value of the function. Unfortunately, the optimization part can be quite tedious and technical, and in most of the cases, no analytical closed form solutions exist for the maximizer. Hence, numerical solvers are often used to find a set of parameters that are ``good enough''. It is of course plausible that such parameters are suboptimal, and that a ``better'' optimizer would find a slightly better solution, with a better competitive ratio. 

Our main contribution is a new way to perform the first part of the
above analysis; deriving the actual parametric formula. While we still have to dabble in some parameter optimization to derive our bounds, this is neither our main contribution, nor was a major thrust of the work. Furthermore, without the new ideas, no parametric optimization can lead to our main improved results.

\paragraph{Organization} \secref{recap} introduces notation, assumptions, and recaps existing results and techniques. \secref{warmupiid} is a warmup section that uses the ideas of Poissonization in re-deriving the classical $\approx 0.745$ prophet inequality for \IID random variables. \secref{sharding} is yet another much needed warmup section that introduces the idea of sharding, and reproves several known results in the literature using our technique. \secref{best1ofk}  presents our first new major result, giving the improved analysis for the \textsc{Top-$1$-of-$k$} model both for \IID and Non-\IID random variables. \secref{main} presents our second major result, giving the improved analysis for the Non-\IID prophet secretary. \secref{iidsemionline} gives the improved $0.89$ competitive algorithm for the \IID \textsc{Semi-Online} problem using discrete clocks. \secref{loadminimization} introduces the $O(\log^\ast n)$ load result for the \textsc{Semi-Online-Load-Minimization} problem.  Finally, we add concluding remarks and potential future work directions in \secref{conclusion}. 

\section{Notation, assumptions, and existing results.}
\seclab{poissonization}
\seclab{recap}

\paragraph{Notation} In contexts where the dimension $k$ is clear, $e_i$ represents the $i$-th standard basis vector in $\mathbb{R}^k$, characterized by zeroes in all coordinates except for a $1$ in the $i$-th position. $\mathbb{S}_n$ refers to the symmetric group of $n$ elements. The notation $[n]$ signifies the set $\{1, \ldots, n\}$, and $\log^{(t)}n$ denotes the $t$-fold iterated logarithm function. For example, $\log^{(2)}n$ represents $\log \log n$. The iterated logarithm function $\log^\ast(n)$ is defined as the smallest integer $t$ for which $\log^{(t)}n \leq 1$.

\paragraph{Continuity Assumptions} Consider $X_1, \ldots, X_n$ as independent, \textit{non-negative} random variables. In this paper, $Z = \max(X_1, \ldots, X_n)$ is used to denote the maximum value among these $n$ random variables. The notation $\Alg$ is used to represent the algorithm's reward, albeit it is sometimes used interchangeably to refer to the algorithm itself for convenience.

\begin{assumption}
For all problems under consideration, it is assumed, without loss of generality, that the random variables $X_1, \ldots, X_n$ are continuous.
\end{assumption}

Refer to \cite{csz-pstbs-18} for a rationale on why this assumption, enabled by stochastic tie-breaking, does not lose generality.

\paragraph{Folklore set up for prophet secretary} In the Prophet Secretary problem, a random permutation $\sigma$ from the symmetric group $\mathbb{S}_n$ is chosen uniformly at random. The values are revealed to a gambler in the sequence $X_{\sigma(1)}, \ldots, X_{\sigma(n)}$. At each iteration $t$, the gambler is presented with the value $X_{\sigma(t)}$ and must decide whether to accept this value as their final reward, thereby concluding the game, or to irrevocably reject $X_{\sigma(t)}$ in favor of proceeding to the next iteration $t+1$. Should the gambler fail to select a value by the conclusion of round $n$, their reward defaults to zero.

An alternative "folklore" version within the community exists for the prophet secretary problem.\footnote{If the reader is aware of pertinent references, the author would be grateful for the information.} This paper adopts this version, which is included here for completion sake. In this version, each random variable $X_i$ samples a value $v_i$ from its distribution and is assigned a "time of arrival" $t_i$, selected uniformly at random from the interval $[0,1]$. Denoting $\pi$ as the permutation satisfying $t_{\pi(1)} \leq \ldots \leq t_{\pi(n)}$, the values are then presented in the sequence $v_{\pi(1)}, \ldots, v_{\pi(n)}$, ordered according to their times of arrival. Given that any permutation $\pi$ of the arrival order of $X_1, \ldots, X_n$ occurs with a probability $1/n!$, this scheme is equivalent to drawing a random permutation.

A subtle technical point arises in that the algorithm remains unaware of the chosen times of arrival in this setup; it is only informed of the values of the realizations. However, the algorithm can simulate the scheme by generating $n$ independent random times of arrival $t_1, \ldots, t_n \sim \text{Uniform}(0,1)$. After sorting these times of arrival such that $a_1 \leq \ldots \leq a_n$, the algorithm maps the $i$-th realization it processes to the time of arrival $a_i$. Defining $T_i$ as a random variable representing the time of arrival for $X_i$, we claim that this simulation mimics the scheme where each random variable independently selects its time of arrival $t_i$.

\begin{lemma}
\lemlab{folklore}
\lemlab{pairwise_indep}
    For any variable $X_i$, let $t_i\sim \Uniform(0, 1)$, and $T_i$ be the time of arrival in the simulated process. For any $t \in [0,1]$, we have $\Prob{t_i\leq t}=\Prob{T_i \leq t}=t$. In addition, $T_1, ..., T_n$ are independent. 
\end{lemma}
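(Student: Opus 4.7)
The plan is to prove the slightly stronger statement that $(T_1, \ldots, T_n)$ has the joint distribution of $n$ iid $\Uniform(0,1)$ random variables; the claimed marginal $\Prob{T_i \leq t} = t$ and the mutual independence of the $T_i$'s then follow at once. Writing $\pi$ for the uniformly random arrival-order permutation chosen by nature and $a_1 \leq \cdots \leq a_n$ for the sorted internal samples that the algorithm generates, the simulation assigns $T_{\pi(i)} = a_i$, which is equivalent to $T_j = a_{\sigma(j)}$ with $\sigma = \pi^{-1}$. Since $\pi$ is uniform on $\mathbb{S}_n$, so is $\sigma$, and $\sigma$ is independent of $(a_1, \ldots, a_n)$ because nature's permutation is drawn independently of the algorithm's internal randomness.

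The key ingredient is the standard exchangeability decomposition of iid uniforms: if $U_1, \ldots, U_n$ are iid $\Uniform(0,1)$ with order statistics $a_1^{\ast} \leq \cdots \leq a_n^{\ast}$, then there is a uniformly random permutation $\tau$ on $[n]$, independent of $(a_1^{\ast}, \ldots, a_n^{\ast})$, such that $U_j = a_{\tau(j)}^{\ast}$ for every $j$. This holds because the joint density of $(U_1, \ldots, U_n)$ on $[0,1]^n$ is invariant under coordinate permutations; conditioning on the unordered multiset of values thus yields a uniform distribution over its $n!$ orderings. Reading this decomposition in reverse, the pair $((a_1, \ldots, a_n), \sigma)$ produced by the simulation has exactly the joint distribution of the order statistics of $n$ iid uniforms together with an independent uniform permutation, so $(T_1, \ldots, T_n) = (a_{\sigma(1)}, \ldots, a_{\sigma(n)})$ is equidistributed with $n$ iid $\Uniform(0,1)$ random variables. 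The marginal claim $\Prob{T_i \leq t} = t$ and the joint independence of the $T_i$'s are then immediate.

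There is no real obstacle in the proof; the argument is essentially a bookkeeping application of exchangeability. The only modeling point worth flagging is the tacit independence between the adversarial arrival-order permutation $\pi$ and the algorithm's internal random draws $t_1, \ldots, t_n$, which is built into the definition of the simulation. Once that independence is granted, the whole statement reduces to the observation that uniformly permuting the order statistics of iid uniforms recovers the original iid uniform distribution.
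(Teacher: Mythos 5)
Your proof is correct, but it takes a genuinely different route from the paper's. The paper proves the marginal claim by writing $\Prob{T_i \leq x} = \sum_{j} \Prob{t_{(j)}\leq x}\Prob{\sigma(i)=j}$ and explicitly evaluating the resulting binomial sum, and it proves independence by computing $\Prob{T_a\leq x, T_b\leq y}$ as a double sum over the joint density of pairs of order statistics, showing it factors as $xy$, and then remarking that ``higher order independence follows similarly.'' You instead invoke the exchangeability decomposition of \iid uniforms: the rank permutation of $n$ \iid continuous variables is uniform on $\mathbb{S}_n$ and independent of the order statistics, so reassembling the order statistics $a_1\leq \cdots \leq a_n$ via an independent uniform permutation $\sigma=\pi^{-1}$ reproduces the full joint \iid $\Uniform(0,1)$ law of $(T_1,\ldots,T_n)$, from which both the marginal and mutual independence are immediate. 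Your approach is cleaner and, in one respect, more complete: it delivers full mutual independence directly, whereas the paper only verifies pairwise independence explicitly and defers the higher-order case to an analogous (unwritten) computation. The paper's approach, in exchange, is entirely self-contained elementary calculus and does not rely on the (standard but uncited) rank--order-statistic independence fact. The one modeling assumption you correctly flag --- independence of nature's permutation from the algorithm's internal draws --- is also implicitly used by the paper when it writes $\Prob{t_{(j)}\leq x}\Prob{\sigma(i)=j}$ as a product.
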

\begin{proof}
    See \apdxref{A}. 
\end{proof}
\begin{assumption}
    We assume without loss of generality that the algorithm for the prophet secretary has access to the time of arrival of a realization drawn uniformly and independently at random from the interval $[0,1]$. 
\end{assumption}

\paragraph{Types of Thresholds} Threshold-based algorithms work by establishing a series of thresholds $\tau_1, \ldots, \tau_n$, often set in a descending order. A realization $v_i$ is accepted if and only if $v_i \geq \tau_i$, and all preceding realizations $v_1, \ldots, v_{i-1}$ fall below their respective thresholds $\tau_1, \ldots, \tau_{i-1}$. This means that $v_i$ is the first realization to surpass its threshold.

Two primary forms of thresholding methods are prevalent in the literature. The first method, known as \textit{maximum based thresholding}, involves setting each $\tau_i$ to correspond with the $q_i$-quantile of the distribution of $Z = \max_i X_i$, such that the probability $\Prob{Z \leq \tau_i}$ equals $q_i$. These $q_i$ values are carefully selected, often arranged in a non-increasing sequence. Samuel Cahn \cite{s-ctsrm-84} sets a single threshold $\tau=\tau_1 = \ldots = \tau_n$ such that $\Prob{Z\leq \tau}=1/2$ (i.e., the median of $Z$). Since then, several results have adopted this idea, including the result of Correa \etal on discrete blind strategies \cite{csz-pstbs-18}.

Similarly, \textit{summation based thresholding} sets a threshold $\tau$ such that the expected number of realizations at least $\tau$ sums to $s_i$ (i.e., $\sum_{i=1}^n \mathbb{P}(X_i \geq \tau) = s_i$). One paper that uses a variation of this idea is the work of \cite{ethlm-psmpc-19}.

One of the key contributions of this paper is relating these two kinds of thresholding techniques via Poissonization and sharding. In practice, these are not necessarily the only two types of threshold setting techniques that can work. For example, one can certainly set thresholds such that (say) $\sum_{i=1}^n \Prob{X_i \geq \tau}^2 = q_i$. However, theoretical analysis of such techniques are highly non-trivial as one often needs to bound both $\Prob{Z\geq \tau}$ and $\Prob{\Alg \geq \tau}$.  With maximum based thresholding, often the bound on $\Prob{Z\geq \tau}$ is trivial, because we choose $\tau$ as a quantile of the maximum, but bounding $\Prob{\Alg \geq \tau}$ is more cumbersome. On the other hand, summation based thresholding typically have simpler analysis for $\Prob{\Alg \geq \tau}$, but bounding $\Prob{Z\geq \tau}$ is harder and is distribution specific. 

\paragraph{Standard Stochastic Dominance/Majorization Argument} To establish a lower bound for the competitive ratio of a thresholding algorithm using a descending sequence of thresholds $\tau_1 > \ldots > \tau_n$, a common approach involves the concept of majorization or stochastic dominance. We outline the approach below. Consider the expected values represented by the integrals:
\begin{equation*}
    \mathbb{E}[\Alg] = \int_{0}^{\infty} \mathbb{P}(\Alg \geq \ell) \dif \ell,~~~~~~\mathbb{E}[Z] = \int_{0}^{\infty} \mathbb{P}(Z \geq \ell) \dif \ell.
\end{equation*}
By setting $\tau_0 = \infty$ and $\tau_{n+1} = 0$, if we can guarantee that for every $\nu$ in the interval $[\tau_{i}, \tau_{i-1}]$, there exists a constant $c_i \in [0,1]$ such that $\Prob{\Alg \geq \nu} \geq c_i \Prob{Z \geq \nu} $, then the following inequality can be derived
\[
\mathbb{E}[\Alg] = \sum_{i=1}^{n+1} \int_{\tau_{i}}^{\tau_{i-1}} \Prob{\Alg \geq \ell} \, \dif \ell \geq \sum_{i=1}^{n+1} c_i \int_{\tau_{i}}^{\tau_{i-1}} \Prob{Z \geq \ell} \, \dif \ell \geq \min(c_1, \ldots, c_{n+1})\mathbb{E}[Z].
\]
Consequently, the competitive ratio of $\Alg$ is lower-bounded by $c = \min(c_1, \ldots, c_{n+1})$. This technique is a cornerstone of various lower bounds on prophet inequalities, including our own, and is commonly referred to as \textit{majorizing} $\Alg$ by $Z$. This technique is helpful as it simplifies the process of lower bounding the 
competitive ratio by instead comparing $\Prob{\Alg \geq \ell} $ versus $\Prob{Z \geq \ell} $ within a fixed interval, rather than directly dealing with the expectations as a whole.

\paragraph{Recap of Discrete Blind Strategies} The concept of discrete blind strategies, as introduced by Correa \etal \cite{csz-pstbs-18}, employs maximum based thresholding for the \textit{prophet secretary} problem. Initially, the algorithm chooses a decreasing function $\alpha: [0,1] \to [0,1]$. With $\ICDFY{Z}{q}$ denoting the threshold for which $\Prob{Z \leq \ICDFY{Z}{q}} = q$, the algorithm commits to the first realization $v_i$ satisfying $v_i \geq \ICDFY{Z}{\alpha(i/n)}$ (i.e., if $v_i$ is within the top $\alpha(i/n)$ percentile of $Z$). Defining $T$ as a random variable representing the time a realization is chosen (if any), Correa \etal derive the crucial inequality for any $k \in [n]$ \cite{csz-pstbs-18}
\[
\frac{1}{n} \sum_{i=1}^k \left(1-\alpha\left(\frac{i}{n}\right)\right)  \leq \Prob{T \leq k} \leq  1-\left(\prod_{i=1}^{k} \alpha\left(\frac{i}{n}\right)\right)^{1/n}.
\]
The proof of this inequality is involved, utilizing principles of Schur-convexity an infinite number of times to establish the upper bound, and $n$ times for the lower bound. In \secref{sharding} we present a straightforward and elementary proof of the above inequalities, along with even more tighter bounds.

Next, they use these bounds on $\Prob{T \leq k}$ to deduce a lower bound for $\Prob{\Alg \geq \ICDFY{Z}{\alpha(i/n)}}$. Coupled with the straightforward relation $\Prob{Z \geq \ICDFY{Z}{\alpha(i/n)}} = 1 - \alpha(i/n)$, this enables them to majorize blind strategies against $Z$, thereby establishing a lower bound on the competitive ratio in relation to $\alpha$ as $n\to \infty$. By optimizing across various $\alpha$ functions, they achieve a competitive ratio of approximately $0.669$. For further details, refer to \cite{csz-pstbs-18}.

\paragraph{Probability Background} In the context of a measurable space $(\Omega, \mathcal{F})$ equipped with probability measures $P, Q$, the \textit{total variational distance} between $P$ and $Q$ is defined as
\[
d(P, Q) = \frac{1}{2} \lVert P-Q \rVert_1 = \sup_{A \in \mathcal{F}} \lvert P(A) - Q(A) \rvert.
\]
A random variable $X \in \mathbb{R}^k$ is termed \textit{categorical}, parameterized by success probabilities $p \in \mathbb{R}^k$, if $X$ can assume values in $\{{\bf 0}, e_1, \ldots, e_k\}$\footnote{As a reminder, $e_i$ is the $i$-th standard basis vector in $\mathbb{R}^k$} where $\mathbb{P}(X=e_i) = p_i$ for $i = 1, \ldots, k$, and $\mathbb{P}(X={\bf 0}) = 1 - \sum_i p_i$. The Poisson distribution, denoted by $\Poisson(\lambda)$, is characterized by a rate parameter $\lambda$, with a variable $X \sim \Poisson(\lambda)$ taking values in $\mathbb{N}_{\geq 0}$ and having $\Prob{X=k} = e^{-\lambda} \frac{\lambda^k}{k!}$. A \textit{multinomial Poisson distribution}, denoted $\Poisson(\lambda_1, \ldots, \lambda_k)$, is understood as a $k$-dimensional random variable where each coordinate is an \textbf{independent} Poisson random variable, with $X \sim \Poisson(\lambda_1, \ldots, \lambda_k)$ in $\mathbb{N}_{\geq 0}^k$ satisfying $\mathbb{P}(X = (n_1, \ldots, n_k)) = \prod_{i=1}^k e^{-\lambda_i} \frac{\lambda_i^{n_i}}{n_i!}$.

\paragraph{Poissonization via Coupling} Coupling is a powerful technique  for estimating the variational distance between two random variables. Generally, to bound the variational distance between variables $X, Y$, it suffices to construct a joint random vector $W$ whose marginals are precisely $X$ and $Y$.

The result required here concerns the coupling of multi-dimensional random variables, an extension of the single-dimensional case known as Le Cam's theorem \cite{c-atpbd-60}, with the needed higher-dimensional generalizations found in \cite{w-cma-86}. The proof, standard in the coupling literature \cite{h-ptcm-12}, is reformulated below to follow our notation.

\begin{lemma}{\cite{w-cma-86}}
\lemlab{distance}
    Given $n$ independent categorical random variables $Y_1, \ldots, Y_n$, each parameterized by $p_1, \ldots, p_n \in \mathbb{R}^k$, and defining $S_n = \sum_{i=1}^n Y_i$ with $\lambda = \sum_i p_i$, let $T_n \sim \Poisson(\lambda_1, \ldots, \lambda_k)$. Denoting $\hat{p}_i = \sum_{j=1}^k p_{i,j}$, it follows that
    \[
    d(S_n, T_n) \leq 2\sum_{i=1}^n \hat{p}_i^2.
    \]
\end{lemma}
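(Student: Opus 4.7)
My plan is to prove the bound via a coordinate-wise coupling followed by aggregation using the subadditivity of total variation under independent sums. First, I would construct, for each index $i$, a maximum coupling $(Y_i, T_i^{\star})$ of the categorical variable $Y_i$ with a multinomial Poisson summand $T_i^{\star} \sim \Poisson(p_{i,1}, \ldots, p_{i,k})$, and take these coupled pairs to be jointly independent across $i$. Since the sum of independent multinomial Poissons is again multinomial Poisson with summed rates, $\sum_{i=1}^n T_i^{\star}$ has the same distribution as $T_n$. The standard coupling inequality, combined with a union bound, then yields
\[
d(S_n, T_n) \;\leq\; \Prob{S_n \neq \sum_{i=1}^n T_i^{\star}} \;\leq\; \sum_{i=1}^n \Prob{Y_i \neq T_i^{\star}},
\]
so the task reduces to bounding each per-index mismatch probability by $2\hat{p}_i^2$.

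Next, to bound $\Prob{Y_i \neq T_i^{\star}}$, I would invoke the maximum-coupling identity $\Prob{Y_i = T_i^{\star}} = \sum_{x} \min\bigl(\Prob{Y_i = x}, \Prob{T_i^{\star} = x}\bigr)$. Since $Y_i$ is supported only on $\{0, e_1, \ldots, e_k\}$, only those atoms contribute to the sum. Using $\Prob{Y_i = 0} = 1 - \hat{p}_i$ versus $\Prob{T_i^{\star} = 0} = e^{-\hat{p}_i}$ (and the elementary inequality $1 - x \leq e^{-x}$), and $\Prob{Y_i = e_j} = p_{i,j}$ versus $\Prob{T_i^{\star} = e_j} = p_{i,j} e^{-\hat{p}_i}$, a short calculation yields $\Prob{Y_i \neq T_i^{\star}} = \hat{p}_i(1 - e^{-\hat{p}_i})$. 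Applying $1 - e^{-x} \leq x$ once more gives $\Prob{Y_i \neq T_i^{\star}} \leq \hat{p}_i^2$, which is comfortably within the stated bound of $2\hat{p}_i^2$.

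The step I expect to require the most care is the aggregation: I will need to verify both that jointly independent coupled pairs give $\sum_i T_i^{\star}$ the distribution of $T_n$ (which follows from the additivity of independent multinomial Poissons) and that any realization with $S_n \neq \sum_i T_i^{\star}$ forces coordinate-level disagreement at some $i$, so that the union bound applies. Both are standard coupling facts, and ultimately this is a higher-dimensional restatement of the classical Le Cam coupling; the factor-of-$2$ slack in the statement provides a comfortable cushion for any minor looseness in the per-index estimate.
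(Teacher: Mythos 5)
Your proposal is correct: the independent maximal couplings of each categorical $Y_i$ with $T_i^{\star}\sim\Poisson(p_{i,1},\ldots,p_{i,k})$, the additivity of independent multinomial Poissons, the union bound, and the per-index computation $\Prob{Y_i\neq T_i^{\star}}=\hat{p}_i\bigl(1-e^{-\hat{p}_i}\bigr)\leq\hat{p}_i^2$ are all sound, and together they give $d(S_n,T_n)\leq\sum_i\hat{p}_i^2$, which is even stronger than the stated bound. This is precisely the standard Le Cam coupling argument that the paper itself defers to the literature for, so there is nothing to add.
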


\section{Poissonization Warmup: The \IID Prophet Inequality.}\label{eta10}
\seclab{warmupiid}
In this section, we focus on the classical prophet inequality for \IID random variables, for which there exists an algorithm achieving a competitive ratio of $\approx 0.745$. This discussion aims to lay the groundwork for understanding Poissonization. The problem is exactly  the standard \IID  prophet inequality, as the random permutation of \IID variables does not alter the problem's nature. We proceed under the assumption that $n \to \infty$. This assumption is not required but serves to simplify the exposition here. Hence, our input is $(t_i, v_i)$, for $i=1, ..., n$, where $t_i \sim \Uniform(0, 1)$ is the time of arrival, and $v_i$ is the value of the $i$-th random variable $X_i$. 

\paragraph{Canonical Boxes}
\begin{figure}
    \centering
    \includegraphics[width=0.4\textwidth]{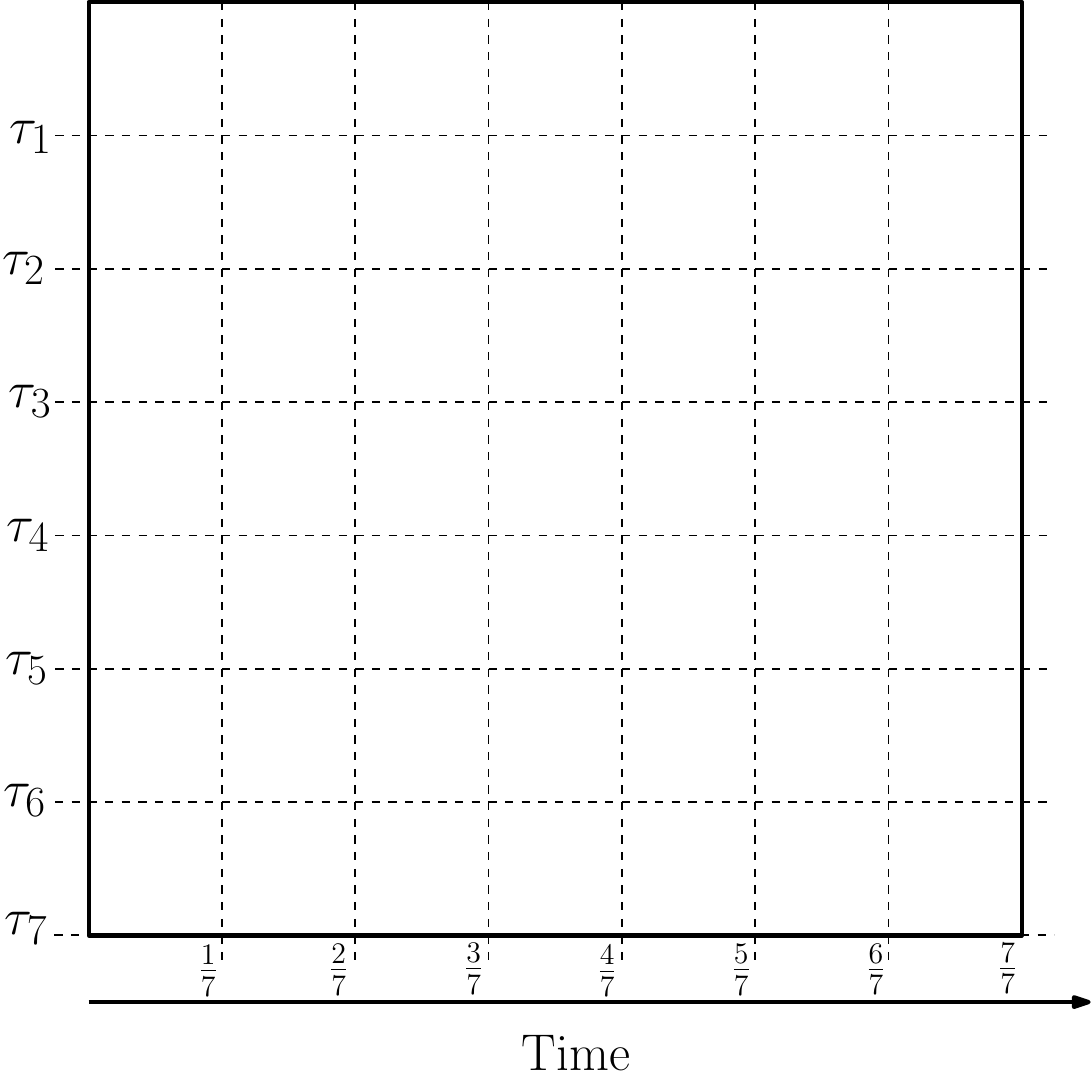}
    \caption{Level $7$ canonical boxes of $\tau_7 = \Xi(q)$}
    \figlab{fig:k7poissonization}
\end{figure}
Because the random variables are continuous, for any $q \in [0,n]$, there exists a threshold $\tau$ satisfying $\sum_{i=1}^n \Prob{X_i \geq \tau} = q$, as guaranteed by the intermediate value theorem.
\begin{definition}
When the random variables are implicitly clear, we use $ \Xi(q)$ to denote the threshold such that on expectation, $q$ realizations are above it.  
\end{definition}

In the subsequent analysis, think of $k\to \infty$ and $q=O(1)$ as a constant to be determined. We fix a threshold $\Xi(q)$ and break ``arrival time'' into $k$ buckets, the $i$-th between time $\frac{i-1}{k}$ and $\frac{i}{k}$. In addition, we define $k+1$ thresholds $\tau_0, \tau_1, \ldots, \tau_k$ such that $\tau_i = \Xi(\frac{q\cdot i}{k})$ (with $\Xi(0)=\infty$). 

\begin{definition}
    The \textit{level $k$ canonical boxes} of $\Xi(q)$ are defined as the $k^2$ sets $\square_{i,j} = \{(t, v) \mid \frac{i-1}{k} \leq t \leq \frac{i}{k} \text{ and } \tau_j \leq v \leq \tau_{j-1}\}$. Refer to  \figref{fig:k7poissonization}.
\end{definition}

Assuming the random variables arrival times are $t_1, ..., t_n$ and their values are $v_1, ..., v_n$,
\begin{definition}
    A realization $v_i$ is said to \textit{arrive} or \textit{fall} in $\square_{r, s}$ if $(t_i, v_i) \in \square_{r, s}$.
\end{definition}

Our objective is to derive a succinct, closed-form expression for $S \in \mathbb{R}^{k \times k}$, where $S_{i,j}$ represents the count of realizations arriving in $\square_{i,j}$. This will be achieved by coupling the distribution with a multinomial Poisson distribution $T \in \mathbb{R}^{k \times k}$, which mimics $S$ as $n, k \to \infty$ (i.e., $\lVert S-T \rVert_1 \to 0$ as $n, k \to \infty$).

\begin{lemma}
\lemlab{replace:poisson}
    Fix $q=O(1)$ and consider the level-$k$ canonical boxes of $\Xi(q)$. Let $S_n \in \Re^{k\times k}$ count the number of realizations in the canonical boxes $\{\square_{i,j}\}$. Let $T_n \in \Re^{k\times k}$ be a multinomial Poisson random variable with each coordinate rate being $\frac{q}{k^2}$. Then 
    \[
    d(S_n, T_n) \leq \frac{2q^2}{n}.
    \]
    In particular, as $k,n\to \infty$, then for any (simple) region $\circledcirc \subseteq [0,1]\times [\Xi(q), \infty]$, the probability we have $r$ realizations fall in $\circledcirc$ is $e^{-\mu{(\circledcirc)}} \frac{\mu{(\circledcirc)}^r}{r!}$ where $\mu{(\circledcirc)}=\sum_{i=1}^n \Prob{X_i \text{ arrives in } \circledcirc}$
\end{lemma}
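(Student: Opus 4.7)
The plan is to express $S_n$ as a sum of $n$ independent categorical random variables on the $k^2$ canonical boxes and then invoke \lemref{distance} directly. For each $i \in [n]$, define $Y_i$ taking values in $\{\mathbf{0}, e_{(r,s)} : 1 \leq r,s \leq k\}$ by setting $Y_i = e_{(r,s)}$ if the arrival $(t_i, v_i)$ falls in $\square_{r,s}$, and $Y_i = \mathbf{0}$ otherwise. Independence of $Y_1, \dots, Y_n$ follows from the independence of the samples $X_1, \dots, X_n$ and of their independently drawn arrival times, so $S_n = \sum_{i=1}^n Y_i$ is exactly the vector of box-counts.

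Next I would compute the parameter vector $p_i \in \Re^{k \times k}$. By the definition of $\Xi$, for \IID variables we have $\Prob{X_i \geq \tau_s} = qs/(kn)$, so the probability of a value landing in the value-strip $[\tau_s, \tau_{s-1}]$ is exactly $q/(kn)$. Since the time coordinate lands in bucket $r$ with probability $1/k$, independently of the value, $p_{i,(r,s)} = q/(k^2 n)$ and hence $\hat p_i = q/n$. The total rate at coordinate $(r,s)$ of $\sum_i p_i$ is therefore $n \cdot q/(k^2 n) = q/k^2$, matching the rate of each coordinate of $T_n$. Plugging into \lemref{distance} gives
\[
d(S_n, T_n) \;\leq\; 2 \sum_{i=1}^n \hat p_i^{\,2} \;=\; 2n \cdot \left(\frac{q}{n}\right)^{\!2} \;=\; \frac{2q^2}{n},
\]
which is the first claim.

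For the ``in particular'' statement I would proceed in two layers. First, the bound above shows that for any fixed $k$, the joint box-count distribution of $S_n$ converges in total variation to $T_n$ as $n \to \infty$, and the coordinates of $T_n$ are mutually independent Poissons. Hence for any union $\mathcal{B}$ of canonical boxes, the total count in $\mathcal{B}$ is Poisson with rate $|\mathcal{B}| \cdot q/k^2$, which is exactly $\sum_i \Prob{X_i \text{ arrives in } \mathcal{B}}$. Second, for a simple region $\circledcirc$, one sandwiches it between an inner union $\mathcal{B}_k^-$ and outer union $\mathcal{B}_k^+$ of level-$k$ canonical boxes; since $F$ is continuous and the time axis is Lebesgue, the symmetric difference has vanishing $\mu$-measure as $k \to \infty$, so the sandwiching Poisson rates both converge to $\mu(\circledcirc)$ and the claim follows.

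The main obstacle is making the second layer rigorous: one must verify that the quantile-based grid $\{\tau_s\}$ along the value axis and the uniform grid along the time axis are together fine enough to approximate a general simple $\circledcirc$ to arbitrary accuracy. This is a routine monotone squeeze argument using the continuity of $F$ (which makes the $\tau_s$-strips shrink uniformly in $F$-measure as $k \to \infty$), and once it is in place, the ``in particular'' conclusion is immediate from the additivity of independent Poissons and the total-variation bound.
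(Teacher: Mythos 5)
Your proposal is correct and follows essentially the same route as the paper: decompose $S_n$ into $n$ independent categorical variables indexed by the $k^2$ canonical boxes, use \IID symmetry and continuity to get $\hat p_i = q/n$, and apply \lemref{distance}, with the ``in particular'' part following from Poisson additivity as $k,n\to\infty$. Your sandwiching argument for a general simple region is a more explicit rendering of a step the paper's proof leaves implicit, but it is the same underlying argument.
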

\begin{proof}
    See \apdxref{A}.
\end{proof}

\begin{remark}
    The proof of  \lemref{replace:poisson} is extendable to non-\IID random variables under the condition that each $\Prob{ X_i \geq \tau_k}$ is sufficiently ``small''. This condition is typical in the proofs of coupling results, such as Le Cam's theorem. For instance, if $\Prob{X_i \geq \tau_k} \leq 1/K$ for some $K \to \infty$, then the variational distance also approaches zero. The proof is mostly the same as previously described.
\end{remark}

\paragraph{Plan of Attack} With the aid of \lemref{replace:poisson}, and by letting $k, n \to \infty$, we find that for any region $\circledcirc$ above $\Xi(q)$, the probability of observing $j$ realizations within this region is given by $e^{-\mu(\circledcirc)} \frac{\mu(\circledcirc)^j}{j!}$, where $\mu(\circledcirc)$ denotes the area (measure) of the region $\circledcirc$. This simplification paves the way to express the competitive ratio of an algorithm in terms of an integral.

\paragraph{Algorithm} We consider algorithms described by an \textbf{increasing} curve $C:[0,1]\to \Re_{\geq 0}$ with $C(1)\leq q = O(1)$. At a given time $t_i$, a realization $(t_i, v_i)$ is accepted if and only if $v_i \geq \Xi(C(t_i)) = \tau_C(t_i)$, meaning that the threshold $\tau_C(t)$ at time $t$ is set so that the expected number of realizations above it is equal to $C(t)$. Given such a function $C$, how do we find the competitive ratio of an algorithm following $\tau_C$?

\begin{figure}
    \centering
    \includegraphics[width=0.85\textwidth]{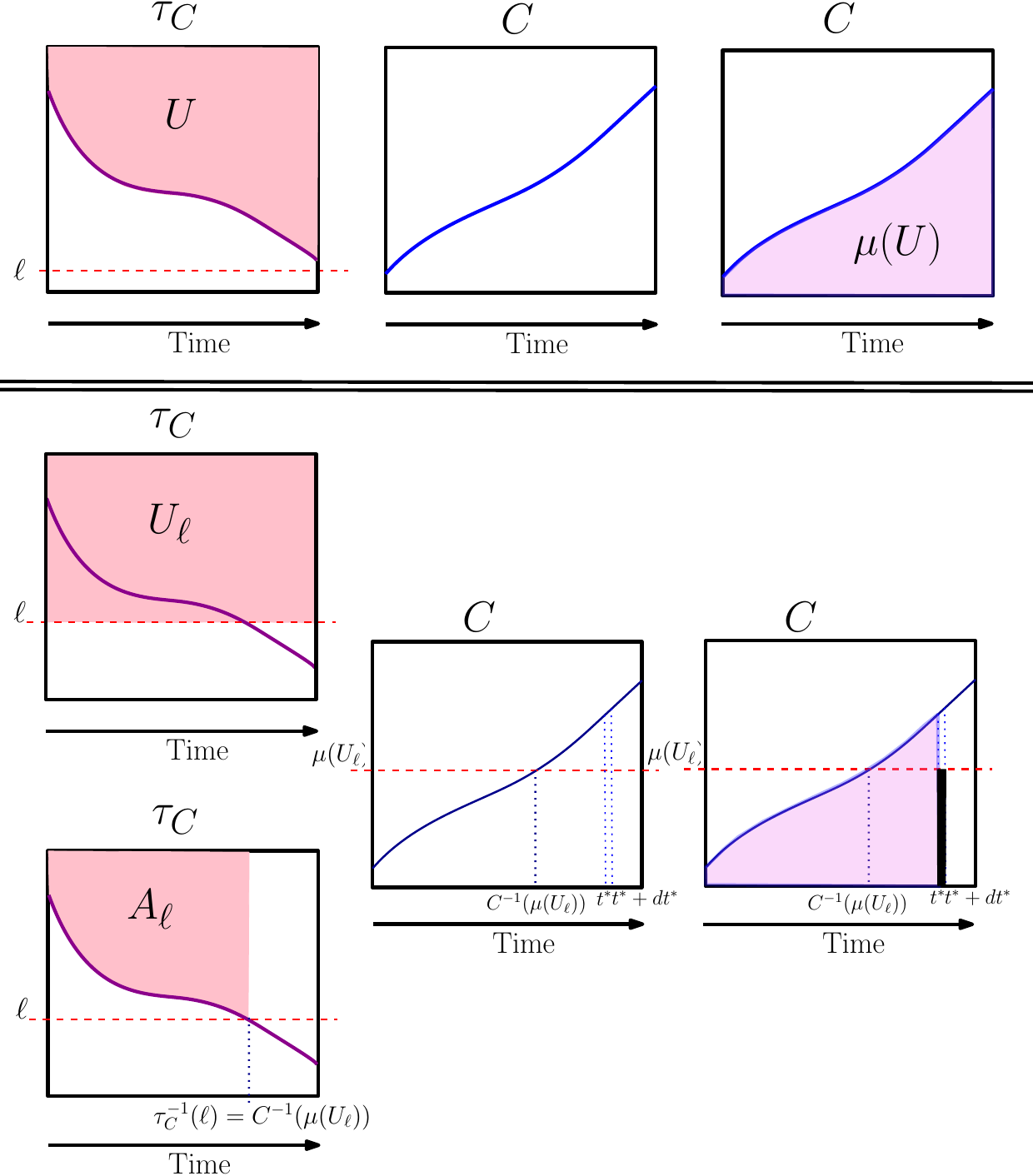}
    \caption{The two cases of \lemref{iid:monstrosity}. Curves are labeled either as $C$ or $\tau_C$.  }
    \figlab{fig:poissinization-iid-1}
\end{figure}

\begin{lemma}
\lemlab{iid:monstrosity}
    The competitive ratio $c$ of the algorithm that follows curve $C:[0,1]\to \Re_{\geq 0}$ satisfies
    \begin{equation}
            c \geq \min \left\{ 1-e^{-\int_0^1 C(t) \dif t}, \inf_{0 < \ell' \leq C(1)} \left( \frac{ 1-e^{-\int_{0}^{C^{-1}(\ell')} C(t) \dif t} + \int_{C^{-1}(\ell')}^1 \ell' e^{-\int_{0}^t C(y) \, \dif y} \, \dif t}{1-e^{-\ell'}} \right) \right\}. \eqlab{monstrosity}
    \end{equation}
\end{lemma}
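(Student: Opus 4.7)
The plan is the standard majorization argument: I will lower bound $\Prob{\Alg \geq \ell}/\Prob{Z \geq \ell}$ pointwise in $\ell \geq 0$ and then take the infimum. The two branches of the stated minimum correspond to splitting at the smallest threshold the algorithm ever uses, $\Xi(C(1))$. Throughout, the workhorse is \lemref{replace:poisson}, which lets me treat counts of realizations falling in disjoint planar regions above $\Xi(q)$ as independent Poisson variables (in the $n,k\to\infty$ limit) with mean equal to the expected number of arrivals in each region.

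For $\ell \leq \Xi(C(1))$, every threshold $\tau_C(t)=\Xi(C(t))$ sits above $\ell$, so any realization accepted by the algorithm automatically has value $\geq \ell$. Hence $\Prob{\Alg \geq \ell}$ is simply the probability that at least one realization lands in the region $\{(t,v) : v \geq \tau_C(t)\}$; approximating this region by unions of canonical boxes and applying \lemref{replace:poisson}, the count is asymptotically Poisson with mean $\int_0^1 n\Prob{X_1 \geq \tau_C(t)}\dif t = \int_0^1 C(t)\dif t$. Thus $\Prob{\Alg \geq \ell} \geq 1-e^{-\int_0^1 C(t)\dif t}$, and the trivial bound $\Prob{Z \geq \ell} \leq 1$ yields the first branch of the minimum.

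For $\ell > \Xi(C(1))$, set $\ell' = \Xi^{-1}(\ell) \in (0, C(1))$ and $s = C^{-1}(\ell') \in (0,1)$, so $\tau_C(t) > \ell$ on $[0,s)$ and $\tau_C(t) < \ell$ on $(s,1]$. The denominator is Poisson-exact: the count of realizations with value $\geq \ell$ is asymptotically Poisson with mean $n\Prob{X_1 \geq \ell} = \ell'$, so $\Prob{Z \geq \ell} = 1-e^{-\ell'}$. For the numerator, the key observation is that $\Alg \geq \ell$ iff the \emph{first} realization crossing the curve has value $\geq \ell$. Viewing the above-curve arrivals as an inhomogeneous Poisson process of rate $C(t)$, and noting that conditional on being above the curve at time $t$ the value exceeds $\ell$ with probability $1$ when $t \leq s$ and with probability $\Prob{X_1 \geq \ell}/\Prob{X_1 \geq \tau_C(t)} = \ell'/C(t)$ when $t > s$, Poisson thinning produces an ``above-curve and $\geq \ell$'' process with rate $C(t)$ on $[0,s]$ and rate $\ell'$ on $(s,1]$. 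Letting $T$ be the time of the first above-curve arrival,
\[
\Prob{\Alg \geq \ell} \;=\; \Prob{T \leq s} + \Prob{T > s,\,v_T \geq \ell} \;=\; \bigl(1-e^{-\int_0^s C(y)\dif y}\bigr) + \int_s^1 \ell'\, e^{-\int_0^t C(y)\dif y}\dif t,
\]
where the second summand factors as $e^{-\int_0^s C(y)\dif y}$ (no above-curve arrival in $[0,s]$) times the first-arrival density $\ell' e^{-\int_s^t C(y)\dif y}$ of the thinned rate-$\ell'$ process on $(s,1]$. Dividing by $1-e^{-\ell'}$ and taking the infimum over $\ell' \in (0,C(1)]$ yields the second branch.

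The main obstacle is upgrading the informal ``Poisson process of rate $C(t)$'' picture to a rigorous statement: one must approximate both the non-rectangular region under the curve $\tau_C(\cdot)$ and the horizontal strip $\{v \geq \ell\}$ by canonical boxes, verify independence of the Poisson counts across disjoint time strips in the $n,k\to\infty$ limit of \lemref{replace:poisson}, and justify the thinning step that produces the rates $C(t)$ on $[0,s]$ and $\ell'$ on $(s,1]$. Once this machinery is in place, the computation above is a direct integration, the first piece collapsing via $\int_0^s C(t) e^{-\int_0^t C(y)\dif y}\dif t = 1-e^{-\int_0^s C(y)\dif y}$.
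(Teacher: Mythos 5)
Your proof is correct and follows essentially the same route as the paper: you split at $\ell = \Xi(C(1))$, use the trivial bound $\Prob{Z\geq\ell}\leq 1$ in the first branch, and in the second branch condition on whether the first above-curve arrival happens before or after $s=C^{-1}(\ell')$, arriving at the identical integral expression $1-e^{-\int_0^s C(y)\dif y}+\int_s^1 \ell' e^{-\int_0^t C(y)\dif y}\dif t$. The only cosmetic difference is that you phrase the computation in terms of an inhomogeneous Poisson process with thinning, whereas the paper phrases it as measures $\mu(A_\ell)$, $\mu(U_\ell)$ of planar regions and applies \lemref{replace:poisson}; these are the same argument in different clothing, and the ``obstacle'' you flag at the end is exactly what \lemref{replace:poisson} supplies.
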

\begin{proof}
    Throughout this proof, refer to \figref{fig:poissinization-iid-1} for a visual aid. Recall that $C$ is an increasing function. For convenience, we extend the notation such that $C^{-1}(\ell') = 1$ for any $\ell' > C(1)$, and $C^{-1}(\ell') = 0$ for $\ell' < C(0)$. Let $\Alg$ denote the outcome of the strategy that adheres to the threshold $\tau_C$. We proceed by stochastic dominance. 

\paragraph{Case 1: $\ell \in [0, \tau_C(1)]$.} For what follows, see the first row of  \figref{fig:poissinization-iid-1}.  We establish a trivial \textit{upper bound} for $\Prob{Z \geq \ell}$ as $\Prob{Z \geq \ell} \leq 1$. 

Define the set $U = \{(t,v) \mid 0 \leq t \leq 1, \tau_C(t) \leq v\}$. The measure of $U$, denoted $\mu(U)$, can be determined by integrating over the curve $C$. 
\begin{align*}
    \mu(U) &= \sum_{i=1}^n \Prob{X_i \text{ arrives in }U} \\ 
    & = \sum_{i=1}^n \int_0^1 \Prob{X_i \geq \tau_C(t)} \dif t  \\
    & = \int_0^1 \sum_{i=1}^n \Prob{X_i \geq \tau_C(t)} \dif t \\ 
    & = \int_0^1 C(t) \dif t. 
\end{align*}
Thus, the probability that $\Alg$ exceeds $\ell$ is given by
\begin{align*}
        \Prob{\Alg \geq \ell} &= 1-\Prob{U \text{ has no arrivals}} \\
        & =  1 - e^{-\mu(U)} \\
        & = 1- e^{-\int_0^1 C(t) \dif t}. 
\end{align*}
Combining this with the upper bound on $\Prob{Z\geq \ell}$, we have the first main inequality 
\begin{equation}
    \Prob{\Alg \geq \ell} \geq (1-e^{-\int_0^1 C(t) \dif t }) \Prob{Z \geq \ell}
    \eqlab{iid:case1:bound}. 
\end{equation}

\paragraph{Case 2: $\ell \in [\tau_C(1), \infty)$. } For what follows, see the second row of  \figref{fig:poissinization-iid-1}.  Consider $U_\ell=\{(t, v)|0\leq t \leq 1, \ell \leq v < \infty\}$. We can compute the measure of $U_\ell$ as before 
\begin{align*}
    \mu(U_\ell) &= \sum_{i=1}^n \Prob{X_i \text{ arrives in } U_\ell}\\
    &= \sum_{i=1}^n \Prob{X_i \geq \ell} .
\end{align*}
Thus, it follows
\[
\Prob{Z \geq \ell} = 1-\Prob{U_\ell \text{ has no arrivals}}= 1-e^{-\mu(U_\ell)}.
\]

Next, we lower bound $\Prob{\Alg \geq \ell}$. Consider the region $A_\ell = \{(t, v) \mid 0 \leq t \leq C^{-1}(\ell), \tau_C(t) \leq v\}$. The measure of $A_\ell$, $\mu(A_\ell)$, is calculated by integrating over the curve $C$ up to $\tau_C^{-1}(\ell)$ as follows. 

\begin{align*}
    \mu(A_\ell)&=\sum_{i=1}^n \Prob{X_i \text{ falls in }A_\ell} \\
    &= \sum_{i=1}^n \int_{0}^{\tau_C^{-1}(\ell)}\Prob{X_i \geq \tau_C(t)} \dif t \\
    &= \sum_{i=1}^n \int_{0}^{C^{-1}(\sum_{i=1}^n \Prob{X_i \geq \ell})}\Prob{X_i \geq \tau_C(t)} \dif t \\
        &= \sum_{i=1}^n \int_{0}^{C^{-1}(\mu(U_\ell))}\Prob{X_i \geq \tau_C(t)} \dif t \\
&= \int_0^{C^{-1}(\mu(U_\ell))} \sum_{i=1}^n \Prob{X_i \geq \tau_C(t)} \dif t \\ 
& = \int_0^{C^{-1}(\mu(U_\ell))} C(t) \dif t. 
\end{align*}

Consider the time $t^\ast$ where the algorithm accepts a value. First, consider if $t^\ast \in [0, C^{-1}(\mu(U_\ell))]$. In that case, the algorithm accepts a value above $\ell$ if and only if the region $A_\ell$ is non-empty (i.e., contains a realization). This happens with probability 
\begin{equation}
    1-e^{-\mu(A_\ell)} = 1 - e^{-\int_{0}^{C^{-1}(\mu(U_\ell))} C(t) \dif t} \eqlab{first:of:2:iid:proof}  .
\end{equation}

On the other hand, if $t^\ast \in [C^{-1}(\mu(U_\ell)), 1]$, the algorithm accepts a value above $\ell$ at time $t^\ast$ if and only if the area from time $0$ to time $t^\ast$ above $\tau(t)$ is empty (no realizations arrive in that region) and the interval $[t^\ast, t^\ast + \dif t^\ast]$ witnesses a realization above $\ell$. This probability for this happening is
\begin{equation}
   e^{-\int_{0}^{t^\ast} C(t) \dif t} \mu(U_\ell) \dif t^\ast. \eqlab{second:of:2:iid:proof}
\end{equation}
Combining \Eqref{first:of:2:iid:proof} and integrating \Eqref{second:of:2:iid:proof} over $t^\ast \in [C^{-1}(\mu(U_\ell)), 1]$, it follows correspondingly that the probability that $\Alg$ exceeds $\ell$ is expressed as
\begin{equation*}
    \Prob{\Alg \geq \ell} = 1 - e^{-\int_{0}^{C^{-1}(\mu(U_\ell))} C(t) \, \dif t} + \int_{C^{-1}(\mu(U_\ell))}^1 \mu(U_\ell) e^{-\int_{0}^{t^\ast} C(y) \dif y} \dif t^\ast.
\end{equation*}

Combining this with the value of $\Prob{Z \geq \ell}$, we have proved the inequality. 

\begin{equation*}
    \Prob{\Alg \geq \ell} \geq  \frac{ \left( 1 - e^{-\int_{0}^{C^{-1}(\mu(U_\ell))} C(t) \, \dif t} + \int_{C^{-1}(\mu(U_\ell ))}^1 \mu(U_\ell ) e^{-\int_{0}^{t^\ast} C(y) \dif y} \dif t^\ast \right) }{1-e^{-\mu(U_\ell)}} \Prob{Z \geq \ell}.
\end{equation*}
Letting $\ell' = \mu(U_\ell) \in (0, C(1)]$, this rewrites into the second main inequality
\begin{equation}
    \Prob{\Alg \geq \ell} \geq  \pth{\frac{1-e^{-\int_{0}^{C^{-1}(\ell')} C(t) \dif t } + \int_{C^{-1}(\ell')}^1 \ell' e^{- \int_{0}^t C(y) \dif y } \dif t}{1-e^{-\ell'}}} \Prob{Z \geq \ell}. 
\eqlab{iid:case2:bound}
\end{equation}

By applying the majorization technique discussed earlier, combining \Eqref{iid:case1:bound} and minimizing \Eqref{iid:case2:bound} for $\ell'\in (0, C(1)]$, we establish a lower bound for the competitive ratio $c$ as follows
\begin{align*}
    c &\geq \min \left\{ 1-e^{-\int_0^1 C(t) \dif t}, \inf_{0 < \ell' \leq C(1)}  \pth{\frac{1-e^{-\int_{0}^{C^{-1}(\ell')} C(t) \dif t } + \int_{C^{-1}(\ell')}^1 \ell' e^{- \int_{0}^t C(y) \dif y } \dif t}{1-e^{-\ell'}}}  \right\}.
\end{align*}

\end{proof}
Simple curves $C$ achieve good competitive ratios for \Eqref{monstrosity}. Recall that the optimal threshold algorithm for the \IID case achieves a competitive ratio of approximately $0.745$.

If we consider simple step function curves---specifically, using $\Xi(c_1)$ for some constant $c_1$ from time $0$ to $\frac{1}{m}$, transitioning to $\Xi(c_2)$ for some constant $c_2$ from time $\frac{1}{m}$ to $\frac{2}{m}$, and continuing in this manner---allows us to evaluate the expression in \Eqref{monstrosity}. This is because it simplifies the evaluation to a mere summation, as the integrals are transformed into summations. With $m=10$, we show that there exists a curve that yields a competitive ratio of approximately $0.7406$ for appropriately chosen $c_1, ...., c_{10}$, almost matching the \IID ratio of approximately $0.745$. See \apdxref{B} for the code and exact values of $c_1, ..., c_{m}$ we use.. Nevertheless, we demonstrate that a function $C^\ast$ exists that analytically achieves an exact competitive ratio of $\approx 0.745$.

\begin{lemma}
    There exists a threshold function $C^\ast(t)$ that gives a competitive ratio of $c \approx 0.745$ for the \IID prophet inequality.
\end{lemma}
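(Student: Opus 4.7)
My plan is to pin down $C^\ast$ by demanding that both bounds in \Eqref{monstrosity} coincide with a single constant $c$, which reduces the problem to a first-order ODE. Writing $G(u) = \int_0^u C^\ast(t)\dif t$, the Case~1 bound equals $c$ iff $G(1) = -\ln(1-c)$. For Case~2, I would require that for every $\ell' \in (0, C^\ast(1)]$, with $u = (C^\ast)^{-1}(\ell')$,
\[
1 - e^{-G(u)} + C^\ast(u)\int_u^1 e^{-G(t)}\dif t \;=\; c\bigl(1 - e^{-C^\ast(u)}\bigr).
\]
Differentiating in $u$, using $G' = C^\ast$, and dividing through by $(C^\ast)'(u)$, I obtain the auxiliary identity $\int_u^1 e^{-G(t)}\dif t = c\,e^{-C^\ast(u)}$; substituting this back into the display and simplifying removes the integral and leaves the scalar ODE
\[
(C^\ast)'(u) \;=\; \frac{1-c}{c}\,e^{C^\ast(u)} + C^\ast(u) + 1, \qquad C^\ast(0) = 0.
\]

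The next step is to select $c$ so that this ODE has a solution on $[0,1)$ that blows up as $u \to 1^-$. Since the right-hand side grows like $e^{C^\ast(u)}$, a standard comparison argument shows the blow-up time is a continuous, strictly increasing function of $c$: small $c$ causes blow-up before $u = 1$, while large $c$ leaves $C^\ast(1)$ finite. Hence there is a unique critical $c^\ast$ for which the blow-up occurs exactly at $u=1$, and this automatically enforces the Case~1 compatibility $G(1) = -\ln(1-c^\ast)$. To identify $c^\ast$ numerically as $\approx 0.745$ I would either run a bisection with numerical ODE integration, or perform the quantile change of variables $\alpha(u) = e^{-G(u)}$ (so that $C^\ast(u) = -\alpha'(u)/\alpha(u)$), which recasts the ODE into the known characterization of the tight \IID constant $\beta$ from \cite{cfhov-ppmot-21}. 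Because the Poissonization of \lemref{replace:poisson} captures the same limiting regime as their analysis, this equivalence is exactly what one would expect.

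The main obstacle is certifying that the critical $c^\ast$ coincides with the tight constant $\approx 0.7451$; the ODE does not appear to admit a closed form in elementary functions, so a purely analytical derivation of the value seems hard. The cleanest route is to carry out the substitution $\alpha(u) = e^{-G(u)}$ and match the resulting equation, after algebraic manipulation, with the defining integral equation of $\beta$ in \cite{cfhov-ppmot-21}, thereby concluding $c^\ast = \beta$. Numerical bisection over $c$ combined with integration of the ODE up to (but not including) the blow-up time provides independent confirmation, as well as a concrete plot of the optimal continuous threshold curve $C^\ast$.
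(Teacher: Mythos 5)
Your proposal is correct and follows essentially the same route as the paper: both enforce tightness of the Case~2 bound in \Eqref{monstrosity} for all thresholds, differentiate to eliminate the integral, and arrive at an ODE whose critical parameter must be identified with the tight \IID constant $\approx 0.745$ by appeal to prior work. The paper performs the minimization over $\ell'$ first and writes the resulting equation in terms of $g(\tau)=\frac{1}{c}\int_\tau^1 e^{-\int_0^x C(y)\dif y}\dif x$, recovering $C^\ast=-g''/g'$, while you write the equivalent first-order ODE $(C^\ast)' = \frac{1-c}{c}e^{C^\ast}+C^\ast+1$ directly; both deductions are sound (one can check your auxiliary identity forces $e^{-G(1)}=1-c$ in the blow-up limit, giving the Case~1 compatibility you claim), and both defer the final numerical/analytic identification of $c^\ast$ to \cite{s-couup-18} (equivalently \cite{cfhov-ppmot-21}), which you also honestly flag as the remaining obstacle.
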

 
\begin{proof}
We relax the optimization from \Eqref{monstrosity}. Let $\tau = C^{-1}(\ell') \in [0, 1]$, and define 
\[
\phi(\tau, \ell') = 1-e^{-\int_0^\tau C(t) \dif t} + \int_{\tau}^1 \ell' e^{-\int_0^t C(y) \dif y} \dif t - c (1-e^{-\ell'}).
\]

We relax the optimization to requiring 
\begin{equation*}
    \inf_{\substack{0\leq \tau\leq 1 \\ 0< \ell'}} \phi(\tau, \ell') \geq 0. 
\end{equation*}
We first optimize for $\ell' > 0$. Define $g(z) = \frac{1}{c}\int_z^1 e^{-\int_0^x C(y) \dif y} \dif x$. Then $g'(z)=-\frac{1}{c}e^{-\int_0^z C(y) \dif y}$. Then we have 
\[
\frac{\partial \phi}{\partial \ell' } = \int_\tau^1 e^{-\int_0^x C(y) \dif y}\dif x - ce^{-\ell'} = cg(\tau) - ce^{-\ell'}.
\]
Setting this to $0$, and substituting into $\phi$, we derive 
\[
\Phi(\tau) = \min_{\ell' > 0} \phi(\tau, \ell') =  1+cg'(\tau) - c\log(g(\tau)) g(\tau) - c + c g(\tau).
\]
The remainder of the proof follows \cite{s-couup-18} in showing that the differential equation $\Phi(\tau)=0$ for $\tau \in [0, 1]$ is satisfied for $c\approx 0.745$ (the \IID constant) for some $g^\ast(.)$. Finally, we have 
\[
C^\ast(z) = - \frac{\partial^2 g^\ast / \partial z^2}{\partial g^\ast / \partial z}. 
\]
The function $C^\ast(t)$ for $c=0.74544$ is shown in \figref{Cx}. 
\end{proof}

\begin{figure}\centering
 \includegraphics[width=0.5\textwidth]{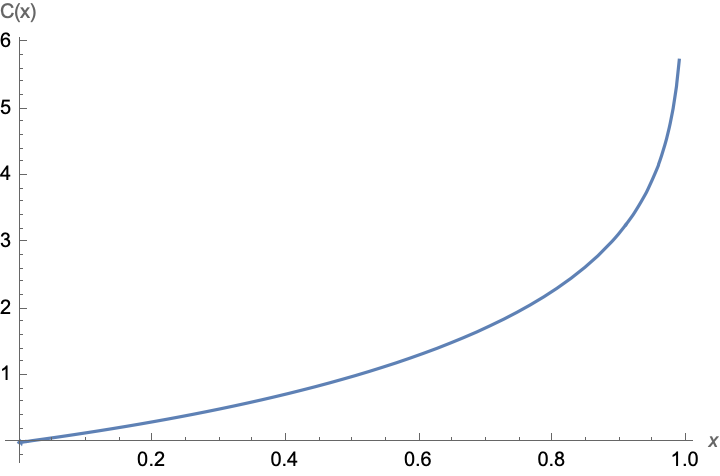}
 \caption{The function $C^\ast(x)$ for $c=0.745$ solved using numerical methods, truncated at $x=0.99$.} 
 \figlab{Cx}
 \end{figure}

\paragraph{Independence of $n$} This above section shows that algorithms that are based on thresholds of the form $\sum_i \Prob{X_i \geq \tau} =q_i $ are comparable to algorithm that choose their thresholds based on the maximum distribution (i.e., quantiles of $Z$), at least for the \IID case. One interesting fact about the result above is that the curve is independent of $n$. This is because we are approximating a continuous curve, that is independent of $n$. In particular, the $m=10$ thresholds holds for all sufficiently large $n$. 

\newpage 
\section{Sharding Warmup: Standard Prophet Inequalities.}
\seclab{sharding}

\paragraph{Sharding without Time of Arrival} Returning to the non-\IID scenario, our goal is to apply strategies akin to those used for the \IID case, which rely on summation thresholds. In the \IID context, small probabilities $\Prob{X_i \geq \tau_k}$ allow for the application of Poissonization. However, this does not carryover to the non-\IID setting. This is because of ``superstars'', random variables $X_i$ with ``large'' $\Prob{X_i \geq \tau_k}$. Indeed, it is no longer sufficient to use a Poisson distribution to count the number of arrivals in a region because of the non \IID nature of the random variables. What can we do then?

The main idea is to think about ``breaking'' each random variable $X_i$ with \CDF $F_i$ into $K$ \emph{shards}. More formally, we consider the  \IID random variables $Y_{i,1}, \ldots , Y_{i,K}$ with \CDF $F_i^{1/K}$\footnote{This is a valid \CDF because $F^{1/K}(-\infty)=0, F^{1/K}(\infty)=1$ and $F^{1/K}$ is still monotonic for positive integer $K$.}. This is an idea that was implicitly used in \cite{ethlm-psmpc-19}. One can easily see that the distribution of $\max(Y_{i,1}, \ldots, Y_{i,K})$ is the same as $X_i$, and so sampling from $X_i$ is the same as sampling from the shards, and taking the maximum-valued shard as the representative for $X_i$.

An important nuance concerning shards is that the shard with the maximum value in $\{Y_{i, j}\}_{\substack{1\leq i \leq n\\ 1\leq j \leq K}}$ must corresponds to an \textit{actual realization} of some $X_i$. This is because no other shard surpasses it, meaning some $X_i$ would adopt its value. Conversely, not all shards represent actual realizations of $X_i$; some may be dominated by other shards from the same variable.

As $K$ approaches infinity, the likelihood of a shard exceeding a threshold diminishes since $1 - F^{1/K}(\tau)$ approaches zero\footnote{We will assume without loss of generality that $F_i(\tau) > 0$ for all $i$. If $F_i(\tau) = 0$, then $K(1 - \Prob{X_i \leq \tau}^{1/K}) = K$. However, for our analysis, we require that $\lim_{K \to \infty} \sum_i K(1 - \Prob{X_i \leq \tau}^{1/K}) = q$ for a \textbf{constant} $q$. This condition cannot be satisfied if $F_i(\tau) = 0$.}. Consequently, the coupling argument for the \IID scenario remains applicable here. By revisiting the argument from  \lemref{sharding-break}, a similar conclusion regarding the variational distance approaching zero as $K \to \infty$ can be drawn, irrespective of $n$. However, the relationship between summation based thresholds on the \textit{shards} $\{Y_{i,j}\}$ and maximum-based thresholds for the actual realizations $\{X_i\}$ is not clear. The connection is made in the following lemma.

\begin{lemma}
\lemlab{sharding-break}
    Let $\tau$ be a summation based threshold on the \textbf{shards} such that \begin{equation*}
        \sum_{i=1}^n \sum_{j=1}^K \Prob{Y_{i,j} \geq \tau} = q. 
    \end{equation*} Then as $K\to \infty$, we have $\Prob{Z \leq \tau} = e^{-q}$.
\end{lemma}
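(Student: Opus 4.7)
The plan is to unwind the definition of shards and then take a pointwise limit as $K\to\infty$ inside a finite sum. Since each shard $Y_{i,j}$ has \CDF $F_i^{1/K}$, I would start by writing
\begin{equation*}
    \sum_{i=1}^n \sum_{j=1}^K \Prob{Y_{i,j} \geq \tau} = \sum_{i=1}^n K\bigl(1 - F_i(\tau)^{1/K}\bigr) = q,
\end{equation*}
using that the inner sum over $j$ collapses because the $Y_{i,j}$ are \IID across $j$ for fixed $i$.

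Next I would analyze $\lim_{K\to\infty} K(1 - F_i(\tau)^{1/K})$ for each fixed $i$. Writing $F_i(\tau)^{1/K} = \exp(\log F_i(\tau)/K)$ and applying the Taylor expansion $e^x = 1 + x + O(x^2)$ around $x=0$ (valid since $\log F_i(\tau)/K \to 0$ as $K\to\infty$, given $F_i(\tau)>0$ as assumed in the footnote), I would obtain
\begin{equation*}
    K\bigl(1 - F_i(\tau)^{1/K}\bigr) = K\cdot\bigl(-\log F_i(\tau)/K + O(1/K^2)\bigr) \longrightarrow -\log F_i(\tau).
\end{equation*}
Because the outer sum has a fixed number $n$ of terms (independent of $K$), the limit passes inside the sum with no subtlety, yielding $\sum_{i=1}^n -\log F_i(\tau) = q$ in the limit.

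Finally, I would invoke the independence of the $X_i$'s together with continuity of the exponential/logarithm to write
\begin{equation*}
    \Prob{Z \leq \tau} = \prod_{i=1}^n \Prob{X_i \leq \tau} = \prod_{i=1}^n F_i(\tau) = \exp\!\Bigl(\sum_{i=1}^n \log F_i(\tau)\Bigr) = e^{-q}.
\end{equation*}
The only step that requires any care is the Taylor expansion limit, but this is a one-line calculation; there is no real obstacle here, since the finiteness of $n$ avoids the usual issues of interchanging limit and sum. The lemma is essentially a restatement of the fact that summation-based thresholds on the shards and maximum-based thresholds on the $X_i$'s coincide via the identity $\lambda_\tau = -\log \Prob{Z\leq \tau}$ observed informally in the framework overview.
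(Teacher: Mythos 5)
Your proof is correct and follows essentially the same route as the paper's: collapse the inner sum using that the shards are \IID across $j$ for fixed $i$, compute $\lim_{K\to\infty}K(1-x^{1/K})=-\log x$ (you use a Taylor expansion of $e^x$ where the paper uses L'H\^opital's rule, but these are interchangeable one-liners), and then pass to $\Prob{Z\leq\tau}=\prod_i F_i(\tau)=e^{-q}$. There is no substantive difference.
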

\begin{proof}
Because $Y_{i,j}$ are \IID for fixed $i$, then we have $\sum_{i=1}^n K\Prob{Y_{i,1} \geq \tau} =q$. However, recall that $\Prob{Y_{i,1} \geq \tau}=1-\Prob{X_i \leq \tau}^{1/K}$. Hence, we are choosing a threshold such that 
\[
\sum_{i=1}^n K(1-\Prob{X_i \leq \tau}^{1/K}) =q.
\]
What happens when we take $K\to \infty$? The limit of $K(1-x^{1/K})$ as $K\to \infty$ can be evaluated with L'Hôpital's rule
\begin{align*}
        \lim_{K \to \infty} K(1-x^{1/K}) &= \lim_{K \rightarrow \infty} K(1 - x^{1/K}) \\
        & = \lim_{K \rightarrow
       \infty} \frac{1 - \exp(\log(x)/K)}{1/K} \\
       & = \lim_{K \rightarrow
       \infty} \frac{ \log(x)\exp(\log(x)/K)/K^2 }{-1/K^2}\\ 
       & = - \log
    x.
\end{align*}
And so we have that for $K\to \infty$, $\sum_{i=1}^n -\log \Prob{X_i \leq \tau} = q$. This implies $-\log \Prob{Z\leq \tau} = q$. In other words, we chose a threshold such that $\Prob{Z\leq \tau} = e^{-q}$.
\end{proof}

Hence, we retrieve maximum based thresholds, but with a twist: we now have an alternative view in terms of shards.  Specifically, if we choose maximum-based threshold $\tau_j$ such that $\Prob{Z\leq \tau_j}=\alpha_j$, then the number of \textbf{shards} above $\tau_j$ follows a Poisson distribution with rate $\log \frac{1}{\alpha_j}$. This is only possible because the probability of each shard being above $\tau_j$ is small (i.e., $\to 0$ as $K\to \infty$).  

To signify the importance of this view and to warmup, we reprove several known results in the literature with this new point of view. None of these results are \textbf{needed} for our new results, however, they provide a much needed warmup for the sharding machinery.  \textit{We crucially emphasize here that sharding is only done for the analysis of the algorithm. }

\paragraph{Sharding with time of arrival} Consider the scenario where the variables $X_1, \ldots, X_n$ are associated with times of arrival. In this variant, each shard independently selects a random time of arrival, uniformly distributed over the interval $[0, 1]$. The arrival time of $X_i$ is determined by the arrival time of the shard with the maximum value. Given the independence in the selection of arrival times by each shard, it consequently follows that the arrival times of $X_i$ are independently determined. This independence allows the extension of the sharding analysis to scenarios incorporating time of arrival, such as the prophet secretary problem. An example is provided to clarify this extension.

\begin{lemma}\label{eta8}
\lemlab{prooffromthebook}
    For the prophet secretary problem, consider the single threshold algorithm that chooses $\tau$ such that $\Prob{Z\leq \tau}=1/e$ and accepts the first value (if any) above $\tau$. Then the algorithm has a $1-1/e$ competitive ratio. 
\end{lemma}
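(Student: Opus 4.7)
Plan: I apply sharding-with-times-of-arrival to $X_1, \ldots, X_n$. Since $\tau$ is chosen so that $\Prob{Z \leq \tau} = 1/e$, after taking $K \to \infty$ the shards above $\tau$ form a Poisson process on the time interval $[0,1]$ with rate $\lambda_\tau = -\log(1/e) = 1$. In particular, the probability that no shard above $\tau$ arrives during $[0,t]$ equals $e^{-t}$. I then apply stochastic dominance, comparing $\Prob{\Alg \geq \ell}$ against $\Prob{Z \geq \ell}$ in two ranges of $\ell$.

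For $\ell \in [0, \tau]$ the argument is immediate: the algorithm's reward exceeds $\ell$ whenever some $X_i \geq \tau$, which happens with probability $1 - \Prob{Z \leq \tau} = 1 - 1/e$. Pairing this with the trivial bound $\Prob{Z \geq \ell} \leq 1$ gives ratio at least $1 - 1/e$ on this range.

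For $\ell > \tau$, let $p_i = \Prob{X_i \geq \tau}$ and $q_i = \Prob{X_i \geq \ell}$. Decomposing over which $X_i$ is accepted and its arrival time, and using independence of the $X_i$'s and of their arrival times,
\begin{equation*}
    \Prob{\Alg \geq \ell} \;=\; \sum_{i=1}^n q_i \int_0^1 \prod_{j \neq i}(1 - t p_j)\, \dif t.
\end{equation*}
The key sharding step is to bound the inner product below by $e^{-t}$. Observe that $\prod_j(1 - t p_j)$ is precisely the probability that no $X_j$ arrives above $\tau$ during $[0,t]$. Any $X_j$ that arrives above $\tau$ at time $t_j$ contributes a shard above $\tau$ (its max-valued shard) at time $t_j$, so the event ``no shard above $\tau$ in $[0,t]$'' is contained in ``no $X_j$ above $\tau$ in $[0,t]$''. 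Therefore
\begin{equation*}
    \prod_{j \neq i}(1 - t p_j) \;\geq\; \prod_j (1 - t p_j) \;\geq\; e^{-t}.
\end{equation*}
Substituting and carrying out the elementary integration $\int_0^1 e^{-t}\, \dif t = 1 - 1/e$ gives $\Prob{\Alg \geq \ell} \geq (1 - 1/e) \sum_i q_i$, and the union bound $\Prob{Z \geq \ell} \leq \sum_i q_i$ closes the majorization.

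The only point requiring care (rather than a real obstacle) is justifying that the sharding Poisson process supplies the identity ``probability of no shard above $\tau$ in $[0,t]$ equals $e^{-t}$''. This follows directly from the arrival-time extension of the sharding framework already established in the paper, together with the fact that $\lambda_\tau = 1$ by the choice of $\tau$. After that step the proof collapses to a one-line integration and a union bound, matching the ``proof from the book'' description.
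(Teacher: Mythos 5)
Your proof is correct, and Case 2 takes a genuinely different route from the paper's. The paper stays entirely inside the shard picture: it conditions on the number $\beta$ of shards in the band $[\tau,\ell]$, requires the top shard of the region above $\ell$ to arrive before all of them (probability $\frac{1}{\beta+1}$), and sums the resulting Poisson series to get the ratio $\frac{e-e^{\ell'}}{e-e\ell'}$, which it then minimizes over $\ell'\in(0,1]$ to recover $1-\frac{1}{e}$. You instead write the exact acceptance probability as a sum over which $X_i$ is taken, $\sum_i q_i \int_0^1 \prod_{j\neq i}(1-tp_j)\,\dif t$, lower-bound the product by $e^{-t}$, integrate, and close with the union bound $\Prob{Z\geq\ell}\leq\sum_i q_i$. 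Your approach avoids the infinite series and the final one-parameter minimization, at the cost of giving a uniformly $(1-\frac{1}{e})$ bound in Case 2 where the paper's expression is strictly larger for $\ell'>0$ (irrelevant here since only the worst case matters). One remark: sharding is not actually needed for your key inequality $\prod_j(1-tp_j)\geq e^{-t}$ — it follows directly from concavity of $t\mapsto\log(1-tp_j)$, which gives $\log(1-tp_j)\geq t\log(1-p_j)$, together with $\prod_j(1-p_j)=\Prob{Z\leq\tau}=e^{-1}$ — so your argument is, if anything, more elementary than the paper's and does not rely on the $K\to\infty$ coupling at all.
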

\begin{proof}
    We shard the $n$ random variables. Invoking \lemref{sharding-break}, we establish that \begin{equation*}
        \lim_{K\to \infty}\sum_{i=1}^n \sum_{j=1}^K \Prob{Y_{ij} \geq \tau} = q = -\log(1/e) = 1.
    \end{equation*}

\paragraph{Case 1: $\ell \in [0, \tau]$.} The condition for the algorithm to accept a value $\geq \ell$ is met if at least one shard surpasses $\tau \geq \ell$. Consequently,
    \begin{equation}
            \frac{ \Prob{\Alg \geq \ell}}{\Prob{Z\geq \ell}} \geq \Prob{\Alg \geq \ell} = 1-e^{-q}=1-\frac{1}{e}. \eqlab{single:threshold:prophet:sec:1}
    \end{equation}

\paragraph{Case 2: $\ell \in [\tau, \infty)$.} Consider the scenario where the region $A = [0, 1] \times [\tau, \ell]$ contains $\beta$ shards, and region $B = [0, 1] \times [\ell, \infty)$ contains at least one shard. If the highest value shard in $B$ arrives before all $\beta$ shards in region $A$, the algorithm will accept a value exceeding $\ell$. This is because the highest value shard in $B$ represents an actual realization of some $X_i$, and potentially, all $\beta$ shards in $A$ could correspond to realizations of some $X_j$ that prevent us from selecting a value above $\ell$. Let $\ell' = \sum_{i=1}^n \sum_{j=1}^K \Prob{Y_{ij} \geq \ell}$. The Poisson rate of shards in $A$ is $q - \ell' = 1-\ell'$, while that in $B$ is $\ell'$. Thus,
    \begin{equation}
            \frac{ \Prob{\Alg \geq \ell}}{\Prob{Z\geq \ell}} = \frac{\Prob{\Alg \geq \ell}}{1-e^{-\ell'}} \geq \frac{\sum_{\beta=0}^{\infty} (1-e^{-\ell'})e^{-(q-\ell')} \frac{(q-\ell')^\beta}{\beta!} \frac{1}{\beta+1}  }{1-e^{-\ell'}} = \frac{e-e^{\ell'}}{e-e\ell' } \eqlab{single:threshold:prophet:sec:2}. 
    \end{equation}
The expression on the right-hand side of \Eqref{single:threshold:prophet:sec:2} is an increasing function in $\ell' \in (0, 1]$, reaching its minimum for $\ell' \to 0$, with a value of $1 - \frac{1}{e}$. By stochastic dominance, and combining \Eqref{single:threshold:prophet:sec:1} and \Eqref{single:threshold:prophet:sec:2}, we conclude that the algorithm has a competitive ratio of $1 - \frac{1}{e}$.
\end{proof}

\begin{figure}
\centering
\begin{subfigure}{.5\textwidth}
    \centering
    \includegraphics[width=0.8\textwidth]{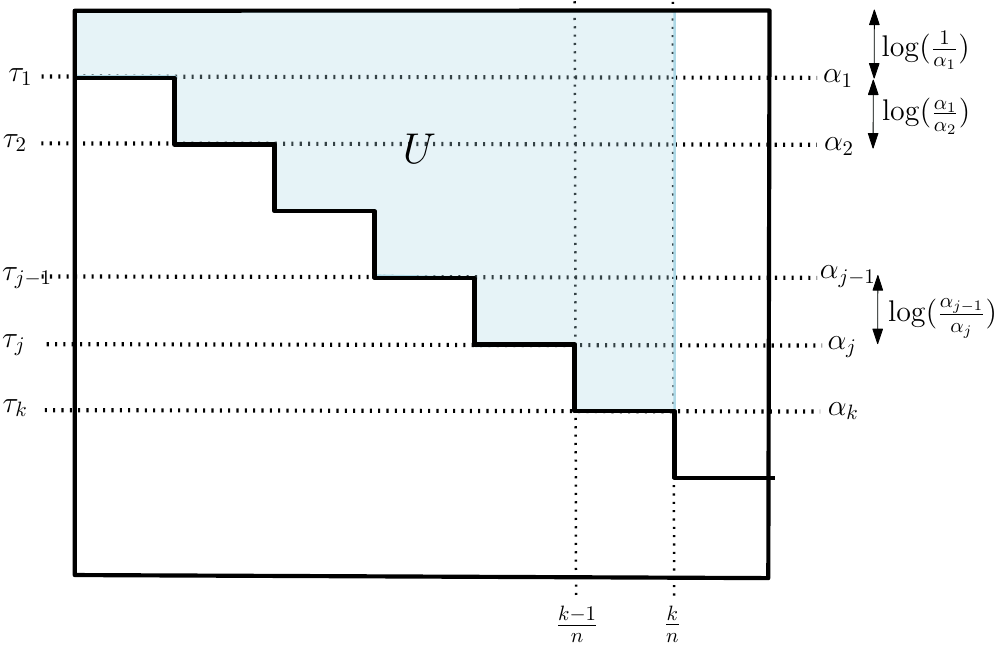}
    \caption{Proof of \lemref{simplified-upper}. The region $U$ is the light\\   blue region. }
    \figlab{simplified-upper}
\end{subfigure}
\begin{subfigure}{.5\textwidth}
    \centering
    \includegraphics[width=0.8\textwidth]{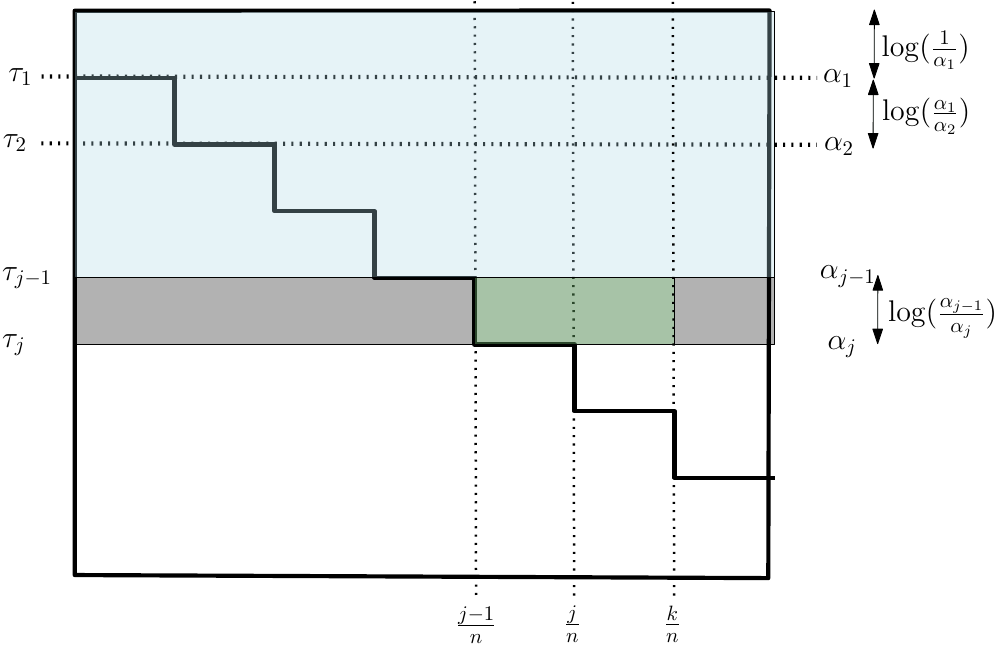}
    \caption{The light blue region is $A_j$, the gray (and green) region is $B_j$, and the green region is where $(t^\ast, v^\ast)$ arrives.}
    \figlab{lower-bound}
\end{subfigure}
\label{fig:test}
\caption{}
\end{figure}

Next, we re-prove the following results that were proven in \cite{csz-pstbs-18} for the prophet secretary variant via a nontrivial argument that applies a Schur-convexity inequality an infinite number of times. The short proof below establishes the same results via the new shards point of view. 

\begin{lemma}{\cite{csz-pstbs-18}}\label{eta9}
\lemlab{simplified-upper}
    Let $T\in [n]$ be a random variable for the time that the algorithm following thresholds $\tau_1 \geq \ldots \geq \tau_n$ selects a value (if any) with $\Prob{Z\leq \tau_j}=\alpha_j$ for the prophet secretary problem. Then for any $k\in [n]$
    \[
    \Prob{T>k} \geq  \pth{\prod_{j=1}^k \alpha_j }^{1/n}.
    \]
\end{lemma}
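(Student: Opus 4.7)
The plan is threefold: express $\Prob{T>k}$ as an average over uniform random permutations, apply AM--GM (equivalently, Jensen's inequality) to lower-bound that average by a geometric mean, and finally recognize the resulting quantity as the Poissonized ``no-shards-in-$U$'' probability for the step region depicted in \figref{simplified-upper}.

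First, I would condition on the uniform arrival order $\sigma \in \mathbb{S}_n$. Since the thresholds $\tau_j$ are deterministic and $X_1,\ldots,X_n$ are independent, the event $T>k$ is precisely the event that $X_{\sigma(j)} < \tau_j$ for all $j \leq k$, so
\[
\Prob{T>k} \;=\; \frac{1}{n!}\sum_{\sigma\in\mathbb{S}_n}\prod_{j=1}^k F_{\sigma(j)}(\tau_j),
\]
where $F_i$ is the CDF of $X_i$. Applying Jensen's inequality with the convex function $-\log$,
\[
\Prob{T>k} \;\geq\; \exp\!\left(\frac{1}{n!}\sum_{\sigma}\sum_{j=1}^k\log F_{\sigma(j)}(\tau_j)\right).
\]
Because $\sigma(j)$ is marginally uniform on $[n]$, a simple count gives $\frac{1}{n!}\sum_\sigma\log F_{\sigma(j)}(\tau_j) = \frac{1}{n}\sum_i\log F_i(\tau_j) = \frac{1}{n}\log\alpha_j$, using $\alpha_j=\prod_i F_i(\tau_j)$. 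Summing over $j \leq k$ and exponentiating yields the claim $\Prob{T>k} \geq \bigl(\prod_{j=1}^k \alpha_j\bigr)^{1/n}$.

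To tie this back to the shards framework and match the light-blue region in \figref{simplified-upper}, I would shard each $X_i$ into $K\to\infty$ copies with iid uniform arrival times and invoke \lemref{sharding-break}: the total shard count in the step region $U = \bigcup_{j=1}^k [(j-1)/n,\,j/n]\times[\tau_j,\infty)$ is Poisson with rate $\sum_{j=1}^k \tfrac{1}{n}(-\log\alpha_j)$, so $\Prob{\text{no shard lies in }U} = \prod_{j=1}^k \alpha_j^{1/n}$. This identifies the AM--GM bound with a concrete probabilistic event on the shards and is what the picture captures.

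The obstacle I anticipate is resisting the temptation to prove the \emph{pointwise} implication ``no shard in $U$ $\Rightarrow$ $T>k$'', which in fact fails: ``no shard in $U$'' constrains shards by time slot, whereas $T>k$ constrains realizations by rank, and these two disagree whenever the empirical arrival times stray from their expected positions $j/n$. The right mental model is that $U$ is used only to interpret $\prod_j \alpha_j^{1/n}$ as a natural Poisson probability in the shards picture, while the AM--GM step does the actual work --- cleanly sidestepping the infinite Schur--convex minimization used in \cite{csz-pstbs-18}.
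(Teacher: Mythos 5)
Your proof is correct, but it takes a genuinely different route from the paper's. The paper stays entirely inside the sharding framework: it defines the staircase region $U=\{(t,v): 0\le t\le k/n,\ v\ge \tau_{\lceil tn\rceil}\}$, argues that the event ``no shards in $U$'' implies ``no realizations in $U$'' (hence $T>k$ for the arrival-time-based reading of the algorithm), and then computes the Poisson rate $\mu(U)=\frac1n\log\bigl(1/\prod_{j\le k}\alpha_j\bigr)$ by a telescoping sum, giving $e^{-\mu(U)}=\bigl(\prod_{j\le k}\alpha_j\bigr)^{1/n}$ as a single pointwise event probability. You instead condition on the arrival permutation, write $\Prob{T>k}=\frac{1}{n!}\sum_\sigma\prod_{j\le k}F_{\sigma(j)}(\tau_j)$, and apply Jensen/AM--GM together with $\prod_i F_i(\tau_j)=\alpha_j$; this is fully elementary, needs no Poissonization at all, and applies directly to the rank-indexed $T\in[n]$ of the lemma statement. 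Your closing observation is sharp and worth keeping in mind: the pointwise implication ``no shard in $U$ $\Rightarrow$ $T>k$'' is exactly what the paper \emph{does} assert, and it is valid only under the time-based interpretation of the algorithm (thresholds indexed by arrival time $t$, which is how the lemma is actually deployed in \lemref{prophet:secretary:improvement}), not under the rank-based indexing; your Jensen step is precisely what closes that gap for the rank-based formulation, at the cost of giving an averaged rather than an event-wise bound. In short: the paper buys a concrete probabilistic event on shards that plugs into the later geometric analysis, while your argument buys a shorter, assumption-free proof of the stated inequality.
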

\begin{proof}
    Refer to \figref{simplified-upper} throughout this proof. Define the descending threshold function $\tau(t) = \tau_{\lceil tn \rceil }$. Let $U=\{(t, v) \mid 0\leq t\leq k/n, v \geq \tau(t)\}$. The condition $T>k$ is equivalent to the absence of realizations (in terms of $X_i$) within $U$. Consider the event $\xi$, characterized by the absence of \emph{shards} in $U$. This event implies the absence of realizations (in terms of $X_i$s) in $U$, and thus $\Prob{T>k} \geq \Prob{\xi}$. Setting $\alpha_0=1$, the measure of the region $U$ can be expressed as a telescoping sum
\begin{align*}
    \mu(U) &= \sum_{i=1}^k \frac{k-i+1}{n}\left( \log\left(\frac{1}{\alpha_i}\right) - \log\left(\frac{1}{\alpha_{i-1}}\right)\right) \\
    &= \sum_{i=1}^k \frac{1}{n} \log\left(\frac{1}{\alpha_i}\right)\\
    &= \frac{1}{n} \log \left(\frac{1}{\prod_{i=1}^k \alpha_i }\right). 
\end{align*}
Hence, the probability of event $\xi$ is given by:
\[
\Prob{\xi} = e^{-\mu(U)} = \left(\prod_{i=1}^k \alpha_i\right)^{1/n}. 
\]
\end{proof}

\cite{csz-pstbs-18} also prove the following inequality. We can also prove the same inequality via an event on the \textbf{shards} that implies $T\leq k$ and whose probability is the RHS. 
\begin{lemma}
\lemlab{discrete_blind_lb}{\cite{csz-pstbs-18}}
    Let $T\in [n]$ be a random variable for the time that the algorithm following thresholds $\tau_1 \geq \ldots \geq \tau_n$ selects a value (if any) with $\Prob{Z\leq \tau_j}=\alpha_j$ for the prophet secretary problem. Then for any $k\in [n]$
    \[
    \Prob{T\leq k}\geq \frac{1}{n}\sum_{j=1}^k \pth{1-\alpha_j}.
    \]
\end{lemma}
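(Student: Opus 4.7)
The plan is to exhibit a single event on the shards whose probability is exactly the right-hand side and which implies $T\leq k$. The key observation is that because every shard independently draws an arrival time uniformly from $[0,1]$ that is independent of its value, the \emph{global} maximum-value shard has an arrival time that is uniform on $[0,1]$ and independent of its value. Since $X_i=\max_{j'}Y_{i,j'}$, we have $Z=\max_i X_i=\max_{i,j'}Y_{i,j'}$, i.e.\ $Z$ equals the value of the global max shard, and $X_{i^\ast}$'s arrival time (the argmax variable) coincides with the arrival time of its top shard, which is the global max shard. Hence $T_Z\sim \Uniform(0,1)$ is independent of $Z$.

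For each $j\in[k]$, I would define the shard event
\[
\xi_j \;=\; \brc{\,T_Z\in ((j-1)/n,\,j/n]\ \text{ and }\ Z\geq \tau_j\,},
\]
that is, the event that the global max shard falls in the box $W_j=((j-1)/n,j/n]\times[\tau_j,\infty)$. By independence of $T_Z$ and $Z$,
\[
\Prob{\xi_j} \;=\; \tfrac{1}{n}\,\Prob{Z\geq \tau_j}\;=\;\tfrac{1-\alpha_j}{n}.
\]
The events $\xi_1,\ldots,\xi_k$ are pairwise disjoint because the time intervals for $T_Z$ are disjoint, so $\Prob{\bigcup_{j=1}^k \xi_j}=\sum_{j=1}^k(1-\alpha_j)/n$, which already matches the RHS exactly.

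It remains to verify $\xi_j\Rightarrow T\leq k$. When $\xi_j$ occurs, the argmax realization $X_{i^\ast}$ has arrival time $T_Z\in((j-1)/n,j/n]$ and value $Z\geq \tau_j$. Using the continuous-time threshold function $\tau(t)=\tau_{\lceil tn\rceil}$ employed in \lemref{simplified-upper}, the point $(T_Z,Z)$ lies in the acceptance region $U=\{(t,v):t\leq k/n,\,v\geq \tau(t)\}$, since $\tau(T_Z)=\tau_j$ and $Z\geq \tau_j$. Hence at least one realization lies in $U$, and by the same equivalence used in \lemref{simplified-upper} this means $T\leq k$. Chaining, $\Prob{T\leq k}\geq \Prob{\bigcup_j \xi_j}=\tfrac{1}{n}\sum_{j=1}^k (1-\alpha_j)$.

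The only conceptual step is identifying that the global max shard's arrival time is uniform and independent of its value; the rest is a one-line disjoint-union computation. The main obstacle to articulate carefully will be the implication $\xi_j\Rightarrow T\leq k$: one must make sure the continuous-time view of the threshold $\tau(t)=\tau_{\lceil tn\rceil}$ matches the discrete index $T\in[n]$ used in the statement, which is the same convention already adopted in the prior lemma.
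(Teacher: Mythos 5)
Your proof is correct, but it takes a genuinely different route from the paper's. The paper partitions by the \emph{value band} containing the global maximum: for each $j$ it takes the event that the region above $\tau_{j-1}$ is empty of shards, the band $[\tau_j,\tau_{j-1}]$ is nonempty, and the top shard of that band arrives in $[(j-1)/n,\,k/n]$; the probability is then a sum $\sum_{j}(\alpha_{j-1}-\alpha_j)\tfrac{k-j+1}{n}$ that telescopes to the right-hand side. You instead partition by the \emph{time slot} in which the global maximum arrives, and exploit the single observation that the argmax's arrival time is $\Uniform(0,1)$ and independent of $Z$, so each slot-event has probability exactly $(1-\alpha_j)/n$ with no telescoping. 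Both decompositions are really partitions of the event ``the global maximum lies in the acceptance region $U$,'' sliced along different axes. Your version is more elementary: it needs no Poissonization or sharding at all (the uniformity and independence of the argmax's arrival time already hold for the realizations $X_1,\ldots,X_n$ themselves), and it yields the inequality for every finite $n$ without any $K\to\infty$ limit. What the paper's band-wise event buys is reusability: essentially the same event structure (empty upper region, nonempty band, max of the band arriving in a window) reappears as the $w_k$ terms in the main prophet-secretary analysis, so proving the lemma that way doubles as a warm-up for the harder argument. Your handling of the discrete-index versus continuous-time convention for $T\leq k$ is the same as the paper's in \lemref{simplified-upper}, so no gap there.
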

\begin{proof}
    Refer to \figref{lower-bound} throughout this proof. Define the threshold function $\tau(t) = \tau_{\lceil tn \rceil }$. Formally, we consider the event $\xi$, characterized by the existence of some $1 \leq j \leq k$ for which the region $A_j = \{(t, v) \mid 0 \leq t \leq 1, \tau_{j-1} \leq v < \infty\}$ is empty of shards, whereas the region $B_j = \{(t, v) \mid 0 \leq t \leq 1, \tau_{j} \leq v \leq \tau_{j-1}\}$ has at least one shard, with the highest value shard $(t^\ast, v^\ast)$ in $B_j$ arriving within the interval $t=(j-1)/n$ to $t=k/n$.

Informally, this event signifies that the region $[\tau_1, \infty)$ contains a shard, and the \emph{maximum value shard} among them is present between $t=0$ and $t=k/n$, \textbf{or} the region $[\tau_1, \infty)$ is devoid of shards, the region $[\tau_2, \tau_1]$ contains shards, with the maximum shard situated between $t=1/n$ and $t=k/n$, \textbf{or} the region $[\tau_2, \infty)$ lacks shards, while the region $[\tau_3, \tau_2]$ contains shards, with the highest value shard arriving between $t=2/n$ and $t=k/n$, and so forth. This event implies $T \leq k$ as it guarantees the presence of at least one realization from $X_i$ exceeding $\tau(t)$ before time $k/n$. The probability of this event can be simplified by telescoping sums:
\begin{align*}
    \Prob{\xi} &= \sum_{i=0}^{k-1} e^{-\log\pth{\frac{1}{\alpha_{i+1}}}}\pth{1 - e^{-\pth{\log\left(\frac{1}{\alpha_{i+1}}\right) - \log\left(\frac{1}{\alpha_i}\right)}}}\frac{k-i}{n}, \\ 
    &=\sum_{i=0}^{k-1} \alpha_{i}\left(1 - \frac{\alpha_{i+1}}{\alpha_{i}}\right)\frac{k-i}{n} \\
    &= \sum_{i=0}^{k-1} (\alpha_{i} - \alpha_{i+1})\frac{k-i}{n} \\
    &= \frac{1}{n} \sum_{i=0}^{k-1} 1 - \alpha_{i+1} \\
    & = \frac{1}{n}\sum_{i=1}^k 1 - \alpha_i.
\end{align*}
\end{proof}
\section{\textsc{Top-1-of-$k$}.}
\seclab{best1ofk}

\begin{algorithm}
\caption{$0.776$ competitive algorithm for \textsc{Top-$1$-of-$2$}. }\algolab{best1of2:noniid:firstimprovement}
\begin{algorithmic}
    \State Choose $\tau_1, \tau_2$ such that $\Prob{Z \leq \tau_i}=e^{-c_i}$ for constants $c_1, c_2$ as described in \lemref{noniid:firstimprovement}. 
    \State Set $r \gets 1$
    \For{$i = 1, \ldots, n$}
        \If{$X_i \geq \tau_r$}
            \State $r \gets r + 1$
            \State Accept $X_i$. 
            \State If we have accepted $2$ items, break.
        \EndIf
    \EndFor
\end{algorithmic}
\end{algorithm}

\paragraph{Improved algorithm for Non-\IID \textsc{Top-$1$-of-$2$}.} We give an improved algorithm for the \textsc{Top-$1$-of-$2$} problem. This improves the result by Assaf and Samuel-Cahn \cite{as-srpimm-00} from $\approx 0.731$ to $0.776$. We then improve this to $\approx 0.781$. See \algoref{best1of2:noniid:firstimprovement}. The algorithm is a simple two-threshold algorithm. We select thresholds $\tau_1=\Xi(c_1), \tau_2 = \Xi(c_2)$ on the shards, for some constants $c_1 > c_2$. Specifically, we choose thresholds $\tau_1, \tau_2$ such that 
\[
\lim_{K\to \infty}\sum_{i=1}^n \sum_{j=1}^K \Prob{Y_{i,j}\geq \tau_i} = c_i. 
\]
The algorithm accepts the first value (if any) above $\tau_1$, and updates the threshold to $\tau_2$. It finally accepts any value (if any) above $\tau_2$, and terminates.

\begin{figure}
    \centering
    \includegraphics[width=0.9\textwidth]{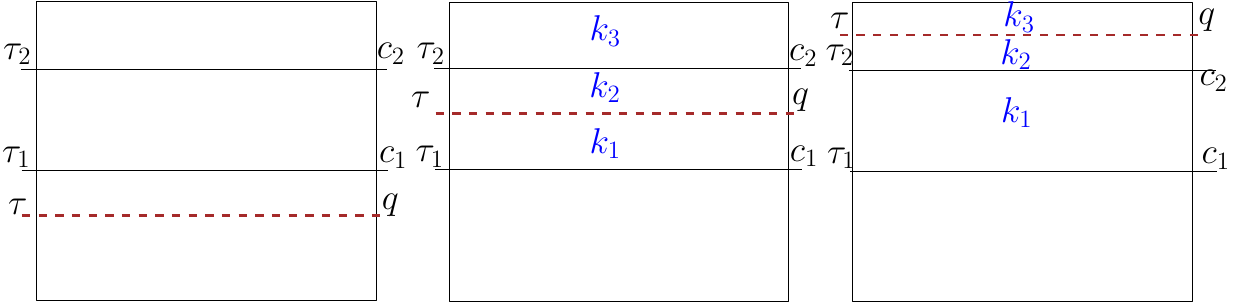}
    \caption{The 3 cases of the analysis for \textsc{Top-$1$-of-$2$} algorithm from left to right.}
    \figlab{best1ofkproof}
\end{figure}

\begin{lemma}
\lemlab{noniid:firstimprovement}
    The competitive ratio $c$ of \algoref{best1of2:noniid:firstimprovement}  is 
    \begin{equation}
            c \geq \min\left( 1-e^{-c_1},e^{-c_1-c_2} \left(e^{c_1} \left(e^{c_2}-1\right)+2 \sqrt{\left(e^{c_1}-1\right) \left(e^{c_2}-1\right)}-e^{c_2}+2\right)  ,  e^{-c_2} + e^{-c_1} c_2  \right) .\eqlab{best1of2}
    \end{equation}
    In particular, choosing $c_1=1.49721, c_2=0.364197$ yields $c\geq 0.776245$. 
\end{lemma}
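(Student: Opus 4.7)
The plan is to combine the sharding/Poissonization framework of \secref{sharding} with the standard majorization template from \secref{recap}. Upon sharding each $X_i$, \lemref{sharding-break} yields that the number of shards in any half-line $[\beta,\infty)$ is Poisson with rate $\lambda_\beta:=-\log\Prob{Z\le\beta}$, and that shard counts over disjoint intervals are independent. The stated choice of thresholds gives $\lambda_{\tau_1}=c_1$ and $\lambda_{\tau_2}=c_2$, and since $c_1>c_2$ we have $\tau_1<\tau_2$. Writing $\ell':=\lambda_\ell$, so that $\Prob{Z\ge\ell}=1-e^{-\ell'}$, I partition the range of $\ell$ at $\tau_1$ and $\tau_2$ and lower-bound $\Prob{\Alg\ge\ell}/\Prob{Z\ge\ell}$ in each of the three resulting intervals by identifying a sufficient event on the shards whose Poisson probability I can compute in closed form (see \figref{best1ofkproof}).

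Cases 1 and 3 are straightforward. For $\ell\le\tau_1$ (Case 1), any shard exceeding $\tau_1$ corresponds to an actual $X_i\ge\tau_1\ge\ell$ that is accepted first, so the ratio is at least $1-e^{-c_1}$, matching the first term of \Eqref{best1of2}. For $\ell\ge\tau_2$ (Case 3, $\ell'\in[0,c_2]$), I take the two disjoint sufficient shard events $E_1$: no shards in $[\tau_2,\ell]$ and at least one shard above $\ell$, which forces every $X_j\ge\tau_2$ to be in fact $\ge\ell$; and $E_2$: no shards in $[\tau_1,\tau_2]$, exactly one shard in $[\tau_2,\ell]$, and at least one shard above $\ell$, where the unique $X_j$ potentially in $[\tau_2,\ell)$ is absorbed by the first acceptance and the guaranteed $X_k\ge\ell$ is taken second. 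Summing the independent Poisson probabilities and dividing by $1-e^{-\ell'}$ yields $e^{\ell'-c_2}+(c_2-\ell')e^{\ell'-c_1}$, which is monotone increasing on $[0,c_2]$ and attains its minimum $e^{-c_2}+e^{-c_1}c_2$ at $\ell'=0$, the third term of \Eqref{best1of2}.

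Case 2 ($\ell\in[\tau_1,\tau_2]$, $\ell'\in[c_2,c_1]$) is the crux and the main obstacle. I use two disjoint sufficient shard events: $F_1=\{\text{no shards in }[\tau_1,\ell],\ \text{some shard in }[\ell,\tau_2],\ \text{no shards in }[\tau_2,\infty)\}$ and $F_2=\{\text{some shard in }[\tau_2,\infty)\}$. Under $F_1$ the first acceptance must be some $X_j\in[\ell,\tau_2]$; under $F_2$ some $X_i\ge\tau_2\ge\ell$ exists and is captured against any adversarial order, since the algorithm's operative threshold is always at most $\tau_2$ (either $X_i$ arrives first and triggers the initial acceptance, or the threshold has already been raised to $\tau_2$ and $X_i$ triggers the second acceptance). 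Writing $X=e^{c_1}$, $Y=e^{c_2}$, $u=e^{\ell'}\in[Y,X]$, adding the Poisson probabilities and dividing by $1-e^{-\ell'}$ reduces the minimization to $\min_u (u^2+u(XY-X-Y))/(XY(u-1))$. The first-order condition collapses to $(u-1)^2=(X-1)(Y-1)$, so $u^{\ast}=1+\sqrt{(X-1)(Y-1)}$; substituting back and using $(u^{\ast}-1)^2=XY-X-Y+1$ telescopes the numerator to $(u^{\ast}-1)(u^{\ast})^2$, giving minimum $(u^{\ast})^2/(XY)$, which upon expansion is exactly the middle term of \Eqref{best1of2}. Combining the three bounds via majorization establishes the stated competitive ratio; the choice $c_1=1.49721,\ c_2=0.364197$ is found by numerically equalizing the three terms at approximately $0.7762$.
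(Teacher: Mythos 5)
Your proposal is correct and follows essentially the same route as the paper: the same three-interval stochastic-dominance analysis on the shards, with sufficient events that, although packaged slightly differently from the paper's (e.g.\ your $F_1\cup F_2$ in Case 2 versus the paper's case split on $k_1$), yield exactly the same Poisson probabilities $e^{-(c_1-q)}(1-e^{-q})+(1-e^{-(c_1-q)})(1-e^{-c_2})$ and $e^{q-c_2}+(c_2-q)e^{q-c_1}$, and hence the same closed-form minima. The substitution $u=e^{\ell'}$ making the Case-2 optimization collapse to $(u-1)^2=(e^{c_1}-1)(e^{c_2}-1)$ is a clean way to carry out the same calculus the paper does.
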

\begin{figure}\centering
 \includegraphics[width=0.4\textwidth]{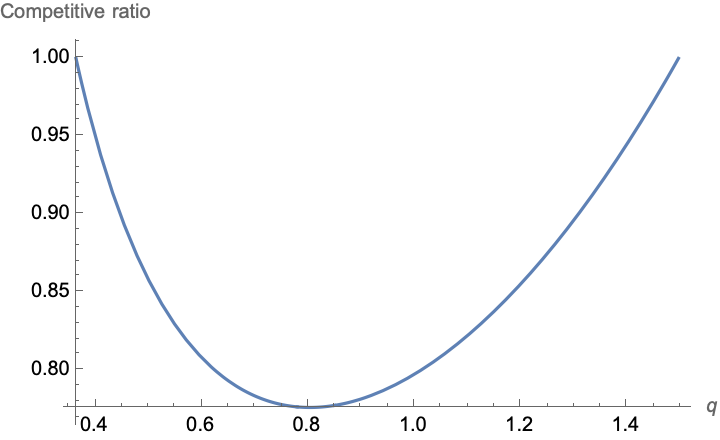}
 \caption{Value of $\left( \frac{e^{-(c_1 - q)}(1-e^{-q}) + (1-e^{-(c_1-q)})(1-e^{-c_2}) }{1-e^{-q}} \right)$ as $q$ ranges from $c_2$ to $c_1$} 
 \figlab{qRange}
 \end{figure}

\begin{proof}
    Refer to \figref{best1ofkproof} throughout this analysis, where we employ stochastic dominance.

\paragraph{Case 1: $\tau \in [0, \tau_1)$.} If there exists a shard exceeding $\tau_1$, the algorithm will choose a value greater than $\tau$. Hence, we obtain
\begin{equation}
    \frac{\Prob{\Alg \geq \tau}}{\Prob{Z\geq \tau}} \geq \Prob{\Alg \geq \tau_1} =  1-e^{-c_1} \eqlab{eq:second:one}.
\end{equation}
\paragraph{Case 2: $\tau \in [\tau_1, \tau_2]$.} Define $q = \lim_{K\to\infty}\sum_i \sum_j \Prob{Y_{i,j} \geq \tau}$, where $q \in [c_2, c_1]$. It follows that $\Prob{Z \geq \tau} = 1 - e^{-q}$. Let $k_1$ be the number of shards with values in $[\tau_1, \tau)$, $k_2$ be the number of shards with values  in $[\tau, \tau_2)$, and $k_3$ be the number of shards with value $[\tau_2, \infty)$. Consider the following event on the shards that implies $\Alg \geq \tau$: if $k_1=0$ and $k_2 + k_3 \geq 1$, \textit{or} $k_1 \geq 1$ and $k_3 \geq 1$, then $\Alg \geq \tau$. In the first scenario, the presence of at least one shard above $\tau$ corresponds to an actual realization of $\{X_i\}$. With $k_1=0$, this realization is selected by the algorithm. If $k_1 \geq 1$ and $k_3 \geq 1$, a shard above $\tau_2 \geq \tau$ corresponds to an actual realization of $\{X_i\}$. In a worst-case scenario, one of the $k_1$ shards from $[\tau_1, \tau]$ corresponds to actual realizations in $\{X_i\}$ and arrives first, prompting the algorithm to raise the threshold to $\tau_2$ and ultimately select a value $\geq \tau$. Therefore,
\begin{equation}
    \frac{\Prob{\Alg \geq \tau}}{\Prob{Z\geq \tau}} \geq \frac{e^{-(c_1 - q)}(1-e^{-q}) + (1-e^{-(c_1-q)})(1-e^{-c_2}) }{1-e^{-q}} = g(q) \eqlab{gq:two}. 
\end{equation}

For $q \in [c_2, c_1]$, we find $g(c_1) = g(c_2) = 1$ and $g$ is minimized when $g'(q) = 0$ (See \figref{qRange}). We derive that
\begin{equation*}
    g'(q) = \frac{e^{-c_1 - c_2 + q} \left(-e^{c_1 + c_2} + e^{c_1} + e^{c_2} - 2 e^q + e^{2 q}\right)}{\left(1 - e^q\right)^2},
\end{equation*}
leading to the condition for $g'(q) = 0$ as
\begin{equation*}
    -e^{c_1 + c_2} + e^{c_1} + e^{c_2} - 2 e^q + e^{2 q} = 0.
\end{equation*}
Solving for $q$, we find $g(q)$ is minimized at
\begin{equation*}
    q = \log \left(\sqrt{e^{c_1 + c_2} - e^{c_1} - e^{c_2} + 1} + 1\right).
\end{equation*}
Substituting this into \Eqref{gq:two}, we obtain the second main inequality as
\begin{equation}
    \frac{\Prob{\text{Alg} \geq \tau}}{\Prob{Z \geq \tau}} \geq e^{-c_1 - c_2} \left(e^{c_1} \left(e^{c_2} - 1\right) + 2 \sqrt{(e^{c_1} - 1)(e^{c_2} - 1)} - e^{c_2} + 2\right) \eqlab{eq:second:two}.
\end{equation}

\paragraph{Case 3: $\tau \in [\tau_2, \infty)$.} For this case, $\Prob{Z \geq \tau} = 1 - e^{-q}$, where $q = \sum_i \sum_j \Prob{Y_{i, j} \geq \tau}$ and $q \in (0, c_2]$. Consider the event on the shards that implies $\Alg \geq \tau$. Let $k_1$ be the count of shards within $[\tau_1, \tau_2)$, $k_2$ within $[\tau_2, \tau)$, and $k_3$ within $[\tau, \infty)$. If $k_1 = 0$, $k_2 \in \{0, 1\}$, $k_3 \geq 1$, or $k_1 \geq 1$, $k_2 = 0$, $k_3 \geq 1$, then the algorithm secures a value at least $\tau$. In the first scenario, with at most one shard below $\tau$ and at least one above, the algorithm chooses a value above $\tau$. In the second scenario, if $k_1 \geq 1$ and $k_2 = 0$, then, in the worst case, one of the $k_1$ shards corresponds to an actual realization, prompting the algorithm to increase its threshold and accept the first realization above $\tau_2$, as $k_3 \geq 1$. Therefore, we have
\begin{align*}
    \frac{\Prob{\Alg \geq \tau}}{\Prob{Z \geq \tau}} &\geq \frac{ \left[ e^{-(c_1-c_2)} e^{-(c_2-q)}(1+c_2-q) + (1-e^{-(c_1-c_2)})e^{-(c_2-q)}  \right] (1-e^{-q})  }{1-e^{-q}} \\
    & =  \left[ e^{-(c_1-c_2)} e^{-(c_2-q)}(1+c_2-q) + (1-e^{-(c_1-c_2)})e^{-(c_2-q)}  \right]\\
    & = e^{-c_2 + q} + e^{-c_1 + q} (c_2 - q).
\end{align*}
Observing that $e^{-c_2 + q} + e^{-c_1 + q} (c_2 - q)$ is increasing in $q \in (0, c_2)$ due to its positive derivative, we find
\[
\min_{0 < q \leq c_2} e^{-c_2 + q} + e^{-c_1 + q} (c_2 - q) = e^{-c_2} + e^{-c_1} c_2,
\]
leading to
\begin{equation}
    \frac{\Prob{\Alg \geq \tau}}{\Prob{Z \geq \tau}} \geq e^{-c_2} + e^{-c_1} c_2. \eqlab{eq:second:three}
\end{equation}
Combining \Eqref{eq:second:one}, \Eqref{eq:second:two}, and \Eqref{eq:second:three} by stochastic dominance yields the result. 

By selecting $c_1 = 1.49721$ and $c_2 = 0.364197$ in accordance with the above analysis, we deduce the competitive ratio for the \textsc{Top-$1$-of-$2$} problem as at least $0.776$.  
\end{proof}

The analysis can be extended to incorporate three thresholds. 

\begin{algorithm}
\caption{$0.781$ competitive algorithm for \textsc{Top-$1$-of-$2$}. }\algolab{best1of2:noniid:secondimprovement}
\begin{algorithmic}
    \State Choose $\tau_1, \tau_2, \tau_3$ such that $\Prob{Z \leq \tau_i}=e^{-c_i}$ for constants $c_1>c_2>c_3$ as described in \lemref{noniid:secondimprovement}. 
    \State Set $r \gets 1$
    \For{$i = 1, \ldots, n$}
        \If{$X_i \geq \tau_r$}
            \State \begin{math}r \gets \min_j \left\{ ~ \{k : \tau_k > X_i\} ~\cup~ \{3\}  \right\}\end{math}.
            \State Accept $X_i$. 
            \State If we have accepted $2$ items, break.
        \EndIf
    \EndFor
\end{algorithmic}
\end{algorithm}

\begin{lemma}
\label{eta1}
\lemlab{noniid:secondimprovement}
    For $c_1=1.51921, c_2=0.380251, c_3=0.0386845$, \algoref{best1of2:noniid:secondimprovement} is a $0.781$ competitive algorithm  for the \textsc{Top-$1$-of-$2$} problem. 
\end{lemma}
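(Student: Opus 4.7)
The plan is to extend the proof of \lemref{noniid:firstimprovement} to the three-threshold algorithm using the same sharding/Poissonization framework and a stochastic-dominance argument. After sharding, we select $\tau_1 > \tau_2 > \tau_3$ so that, as $K\to\infty$, the number of shards above $\tau_i$ is Poisson with rate $c_i$, and in particular $\Prob{Z \leq \tau_i} = e^{-c_i}$. The goal is to show that for every level $\tau \geq 0$ the ratio $\Prob{\Alg \geq \tau}/\Prob{Z\geq\tau}$ is at least $0.781$ at the stated parameter choice.

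I would split according to the interval containing $\tau$, giving four cases: (i) $\tau\in[0,\tau_1)$, (ii) $\tau\in[\tau_1,\tau_2)$, (iii) $\tau\in[\tau_2,\tau_3)$, and (iv) $\tau\in[\tau_3,\infty)$. Case (i) is immediate: any shard above $\tau_1$ corresponds to a genuine realization that the algorithm accepts, producing a value at least $\tau_1\geq\tau$, which yields a ratio of $1-e^{-c_1}$. For the other three cases, let $q = \lim_{K\to\infty}\sum_{i,j}\Prob{Y_{i,j}\geq\tau}$ so that $\Prob{Z\geq\tau}=1-e^{-q}$ with $q$ in the appropriate subinterval of $(0,c_1]$.

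For each of the remaining cases I would exhibit a sufficient event on the shards that implies $\Alg\geq\tau$, decomposing the half-line above $\tau_3$ into the disjoint pieces $[\tau_3,\tau_2)$, $[\tau_2,\tau_1)$, $[\tau_1,\infty)$ (further subdivided by $\tau$). The counts of shards in these pieces are independent Poissons with rates that telescope to $c_3-c_2$, $c_2-c_1$, $c_1$, etc. The key structural fact, already used in the two-threshold proof, is that the maximum-valued shard in any region corresponds to a real realization of some $X_i$; combining this with the algorithm's dynamic update rule (after the first accept at value $v$, the threshold jumps to the smallest $\tau_k$ strictly above $v$, or to $\tau_3$) lets us account for the worst-case ordering of shards. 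Concretely, in case (ii) the sufficient event mirrors that of \lemref{noniid:firstimprovement} with the extra possibility that the first accept lies in $[\tau_2,\tau_1)$, after which the threshold resets to $\tau_3$ and any shard above $\tau\geq\tau_1$ still corresponds to a realizable value; cases (iii) and (iv) are analogous with an additional subdivision.

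The main obstacle is the bookkeeping: each case produces a closed-form ratio $f_i(q,c_1,c_2,c_3)$, and each must be minimized over its subinterval of $q$ before taking a minimum across cases. By smoothness, each minimum is attained either at an endpoint (where the ratio evaluates to a simple expression in $c_1,c_2,c_3$, often $1$ by continuity at threshold boundaries) or at an interior critical point characterized by a polynomial equation in $e^q$, which can be solved in closed form as in \lemref{noniid:firstimprovement} or else bounded numerically. The final step is to substitute $c_1=1.51921,\ c_2=0.380251,\ c_3=0.0386845$ into the per-case minima and verify that their overall minimum is at least $0.781$. No new ideas beyond \lemref{noniid:firstimprovement} are needed; the improvement comes entirely from the extra degree of freedom introduced by $\tau_3$, and the work lies in the more elaborate case analysis together with a three-parameter numerical optimization.
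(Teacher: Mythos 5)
Your proposal follows essentially the same route as the paper: the paper likewise just repeats the sharding/stochastic-dominance analysis of \lemref{noniid:firstimprovement} with four cases, obtains four per-case infima $C_1,\dots,C_4$ over the relevant ranges of $q$, computes their closed-form minima, and substitutes the stated constants. The only blemish is a notational slip in your write-up: since the shard rate above $\tau_i$ is $c_i$ with $c_1>c_2>c_3$, the thresholds satisfy $\tau_1<\tau_2<\tau_3$ (not $\tau_1>\tau_2>\tau_3$), and the telescoped rates are $c_1-c_2$, $c_2-c_3$, $c_3$; your case decomposition is consistent with this correct ordering, so the argument itself is unaffected.
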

\begin{proof}
    Specifically, the algorithm employs thresholds $\tau_1, \tau_2, \tau_3$, defined such that
\begin{equation*}
    \lim_{K \to \infty} \left(\sum_{i=1}^n K\left(1 - \Prob{X_i \leq \tau_i}^{1/K}\right)\right) = c_i,
\end{equation*}
for distinct constants $c_1 > c_2 > c_3$. Initially, the algorithm uses $\tau_1$ and upon encountering a value $v$ exceeding $\tau_1$, accepts it, and switches to $\tau_{n(v)}$, where $n(v) = \min \left\{ \{j : \tau_j > v \} \cup \{3\} \right\}$. This process selects the next threshold higher than $v$, or defaults to the last threshold otherwise. This generalizes upon \algoref{best1of2:noniid:firstimprovement}.

Repeating the sharding analysis for three thresholds, with $c_1=1.51921, c_2=0.380251, c_3=0.0386845$ from \lemref{noniid:firstimprovement}, the competitive ratio is at least $\min(C_1, C_2, C_3, C_4)$, where
\begin{align*}
    C_1 &= 1-e^{-c_1}~, \\
    C_2 &= \inf_{c_2 \leq \ell \leq c_1} \left\{ \frac{e^{-c_1-c_2+l} \left(-e^{c_1}-e^{c_2}+e^{c_1+c_2}+e^l\right)}{e^l-1} \right\}~,\\
    C_3 &= \inf_{c_3 \leq \ell \leq c_2} \left\{ \frac{e^{-c_1-c_2-c_3+l} \left(e^{c_2+l}+e^{c_1+c_3+l}-e^{c_2+c_3+l}-e^{2 c_2}-e^{c_1+c_3}+e^{2 c_2+c_3}\right)}{e^l-1} \right\}~,\\
    C_4 &= \inf_{0 < \ell \leq c_1} \left\{ e^l \left(-e^{-c_1} (l+1)+e^{-c_2}+e^{-c_1+c_2-c_3}+e^{-c_1} c_3\right)  \right\}. 
\end{align*}

\begin{figure}\centering
 \includegraphics[width=0.4\textwidth]{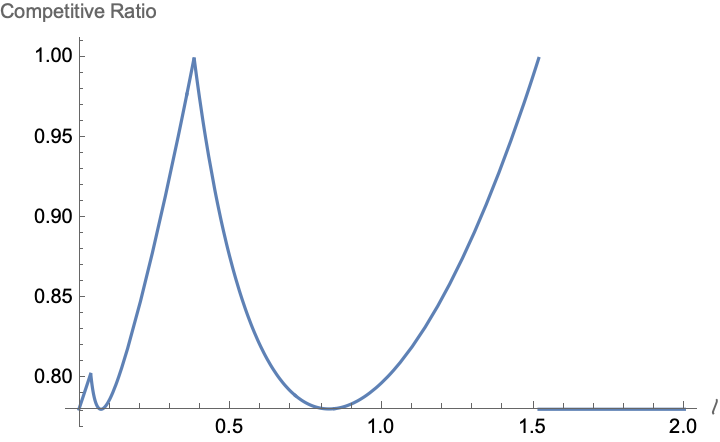}
 \caption{Plot of $\min(C_1, C_2, C_3, C_4)$ for $\ell$ ranging from $0$ to $2$.} 
 \figlab{lrangethree}
\end{figure}

Figure \figref{lrangethree} depicts the four functions of $\ell$. It can be shown analytically by standard calculus that the minima satisfy:
\begin{align*}
    C_1 &= 1-e^{-c_1},  \\
    C_2 &= e^{-c_1-c_2} \left(e^{c_1} \left(e^{c_2}-1\right)-e^{c_2}+2 \sqrt{\left(e^{c_1}-1\right) \left(e^{c_2}-1\right)}+2\right),\\
    C_3 &= \left(e^{c_2} \left(e^{c_2}-2\right) \left(e^{c_3}-1\right)+e^{c_1+c_3}+2 \sqrt{e^{c_2} \left(e^{c_2}-1\right) \left(e^{c_3}-1\right) \left(e^{c_2}+e^{c_1+c_3}-e^{c_2+c_3}\right)}\right) \times \\
    &~~~~~~~\left(\cosh \left(c_1+c_2+c_3\right)-\sinh \left(c_1+c_2+c_3\right)\right),\\
    C_4 &= e^{-c_1} c_3-e^{-c_1}+e^{-c_2}+e^{-c_1+c_2-c_3}.
\end{align*}
Substituting $c_1, c_2, c_3$ yields the result.
\end{proof}

\paragraph{An improved upper bound for $k=2$.} We now improve the upper bound for \textsc{Top-$1$-of-$2$} from $0.8$ by Assaf and Samuel-Cahn \cite{as-srpimm-00} to $0.79424$. Contrary to the $n=3$ random variables instance utilized by Assaf and Samuel-Cahn, our construction involves $n=4$ random variables. 

\begin{lemma}
\lemlab{betterub:best1of2}
    No algorithm can achieve a competitive ratio $>0.7943$ for the \textsc{Top-$1$-of-$2$} model. 
\end{lemma}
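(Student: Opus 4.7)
The plan is to exhibit an explicit hardness instance with $n=4$ random variables, each of the two-point form $\Prob{X_i = v_i}=p_i$ and $\Prob{X_i = 0}=1-p_i$, presented in a carefully chosen adversarial order. This generalizes the three-variable geometric construction of Assaf and Samuel-Cahn \cite{as-srpimm-00}: their instance capped the competitive ratio at $0.8$ using nested Bernoulli-type variables whose value grows and whose probability shrinks along the sequence, and the additional fourth variable gives the adversary an extra ``layer'' with which to trap any online algorithm. The four free parameters (the $v_i$ and $p_i$, modulo an overall scaling) will be tuned to minimize the ratio $\Alg^\ast/\Ex{Z}$, where $\Alg^\ast$ denotes the value of the optimal online strategy.

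First I would compute the prophet value. Since each $X_i$ is two-valued and the $v_i$ are chosen in increasing order, $Z=\max_i X_i$ is itself a piecewise-constant random variable with
\begin{equation*}
\Ex{Z} \;=\; \sum_{i=1}^4 v_i \cdot \Prob{X_i = v_i \text{ and } X_j = 0 \text{ for all } j > i},
\end{equation*}
which is an explicit polynomial in $p_1,\ldots,p_4$.

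Second, I would derive the optimal online strategy $\Alg^\ast$ by backward induction on the finite state space of configurations of the form (current time $i$, number of items already accepted in $\{0,1,2\}$, maximum accepted value so far). Because the reward is the \emph{maximum} of the up to two selected items and each $X_i$ takes only two values, this state space has only a handful of reachable configurations; the resulting optimal policy is a state-dependent acceptance rule whose expected reward is an explicit closed-form expression in the $p_i$. Crucially, backward induction automatically dominates \emph{all} online policies, not just threshold ones, so no separate argument is required to rule out more exotic algorithms.

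Third, I would minimize $\Alg^\ast/\Ex{Z}$ over the four-parameter family, either analytically via the first-order conditions at the adversarial optimum or numerically via direct search; the minimum is $\approx 0.7943$, yielding the claimed bound. The main obstacle I anticipate is the parameter optimization itself: with four free parameters the stationarity system is nontrivial, and one must verify that the computed critical point is a genuine minimum (and not a boundary or saddle point) and that the limit $v_4/v_1 \to \infty$, which sharpens the ratio in the Assaf--Samuel-Cahn style, has been correctly taken. Unlike the lower-bound constructions earlier in the paper, the sharding/Poissonization framework plays no role here: the tight constant emerges from a direct, purely combinatorial construction on four carefully chosen two-point distributions.
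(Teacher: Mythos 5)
Your proposal matches the paper's proof in all essentials: the paper also constructs a hardness instance of four two-point random variables (with $p_1=1$, $b_i=c_i\beta$ for $i\le 3$, $b_4=1$, $p_4=c_4\beta$, and $\beta\to 0$, which is exactly your scaling limit), computes $\Ex{Z}$, compares against the best online policy, and numerically tunes the parameters to get a ratio below $0.79424$. The only cosmetic difference is that the paper explicitly enumerates the six relevant online policies $\mathcal{A}_1,\ldots,\mathcal{A}_6$ and then cross-checks against the optimal dynamic program, whereas you go directly to backward induction; these are equivalent.
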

\begin{proof}
\label{eta2} We define $4$ random variables defined as 
\begin{equation*}
    X_i = \begin{cases}
        b_i & \text{with probability } p_i \\
        0 & \text{otherwise}
    \end{cases}
    \text{for } i=1,2,3,4.
\end{equation*}
We will require $b_1<b_2<b_3<b_4$ and $p_1=1$ (i.e $X_1=b_1$ always). The prophet value can be computed as 
\begin{equation*}
    \Ex{Z} = \sum_{1\leq i \leq 4} \pth{\prod_{i+1\leq j \leq 4} (1-p_j)} p_i b_i.
\end{equation*}
Next, we consider all possible algorithms for the instance. In total, we need to consider $6$ algorithms $\mathcal{A}_1, ..., \mathcal{A}_6$. 

$\mathcal{A}_1$ decides to accept $X_1$, and accept the next non-zero value it encounters (if any). Hence, the expected value it receives is 
\begin{equation*}
    \Ex{\mathcal{A}_1} = p_2b_2 + (1-p_2)p_3b_3 + (1-p_2)(1-p_3)p_4b_4 + (1-p_2)(1-p_3)(1-p_4)b_1. 
\end{equation*}
$\mathcal{A}_2$ decides to accept $X_1$, and accept the next non-zero value it encounters (if any) starting from $X_3$. Hence, the expected value it receives is 
\begin{equation*}
    \Ex{\mathcal{A}_2} = p_3b_3 + (1-p_3)p_4b_4 + (1-p_3)(1-p_4)b_1. 
\end{equation*}
$\mathcal{A}_3$ decides to accept $X_1$, and wait until $X_4$. Hence, the expected value it receives is 
\begin{equation*}
    \Ex{\mathcal{A}_3} = p_4b_4 + (1-p_4)b_1. 
\end{equation*}
$\mathcal{A}_4$ decides to wait until $X_3, X_4$ to use its 2 slots. Hence, the expected value it receives is 
\begin{equation*}
    \Ex{\mathcal{A}_4} = \Ex{\max(X_3, X_4)} =  p_4b_4 + (1-p_4)p_3 b_3. 
\end{equation*}
$\mathcal{A}_5$ skips $X_1$. It $X_2=0$, then it just gets the maximum of $X_3, X_4$. Otherwise, if $X_2>0$, then it accepts $X_2$, then it accepts the first non-zero value from $X_3, X_4$ (if any). Hence, the expected value it receives is 
\begin{equation*}
    \Ex{\mathcal{A}_5} = p_2 \left[ p_3b_3 + (1-p_3)p_4 b_4 + (1-p_3)(1-p_4)b_2 \right] + (1-p_2)\Ex{\mathcal{A}_4}. 
\end{equation*}
Finally, $\mathcal{A}_6$ skips $X_1$. If $X_2=0$, then it just gets the maximum of $X_3, X_4$. Otherwise, if $X_2>0$ it accepts $X_2$, then it waits for $X_4$ regardless of $X_3>0$. Hence, the expected value it receives is 
\begin{equation*}
    \Ex{\mathcal{A}_6} = p_2 \left[ p_4 b_4 + (1-p_4)b_2 \right] + (1-p_2)\Ex{\mathcal{A}_4}. 
\end{equation*}
Next, we set $b_i=c_i \beta$ for $i=1,2,3$ and $b_4=1$. We also set $p_1=1$ as mentioned earlier, and $p_4=c_4 \beta$. We then take the limit of $\Ex{\mathcal{A}_i}/\Ex{Z}$ as $\beta \to 0$. We get that the competitive ratios are 
\begin{align*}
    \alpha_1 = \lim_{\beta \to 0} \frac{\Ex{\mathcal{A}_1}}{\Ex{Z}} &= \frac{\left(c_2-c_4\right) p_2+c_1 \left(p_2-1\right) \left(p_3-1\right)+\left(c_4-c_3\right) \left(p_2-1\right) p_3+c_4}{c_1 \left(p_2-1\right) \left(p_3-1\right)-c_2 p_2 \left(p_3-1\right)+c_3 p_3+c_4} \\
    \alpha_2 = \lim_{\beta \to 0} \frac{\Ex{\mathcal{A}_2}}{\Ex{Z}} &= \frac{-\left(c_1-c_3+c_4\right) p_3+c_1+c_4}{c_1 \left(p_2-1\right) \left(p_3-1\right)-c_2 p_2 \left(p_3-1\right)+c_3 p_3+c_4}\\
    \alpha_3 = \lim_{\beta \to 0} \frac{\Ex{\mathcal{A}_3}}{\Ex{Z}} &= \frac{c_1+c_4}{c_1 \left(p_2-1\right) \left(p_3-1\right)-c_2 p_2 \left(p_3-1\right)+c_3 p_3+c_4}\\
    \alpha_4 = \lim_{\beta \to 0} \frac{\Ex{\mathcal{A}_4}}{\Ex{Z}} &= \frac{c_3 p_3+c_4}{c_1 \left(p_2-1\right) \left(p_3-1\right)-c_2 p_2 \left(p_3-1\right)+c_3 p_3+c_4} \\
    \alpha_5 = \lim_{\beta \to 0} \frac{\Ex{\mathcal{A}_5}}{\Ex{Z}} &= \frac{c_2 p_2+p_3 \left(c_3-\left(c_2+c_4\right) p_2\right)+c_4}{c_1 \left(p_2-1\right) \left(p_3-1\right)-c_2 p_2 \left(p_3-1\right)+c_3 p_3+c_4} \\
    \alpha_6 = \lim_{\beta \to 0} \frac{\Ex{\mathcal{A}_6}}{\Ex{Z}} &=  \frac{c_2 p_2-c_3 \left(p_2-1\right) p_3+c_4}{c_1 \left(p_2-1\right) \left(p_3-1\right)-c_2 p_2 \left(p_3-1\right)+c_3 p_3+c_4}
\end{align*}
Our objective is to determine the parameters $p_2, p_3, c_1, c_2, c_3, c_4$ that minimize $\max(\alpha_1, \alpha_2, \alpha_3, \alpha_4, \alpha_5, \alpha_6)$. By setting $c_1 = 1$, $c_2 = 2.04632458$, $c_3 = 2.9369093$, $c_4 = 0.8905847$, $p_2 = 0.4466646$, and $p_3 = 0.1487470$, we achieve a competitive ratio below $0.79424$. We verified these computations with the optimal dynamic program for this instance. 
\end{proof}

\begin{remark}
The reader may question the decision of the author to limit the counterexample search to merely four random variables, or some of the arbitrary decisions we made such as setting $p_1=1$ or $b_i=c_i\beta$ for $i=1,2,3$. In theory, the search could be broadened to include more variables by leveraging a Mixed-Integer Linear Programming (MILP) optimizer. However, in practice, the computational complexity became a significant concern, particularly because of the stiffness of the expression. Specifically, the solver\footnote{We use Gurobi\cite{gurobi} under the academic license.} failed to identify a comparable instance with five random variables to the one with four variables, despite running for 12 hours and having access to 128 GB of memory. Theoretically, one can achieve this by simply setting one $p_i=0$ and replicating the parameters from the four-variable case, yet the solver struggled to find such solution. A lot of the baked assumptions we made were guided by analytic educated guesses. This showcases the limitations faced when expanding the scope of the variable search. We do not claim these are the best possible parameters, yet we believe they are almost optimal at least for $n=4$ random variables. We do not know if increasing the number of random variables would help.
\end{remark}

\paragraph{\textsc{Top-$1$-of-$k$} for Non-\IID random variables. }\label{eta3} Next, we present our result for \textsc{Top-$1$-of-$k$} for Non-\IID random variables, and general $k$. 

\begin{lemma}
\lemlab{general:best1ofk:lb:improvement}
    There is an algorithm for the \textsc{Top-$1$-of-$k$} that achieves a competitive ratio of at least $1-e^{-kW(\frac{\sqrt[k]{k!}}{k})}$. This is asymptotically $1-e^{-kW(1/e) + o(k)}$ as $k\to \infty$. 
\end{lemma}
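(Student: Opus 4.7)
The plan is to generalize the two-threshold algorithm of \algoref{best1of2:noniid:secondimprovement} to $k$ thresholds $\tau_1 > \tau_2 > \cdots > \tau_k$ chosen via the sharding framework so that $\Prob{Z \leq \tau_i} = e^{-c_i}$ for constants $c_1 > c_2 > \cdots > c_k > 0$. The algorithm maintains a current threshold index $r$ starting at $1$; whenever a realization $X_i \geq \tau_r$ arrives, it accepts $X_i$ and updates $r$ to $\min\left(\{j : \tau_j > X_i\} \cup \{k\}\right)$, advancing as far as possible toward $\tau_k$. The game terminates after $k$ acceptances or after all variables have been seen. Since \textsc{Top-$1$-of-$k$} rewards the maximum of accepted values, this greedy rule captures everything that is needed.

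The analysis follows the pattern of \lemref{noniid:firstimprovement} and \lemref{noniid:secondimprovement}: I would use sharding plus stochastic dominance, carving $[0,\infty)$ into the $k+1$ intervals defined by the thresholds. For $\ell \in [0, \tau_1)$, the event ``at least one shard lies above $\tau_1$'' immediately yields ratio $1 - e^{-c_1}$. For $\ell \in [\tau_{j+1}, \tau_j)$ I would exhibit a favorable event on the shards: one shard above $\ell$, accompanied by few enough shards in the higher stripes $[\tau_{i+1}, \tau_i)$ with $i \leq j$ that the algorithm's slot counter does not run out before it encounters the shard above $\ell$. Poisson independence across disjoint stripes, together with the telescoping form of the rates $c_i - c_{i+1}$, yields a closed-form lower bound in terms of the $c_i$'s.

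To obtain a clean single-parameter bound I would set $c_i = (k - i + 1)\,c$ for $c > 0$, so $c_1 = kc$ and the top case contributes $1 - e^{-kc}$. The worst case is the deepest stripe $\ell \in [\tau_k, \infty)$: the algorithm can ``waste'' up to $k-1$ acceptances on shards in lower stripes before hitting a shard above $\ell$, and summing the relevant Poisson probabilities produces a term proportional to $(kc)^k / k!$. Balancing this against the top bound $1 - e^{-kc}$ gives the equation $(kc)^k / k! = e^{-kc}$, which rearranges to $c\,e^{c} = \sqrt[k]{k!}/k$, so $c = W\!\left(\sqrt[k]{k!}/k\right)$ by definition of the Lambert $W$. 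The competitive ratio is therefore at least $1 - e^{-kc} = 1 - e^{-kW(\sqrt[k]{k!}/k)}$.

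The main obstacle will be the intermediate cases: for every $j \in \{1, \ldots, k-1\}$ one must verify that the stripe-$j$ bound is no tighter than the two extreme bounds, so that the arithmetic progression $c_i = (k-i+1)c$ equalizes only the top and bottom constraints. This requires tracking how the greedy threshold advancement interacts with the Poisson arrivals in each stripe; the cleanest route is to reduce the stripe-$j$ event probability to a ratio of Poisson PMFs and check monotonicity in $j$. Finally, for the asymptotic statement $1 - e^{-kW(1/e) + o(k)}$, I would apply Stirling's formula $\sqrt[k]{k!} = (k/e)(1 + o(1))$, so $\sqrt[k]{k!}/k \to 1/e$, and then use continuity and analyticity of $W$ near $1/e$ to conclude $kW(\sqrt[k]{k!}/k) = kW(1/e) + o(k)$.
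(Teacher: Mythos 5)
Your end‑game is right --- balancing $1-e^{-c_1}$ against $1-\frac{c_1^k}{k!}$ does give $c_1 = kW\bigl(\sqrt[k]{k!}/k\bigr)$, and the Stirling step $\sqrt[k]{k!}/k \to 1/e$ yields the asymptotic statement. But the route you sketch is substantially more complicated than what the paper does, and the extra complication opens a genuine gap that you name but do not close.

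The paper's algorithm for this lemma uses a \emph{single} threshold $\tau_1 = \Xi(c)$ and simply accepts the first (up to) $k$ realizations above it; the analysis then has only two cases. For $\ell < \tau_1$ the ratio is $1-e^{-c}$ as usual. For $\ell \geq \tau_1$ with Poisson rate $q=\lim_{K\to\infty}\sum_{i,j}\Prob{Y_{i,j}\geq \ell}$, the favorable event is ``at most $k-1$ shards in $[\tau_1,\ell)$ and at least one shard $\geq \ell$'', whose conditional probability given a shard $\geq \ell$ is $\sum_{i=0}^{k-1}e^{-(c-q)}\frac{(c-q)^i}{i!}$, minimized at $q\to 0$; the Lagrange remainder of the Taylor expansion of $e^{x-c}$ at $0$ then bounds the tail $\sum_{i\geq k}e^{-c}c^i/i!$ by $c^k/k!$. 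Because there is a single threshold, no stripe‑by‑stripe argument is needed. Your proposal instead generalizes \algoref{best1of2:noniid:secondimprovement} to $k$ thresholds (note also an orientation slip: with $\Prob{Z\leq\tau_i}=e^{-c_i}$ and $c_1>\dots>c_k$, the thresholds satisfy $\tau_1<\dots<\tau_k$, not the reverse), which creates $k+1$ stochastic‑dominance cases, and you correctly flag that verifying the intermediate stripes is ``the main obstacle.'' That obstacle is real: for the multi‑threshold algorithm, the failure event in a given stripe depends not just on the number of shards above $\tau_1$ but on which stripes they land in and on their arrival order (each acceptance jumps the threshold to the next level above the accepted value), so the clean Poisson‑tail bound $(kc)^k/k!$ you assert for the deepest stripe does not follow directly. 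Indeed the multi‑threshold rule does not even pointwise dominate the single‑threshold rule (e.g., for $k=2$ with values $\tau_1+1,\;\tau_2-1,\;\tau_2-2$ arriving in that order, the single‑threshold rule collects $\tau_2-1$ while the multi‑threshold rule is stuck at $\tau_1+1$), so one cannot shortcut the intermediate cases by reduction. The right move is to drop the extra thresholds entirely: the single‑threshold algorithm already achieves the stated $1-e^{-kW(\sqrt[k]{k!}/k)}$ with a two‑case proof, and that is exactly what the paper does.
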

\begin{proof}
   We shard the variables $X_1, ..., X_n$ into $\{Y_{i,j}\}$. We set a single threshold $\tau_1$ such that
\[
\sum_{i=1}^n \sum_{j=1}^K \Prob{Y_{i,j} \geq \tau} = c.
\]
For some constant $c$. Again, we proceed by stochastic dominance. If $\tau \in [0, \tau_1]$, then 
\begin{equation}
    \frac{\Prob{\Alg \geq \tau}}{\Prob{Z\geq \tau}} \geq 1-e^{-c}\eqlab{general:first}.
\end{equation}

Finally, if $\tau \in [\tau_1, \infty)$, and $q=\sum_i \sum_j \Prob{Y_{i,j}\geq \tau} = q$ with $0<q\leq c$. 
Consider the number of shards with value between $\tau_1$ and $\tau$. If this number is at most $k-1$ and there is a shard above $\tau$, then the algorithm would successfully reach a shard above $\tau$ corresponding to an actual realization and take it. Hence, we have 
\[
\frac{\Prob{\Alg \geq \tau}}{\Prob{Z \geq \tau}} \geq \frac{(1-e^{-q})\sum_{i=0}^{k-1} e^{-(c-q)} \frac{(c-q)^i}{i!} }{1-e^{-q}} = \sum_{i=0}^{k-1} e^{-(c-q)} \frac{(c-q)^i}{i!} = f_k (c, q).
\]
Note that $f_k(c, q)$ is minimized for $q\to 0$ in $q\in (0, c)$. By Taylor approximation on the function $f(x)=e^{x-c}$, we have for some $\xi \in (0, c]$
\[
\sum_{i=k}^\infty e^{-c}\frac{c^i}{i!} = \frac{f^{(k)}(\xi)c^k}{k!} \leq \frac{c^k}{k!}.
\]
Hence, 
\begin{equation}
    \frac{\Prob{\Alg \geq \tau}}{\Prob{Z \geq \tau}}  \geq 1-\frac{c^k}{k!} \eqlab{general:second}.
\end{equation}

Finally, combining \Eqref{general:first} and \Eqref{general:second}, the competitive ratio of the algorithm is at least $\min\left(1-e^{-c}, 1-\frac{c^k}{k!}\right)$. We set $1-e^{-c}=1-\frac{c^k}{k!}$, which has a solution of $c=kW(\frac{\sqrt[k]{k!}}{k})$ where $W$ is the Lambert $W$ function. To see this, let $z=W(\frac{\sqrt[k]{k!}}{k})$. Then by definition of $W$, we have 
\begin{equation*}
    z e^z = \frac{\sqrt[k]{k!}}{k}.
\end{equation*}
It thus follows that
\begin{align*}
    1-e^{-c} &= 1-e^{-k z} \\
    &= 1- z^k \pth{z e^{z} }^{-k} \\
    &= 1- z^k \pth{\frac{\sqrt[k]{k!}}{k}}^{-k} \\
    &= 1- \frac{ k^k z^k }{k!} \\
    & = 1-\frac{c^k}{k!}.
\end{align*}
We conclude by noting that $\lim_{k\to \infty} W(\frac{\sqrt[k]{k!}}{k}) = W(1/e)$, so this ratio behaves asymptotically as $1-e^{-kW(1/e) + o(k)}$.  
\end{proof}

\paragraph{\textsc{Top-$1$-of-$k$} for \IID random variables. }\label{eta4} Next, we present our result for \textsc{Top-$1$-of-$k$} and \IID random variables.
First, we prove that we can assume $n\to \infty$ without loss of generality. 
\begin{lemma}
    Let $\{X_i\}_{1\leq i \leq n}$ be \IID random variables with \CDF $F$. Let $\{Y_{i, j}\}_{\substack{1\leq i\leq n \\ 1\leq j \leq K}}$ be \IID random variable with \CDF $F^{1/K}$. Let $\textsc{OPT}(X_1, \ldots, X_n)$ denote the expected value of the optimal algorithm running on $X_1, ..., X_n$ (in this order), and similarly  $\textsc{OPT}(Y_{1,1}, ..., Y_{n, K})$. Then we have 
    \begin{equation*}
        \textsc{OPT}(X_1, \ldots, X_n) \geq \textsc{OPT}(Y_{1,1}, ..., Y_{n, K}),
    \end{equation*}
    and
    \begin{equation*}
        \Ex{\max_{1\leq i \leq n} X_i} = \Ex{\max_{1\leq i \leq n, 1\leq j\leq K} Y_{i, j}}.
    \end{equation*}
    In other words, it is worse to run on $nK$ instances of \IID random variables with \CDF $F^{1/K}$ instead of $n$ instances of \IID random variables with \CDF $F$. 
\end{lemma}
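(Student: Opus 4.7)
The equality $\Ex{\max_i X_i} = \Ex{\max_{i,j} Y_{i,j}}$ is immediate from distributional identities: $\max_i X_i$ has CDF $F^n$, while $\max_{i,j} Y_{i,j}$ has CDF $(F^{1/K})^{nK} = F^n$, so the two maxima are identically distributed and thus have equal expectations. Equivalently, for each fixed $i$ we have $\max_{1\leq j\leq K} Y_{i,j} \stackrel{d}{=} X_i$, so we can couple by setting $X_i := \max_j Y_{i,j}$.

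For the inequality $\textsc{OPT}(X_1,\ldots,X_n) \geq \textsc{OPT}(Y_{1,1},\ldots,Y_{n,K})$, the plan is a simulation argument. Fix the optimal online algorithm $\mathcal{B}$ for the shard sequence $Y_{1,1},\ldots,Y_{n,K}$ presented in that lexicographic order. I will build an online algorithm $\mathcal{A}$ on the sequence $X_1,\ldots,X_n$ whose reward pointwise dominates $\mathcal{B}$'s, which immediately yields the inequality on expectations and hence on $\textsc{OPT}$.

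The algorithm $\mathcal{A}$ proceeds as follows. When it observes $X_i = x$, it draws a fresh sample $(\tilde Y_{i,1},\ldots,\tilde Y_{i,K})$ from the conditional distribution of $K$ i.i.d.\ variables with CDF $F^{1/K}$ given that their maximum equals $x$; by the coupling above this sample has the correct joint law for batch $i$'s shards. Algorithm $\mathcal{A}$ then internally runs $\mathcal{B}$ on the $K$ simulated shards of batch $i$ in order. If $\mathcal{B}$ accepts some shard $\tilde Y_{i,j}$ within this batch, $\mathcal{A}$ accepts $X_i$ and halts; since $X_i = \max_j \tilde Y_{i,j} \geq \tilde Y_{i,j}$, the reward collected by $\mathcal{A}$ is at least the reward $\mathcal{B}$ would have collected. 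If $\mathcal{B}$ rejects all $K$ shards of batch $i$, then $\mathcal{A}$ rejects $X_i$ and moves on to $X_{i+1}$. By induction on $i$, the internal state of $\mathcal{B}$ inside $\mathcal{A}$ is always consistent with what a genuine online run on the coupled shard sequence would look like, because everything $\mathcal{B}$ needs to decide on batch $i$ is known to $\mathcal{A}$ at the time $X_i$ is revealed. Hence $\mathcal{A}$'s reward is at least $\mathcal{B}$'s reward in every realization.

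The only subtlety worth double-checking is non-anticipation: $\mathcal{A}$ must decide on $X_i$ before seeing $X_{i+1}$, and the simulation respects this because $\mathcal{B}$'s decision on shards of batch $i$ depends only on shards of batches $1,\ldots,i$, all of which have been sampled by $\mathcal{A}$ by time $i$. I expect this step, namely justifying that the internal simulation correctly reproduces the distribution of $\mathcal{B}$'s reward on a genuine shard instance, to be the only nontrivial point; it follows from the coupling $X_i = \max_j Y_{i,j}$ combined with the fact that the conditional distribution of the shards given their max is well-defined and independent across batches.
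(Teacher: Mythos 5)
Your proof is correct and follows essentially the same route as the paper: the equality of maxima via the CDF identity $F^n = (F^{1/K})^{nK}$, and the inequality via a simulation argument in which an algorithm on $X_1,\ldots,X_n$ resamples each batch of $K$ shards from the conditional law given their maximum, feeds them to the optimal shard-algorithm $\mathcal{B}$, and accepts $X_i$ whenever $\mathcal{B}$ accepts any shard of batch $i$. The paper's proof is more terse but contains the same coupling $X_i = \max_j Y_{i,j}$ and the same observation that $X_i$ dominates whatever shard $\mathcal{B}$ picks.
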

\begin{proof}
    For the first claim, there is an algorithm $A$ running on $X_1, ..., X_n$ that can simply simulate the behavior of the optimal algorithm $B$ of $\{Y_{i,j}\}$. When $A$ observes the value of $X_i$. It samples $Z_1, ..., Z_K$ from the conditional distribution $F^{1/K}$ given $\max(Z_1, ..., Z_K)=X_i$. The algorithm then feeds the values of $Z_1, ..., Z_K$ into $B$. If $B$ accepts any of the random variables, then $A$ accepts $X_i$ which has value at least that of what $B$ accepted. The proof concludes by coupling $\{Y_{i,j}\}_{1\leq j\leq K}$ with $X_i$ through $X_i = \max(Y_{i, 1}, ..., Y_{i, K})$ since they have the same distribution. 

    The last statement follows because
    \[
    \Prob{\max_i X_i \leq \tau} = \prod_i \Prob{X_i\leq \tau} =  F(\tau)^n = F^{Kn/K}(\tau) = \prod_{i, j} \Prob{Y_{i,j}\leq \tau} = \Prob{\max_{i,j}Y_{i,j}\leq \tau}. 
    \]
\end{proof}
By taking $K\to \infty$, we can assume the number of random variables $\to \infty$ without loss of generality, and hence the Poissonization results follow.

\begin{algorithm}
\caption{$1-e^{-\zeta_k}$ competitive algorithm for \IID \textsc{Top-$1$-of-$k$}. }\algolab{best1ofk:iid:firstimprovement}
\begin{algorithmic}
    \State Choose $\tau$ such that $\sum_{i=1}^n \Prob{X_i \geq \tau}=\zeta_k$.
    \For{$i = 1, \ldots, n$}
        \If{$X_i > \tau$}
            \State Accept $X_i$. 
            \State $\tau \gets X_i$.
            \State If we have accepted $k$ items, break.
        \EndIf
    \EndFor
\end{algorithmic}
\end{algorithm}

\paragraph{Algorithm} See \algoref{best1ofk:iid:firstimprovement}. Let $\zeta_k$ be the unique positive solution of 
    \begin{equation*}
    1-e^{-x} = \sum_{i=0}^{k-1} e^{-x}\frac{x^i}{i!} + \sum_{i=k}^\infty\sum_{j=0}^k  e^{-x} \frac{x^i}{i!} \frac{{i+1 \brack j}}{(i+1)!},
    \end{equation*}
    where ${i \brack k}$ is the (unsigned) Stirling number of the first kind. The algorithm  sets a single threshold $\tau$ such that $\sum_{i=1}^n \Prob{X_i \geq \tau} = \zeta_k$. Every time the algorithm observes a value $v$ above $\tau$, it accepts it, and updates its new threshold to $v$. This continues until we can no longer accept values (we consumed the $k$ slots) or run out of random variables. 
\begin{lemma}
\lemlab{best1ofk:iid:firstimprovement}
     \algoref{best1ofk:iid:firstimprovement} is $1-e^{-\zeta_k}$ competitive for \IID \textsc{Top-$1$-of-$k$}.  In particular, for $k=2,3,4$, the competitive ratios are at least  $0.8520, 0.9463, 0.9816$ respectively. 
\end{lemma}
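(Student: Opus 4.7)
The plan is to apply the Poissonization framework from earlier sections, invoking the preceding lemma so that we may assume $n \to \infty$, and identify a simple event on the realizations above $\tau$ that implies $\Alg \geq \ell$. The crucial observation is that the algorithm accepts only left-to-right maxima of the subsequence of realizations above $\tau$, and at most $k$ of them; so its reward equals the overall max above $\tau$ whenever the total number of such left-to-right maxima is at most $k$.

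First, I set up the Poissonization. Fix $\ell > \tau$ and let $q = \sum_i \Prob{X_i \geq \ell}$. Then the number $N$ of realizations above $\tau$ is $\Poisson(\zeta_k)$, splitting as independent $N_1 \sim \Poisson(\zeta_k - q)$ in $[\tau, \ell)$ and $N_2 \sim \Poisson(q)$ in $[\ell, \infty)$, while $\Prob{Z \geq \ell} = 1 - e^{-q}$. Let $R_m$ denote the number of left-to-right maxima in a uniformly random permutation of $m$ distinct values, with the well-known distribution $\Prob{R_m = r} = {m \brack r}/m!$ (Rényi). Because the variables are IID, given $N = n$ the number of left-to-right maxima of those $n$ realizations is distributed as $R_n$ and, by exchangeability of IID samples, is \emph{independent} of $N_2$ given $N$ (ranks of IID samples are independent of the values and hence of which samples exceed $\ell$).

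Second, I identify the event. If $R_N \leq k$, the algorithm has room to accept every left-to-right maximum, in particular the overall max above $\tau$; if additionally $N_2 \geq 1$ this max is $\geq \ell$. Thus $\{R_N \leq k,\ N_2 \geq 1\}$ implies $\Alg \geq \ell$, so $\Prob{\Alg \geq \ell} \geq \Prob{R_N \leq k,\ N_2 \geq 1}$. Using the conditional independence and a Poisson thinning identity (rewriting $e^{-\zeta_k}(\zeta_k(1-u))^n/n! = e^{-q} \cdot e^{-(\zeta_k-q)}(\zeta_k-q)^n/n!$ with $u = q/\zeta_k$), a direct algebraic manipulation yields
\[
\Prob{R_N \leq k,\ N_2 \geq 1} \;=\; \phi(\zeta_k) - e^{-q}\,\phi(\zeta_k - q),
\]
where $\phi(y) := \Prob{R_M \leq k}$ for $M \sim \Poisson(y)$. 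The case $\ell \leq \tau$ is immediate: $\Prob{\Alg \geq \ell} \geq \Prob{\Alg \geq \tau} = 1 - e^{-\zeta_k}$.

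Third, I carry out stochastic dominance for $\ell > \tau$. It suffices to show
\[
\frac{\phi(\zeta_k) - e^{-q}\,\phi(\zeta_k - q)}{1 - e^{-q}} \;\geq\; 1 - e^{-\zeta_k} \qquad \forall\, q \in (0, \zeta_k].
\]
Let $f(q)$ be the numerator minus $(1 - e^{-\zeta_k})(1 - e^{-q})$. Then $f(0) = 0$ and a direct calculation gives
\[
f'(q) = e^{-q}\bigl(\psi(\zeta_k - q) - (1 - e^{-\zeta_k})\bigr), \qquad \psi(y) := \phi(y) + \phi'(y).
\]
Expanding $\phi$ as a Poisson series and differentiating term by term yields $\psi(y) = \Prob{R_{M+1} \leq k}$ for $M \sim \Poisson(y)$, which upon splitting into the ranges $m \leq k-1$ (contributing $1$ per term) and $m \geq k$ is \emph{exactly} the right-hand side of the transcendental equation defining $\zeta_k$ in the lemma statement. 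Consequently that defining equation reads $\psi(\zeta_k) = 1 - e^{-\zeta_k}$. Since $\psi$ is strictly decreasing (more samples produce more records), $\psi(\zeta_k - q) \geq \psi(\zeta_k) = 1 - e^{-\zeta_k}$ for $q \in [0, \zeta_k]$, so $f' \geq 0$ and $f \geq 0$ throughout, giving the desired inequality. Well-definedness of $\zeta_k$ follows because $\psi$ is continuous strictly decreasing from $\psi(0) = 1$ to $0$, while $1 - e^{-x}$ is continuous strictly increasing from $0$ to $1$, so they meet at a unique positive value. The numerical ratios for $k = 2, 3, 4$ are obtained by solving the transcendental equation numerically.

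The main conceptual obstacle is the third step: discovering the auxiliary function $\psi(y) = \phi(y) + \phi'(y)$ and recognizing that the Stirling-number-laden equation defining $\zeta_k$ is precisely the identity $\psi(\zeta_k) = 1 - e^{-\zeta_k}$ needed to close the monotonicity argument. The form of $\zeta_k$ obscures this clean structure, but once $\psi$ is identified the remainder is a short exponential-generating-function manipulation, and the independence of $R_N$ and $N_2$ given $N$ (from IID exchangeability) is what makes the factorization $\phi(\zeta_k) - e^{-q}\phi(\zeta_k - q)$ possible in the first place.
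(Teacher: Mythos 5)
Your proof is correct, and at its core it follows the same route as the paper's: the case split at $\tau$, Poissonization of the counts in $[\tau,\ell)$ and $[\ell,\infty)$, the observation that the algorithm accepts precisely the running (left-to-right) maxima of the above-$\tau$ subsequence, Rényi's Stirling-number distribution for the number of records, and the reduction of the worst case to $q \to 0$, where the bound collapses to the right-hand side of the equation defining $\zeta_k$. There are two worthwhile differences in execution. First, your success event ($R_N \leq k$ with $R_N$ counting records among \emph{all} values above $\tau$, plus $N_2 \geq 1$) is slightly more conservative than the paper's, which conditions on at least one value above $\ell$ and counts records only among the $i$ values in $[\tau,\ell]$ together with one representative above $\ell$ (hence the $i+1$ in its Stirling numbers); your pointwise bound for fixed $q$ is therefore a bit weaker, but both events yield the same infimum as $q \to 0$, so the final constant is unaffected. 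Second, your monotonicity argument is cleaner and more robust: identifying $\psi = \phi + \phi'$ as $\Prob{R_{M+1} \leq k}$ for Poisson $M$, reading the defining equation of $\zeta_k$ as $\psi(\zeta_k) = 1 - e^{-\zeta_k}$, and invoking the stochastic monotonicity of the record count avoids the paper's term-by-term differentiation of the double sum, which silently relies on the side condition $k \geq \zeta_k$; you also supply the existence and uniqueness of $\zeta_k$, which the paper asserts without proof. None of this changes the substance of the argument, but your packaging of the derivative step is genuinely tighter.
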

\begin{proof}
We apply stochastic dominance to compare $\Prob{\Alg \geq \ell}$ and $\Prob{Z \geq \ell}$.

\paragraph{Case 1: $\ell \in [0, \tau]$.} If at least one value exceeds $\tau$, the algorithm will select a value greater than or equal to $\tau$, and hence greater than or equal to $\ell$. This leads to
\begin{equation}
    \frac{\Prob{\Alg \geq \ell}}{\Prob{Z \geq \ell}} \geq \Prob{\Alg \geq \tau} \geq 1 - e^{-\zeta_k}. \eqlab{first:of:renyi}
\end{equation}

\paragraph{Case 2: $\ell \in (\tau, \infty]$.} Let $\ell' = \sum_{i=1}^n \Prob{X_i \geq \ell}$, with $\ell' \in (0, \zeta_k)$. Consider the subset of random variables $Y$ with values in $[\tau, \ell]$, denoted $X_{i_1}, \ldots, X_{i_Y}$. Define $R_j = 1$ if $X_{i_j} = \max(X_{i_1}, \ldots, X_{i_Y})$ and $0$ otherwise. Let $M = \sum_{j=1}^Y R_j$ be the count of right-to-left maxima. If $M < k$ and there is at least one value above $\ell$, the algorithm will choose a value $\geq \ell$. Rényi \cite{lefttoright} demonstrated that the number of permutations of $[i]$ with exactly $j \leq i$ left-to-right maxima, which is the same number of permutations with $j$ right-to-left maxima, is equal to ${i \brack j}$, the (unsigned) Stirling number of the first kind.

Formally, for $U_{\tau, \ell} = \{i : \tau \leq X_i \leq \ell \}$ and $U_{\ell} = \{i : \ell \leq X_i \}$, if $|U_{\ell}| \geq 1$ and $|U_{\tau, \ell}| = i$, then the probability of selecting a value above $\ell$ is at least
\begin{equation*}
    \sum_{j=0}^k \frac{{i+1 \brack j}}{(i+1)!}.
\end{equation*}
Therefore,
\begin{equation}
    \frac{\Prob{\Alg \geq \ell}}{\Prob{Z \geq \ell}} = \Prob{\Alg \geq \ell \mid |U_\ell|\geq 1} \geq \sum_{i=0}^{k-1} e^{-(\zeta_k - \ell')}\frac{(\zeta_k - \ell')^i}{i!} + \sum_{i=k}^\infty \sum_{j=0}^{k}  e^{-(\zeta_k - \ell')} \frac{(\zeta_k - \ell')^i}{i!}\frac{{i+1 \brack j}}{(i+1)!}. \eqlab{rhsrenyi}
\end{equation}
The derivative of \Eqref{rhsrenyi} with respect to $\ell'$ is
    \begin{equation}
        e^{-(\zeta_k -\ell')} \frac{(\zeta_k -\ell')^{k-1}}{(k-1)!} + \sum_{i=k}^\infty \sum_{j=0}^k  e^{-(\zeta_k -\ell')} \frac{(\zeta_k -\ell')^{i-1} (\zeta_k - \ell' - i)}{i!} \frac{{i+1 \brack j}}{(i+1)!}, \eqlab{rhsrenyi2} 
    \end{equation}
    and thus, assuming $k \geq \zeta_k \implies \zeta_k - \ell'-i \leq \zeta_k-k \leq 0$, 
    \begin{align*}
        \Eqref{rhsrenyi2} &\geq e^{-(\zeta_k -\ell')} \frac{(\zeta_k -\ell')^{k-1}}{(k-1)!} + \sum_{i=k}^\infty  e^{-(\zeta_k -\ell')} \frac{(\zeta_k -\ell')^{i-1} (\zeta_k - \ell' - i)}{i!}\\
        & =  e^{-(\zeta_k -\ell')} \frac{(\zeta_k -\ell')^{k-1}}{(k-1)!} - e^{-(\zeta_k - \ell')} \frac{(\zeta_k - \ell')^{k-1}}{(k-1)!} \\
        & = 0.
    \end{align*}

This implies that the right-hand side is minimized as $\ell' \to 0$. Therefore, the second main inequality becomes
\begin{equation}
    \frac{\Prob{\Alg \geq \ell}}{\Prob{Z \geq \ell}} \geq  \sum_{i=0}^{k-1} e^{-\zeta_k}\frac{\zeta_k^i}{i!} + \sum_{i=k}^\infty\sum_{j=0}^k  e^{-\zeta_k} \frac{\zeta_k^i}{i!} \frac{{i+1 \brack j}}{(i+1)!}. \eqlab{rhsrenyifinal}
\end{equation}
By stochastic dominance, combining \Eqref{first:of:renyi} and \Eqref{rhsrenyifinal}, gives the competitive ratio as
\begin{equation*}
    \min \left\{ 1-e^{-\zeta_k}, \sum_{i=0}^{k-1} e^{-\zeta_k}\frac{\zeta_k^i}{i!} + \sum_{i=k}^\infty\sum_{j=0}^k  e^{-\zeta_k} \frac{\zeta_k^i}{i!} \frac{{i+1 \brack j}}{(i+1)!} \right\} = 1-e^{-\zeta_k},
\end{equation*}
as defined by $\zeta_k$.
\end{proof}

\begin{algorithm}
\caption{$0.883$ competitive algorithm for \IID \textsc{Top-$1$-of-$2$}. }\algolab{best1of2:iid:finalalgorithm}
\begin{algorithmic}
    \State Define $\tau(t)$ threshold function as described in \lemref{best1of2:iid:finalalgorithm}. 
    \State Let $t_1, ..., t_n$ be time of arrivals of $X_1, ..., X_n$ chosen uniformly and independently from $[0, 1]$.  
    \For{$i = 1, \ldots, n$}
        \If{$X_i > \tau(t_i)$}
            \State Accept $X_i$. 
            \For{$j = i+1, \ldots, n$}
                \If{$X_j > X_i$}
                    \State Accept $X_j$
                    \State Exit (as we accepted two values)
                \EndIf 
            \EndFor
            \State Exit (We only accepted one value). 
        \EndIf
    \EndFor
\end{algorithmic}
\end{algorithm}

\paragraph{Refined Analysis for \textsc{IID Top-$1$-of-$2$}.}
We enhance the prior analysis by allowing dynamic initial thresholds over the interval $[0, 1]$. Consider a sequence of $m$ descending thresholds $\tau_1 > \tau_2 > \dots > \tau_m$, where $\tau_i$ is determined by
\begin{equation*}
    \sum_{j=1}^n \Prob{X_j \geq \tau_i} = c_i,
\end{equation*}
for a set of constants $c_1 < c_2 < \dots < c_m$. The threshold function over time is defined as $\tau(t) = \tau_{\lceil tm \rceil}$, with the algorithm adopting $\tau(t)$ until a value $v > \tau(t)$ is observed. Upon which, $v$ is accepted, followed by the next value (if any) exceeding $v$. See \algoref{best1of2:iid:finalalgorithm}. 

\begin{lemma}
\lemlab{best1of2:iid:finalalgorithm}
Given constants $0=c_0 < c_1 < c_2 < \dots < c_m$, and for any $\ell' \in (c_{j-1}, c_j]$, define
\begin{equation*}
    f_j(c_1, \dots, c_m, \ell') = 1 - \exp\pth{-\frac{1}{m}\sum_{r=1}^{j-1}c_r} + \sum_{k=j}^m \left( \exp\pth{-\frac{1}{m}\sum_{r=1}^{k-1}c_r} \cdot \frac{e^{-\frac{(m+1)c_k}{m}} \left( m e^{c_k} \ell' \left( e^{\frac{c_k}{m}} - 1 \right) \left( 2c_k - \ell' \right) + c_k \ell' e^{\frac{k c_k}{m}} \left( \ell' - c_k \right) \right)}{m c_k^2} \right).
\end{equation*}
The competitive ratio $c$ of \algoref{best1of2:iid:finalalgorithm} is at least
\begin{equation}
    c\geq \min \left\{ 1 - \exp\pth{-\frac{1}{m}\sum_{i=1}^m c_i}, \min_{1 \leq j \leq m} \inf_{\ell' \in [c_{j-1}, c_j]} \frac{f_j(c_1, \dots, c_m, \ell')}{1 - e^{-\ell'}} \right\}.  \eqlab{tighter:best1of2:apdx}
\end{equation}
In particular, we report $m=10$ thresholds in the \apdxref{E} that give a competitive ratio of at least $0.883$ to \Eqref{tighter:best1of2:apdx}. 
\end{lemma}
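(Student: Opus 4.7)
The plan is to run the sharding/Poissonization framework with time-of-arrival (as in \secref{warmupiid} and \secref{sharding}) and then majorize $\Alg$ against $Z$ by stochastic dominance. After passing to the $K\to\infty$ limit (without loss in the \IID setting, as noted earlier in this section), arrivals $(t_i,X_i)$ form a Poisson point process on $[0,1]\times[0,\infty)$ whose rate of points above a threshold $\tau$ equals $\sum_i\Prob{X_i\geq\tau}$ per unit time. Setting $\ell':=\sum_i\Prob{X_i\geq\ell}$, we have $\Prob{Z\geq\ell}=1-e^{-\ell'}$, and by choice $\sum_i\Prob{X_i\geq\tau_k}=c_k$.

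\textbf{Case 1} ($\ell\in[0,\tau_m]$). The algorithm accepts at least once whenever the staircase region $\{(t,v):v\geq\tau_{\lceil tm\rceil}\}$ contains an arrival, an event with Poisson mass $\frac{1}{m}\sum_{i=1}^m c_i$; with $\Prob{Z\geq\ell}\leq 1$, this produces the first argument of the outer minimum in \Eqref{tighter:best1of2:apdx}.

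\textbf{Case 2} ($\ell>\tau_m$, so $\ell'\in(c_{j-1},c_j]$ with $c_0:=0$). I decompose $\Prob{\Alg\geq\ell}$ by the interval index $k:=\lceil t_1 m\rceil$ of the first acceptance. For $k<j$ the threshold exceeds $\ell$ so every first acceptance is a success, and telescoping across $k=1,\ldots,j-1$ yields the leading term $1-\exp\pth{-\frac{1}{m}\sum_{r=1}^{j-1}c_r}$ of $f_j$. For $k\geq j$, conditioning on the first acceptance lying in interval $k$ makes the two coordinates of $(t_1,v_1)$ independent: the mass level $c(v_1):=\sum_i\Prob{X_i\geq v_1}$ is uniform on $[0,c_k]$, and $t_1-(k-1)/m$ has density $c_ke^{-c_ku}/(1-e^{-c_k/m})$ on $[0,1/m]$. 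If $c(v_1)\leq\ell'$ the algorithm has already succeeded, contributing a factor $\ell'/c_k$ from the uniform marginal. If $c(v_1)>\ell'$, the honest success probability $\frac{\ell'}{c(v_1)}\pth{1-e^{-c(v_1)(1-t_1)}}$, integrated against the uniform $v_1$-marginal, produces an exponential-integral $\int_{\ell'}^{c_k}e^{-c'T}/c'\,\dif c'$ with no elementary form. To circumvent this, I lower-bound the true success event by the enlarged event \emph{``the first arrival in $(t_1,1]$ with value above $\tau_k$ in fact has value above $\ell$''}. The enlargement is valid because $v_1\geq\tau_k$ and $v_1<\ell$ together imply that any arrival with $v>\ell$ automatically satisfies $v>v_1$ and is therefore accepted by the algorithm. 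The probability of the enlarged event is $\tfrac{\ell'}{c_k}(1-e^{-c_k(1-t_1)})$, a function of $t_1$ alone, which decouples the two integrals; moreover the cross-term $e^{-c_ku}\cdot e^{-c_k(1-u-(k-1)/m)}=e^{-c_k(m-k+1)/m}$ is $u$-independent, and the $t_1$-integral collapses to exactly the $k$-th summand of $f_j$.

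Assembling the contributions from $k<j$ and $k\geq j$ and dividing by $1-e^{-\ell'}$ gives the inner minimum in \Eqref{tighter:best1of2:apdx}; combining with Case~1 via standard stochastic dominance yields the stated lower bound on the competitive ratio. The numerical optimization that picks $c_1,\ldots,c_{10}$ to drive this bound above $0.883$ is routine and is deferred to \apdxref{E}. \textbf{The chief obstacle} is the sub-case $c(v_1)>\ell'$, where an honest evaluation involves non-elementary exponential integrals; the indispensable trick is enlarging the random ``ceiling'' $c(v_1)$ to the deterministic $c_k$, which is lossy but is precisely what uncouples the $v_1$- and $t_1$-integrals and produces the elementary closed form $f_j$ that appears in the statement.
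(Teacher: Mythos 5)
Your proposal is correct and follows essentially the same route as the paper: Case 1 via the measure of the staircase region, and Case 2 by locating the first acceptance in interval $k$, splitting on whether the accepted value already exceeds $\ell$, and in the lossy sub-case replacing the random accepted level $v_1$ by the deterministic threshold $\tau_k$ so that the event ``the first subsequent arrival above $\tau_k$ is above $\ell$'' (which implies success) has the elementary probability $\frac{\ell'}{c_k}(1-e^{-c_k(1-t_1)})$. The paper phrases this as a union of infinitesimal events over the first-acceptance time $t^\ast$ rather than as an explicit conditional density, but the computation and the key relaxation are identical.
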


\begin{proof}
Denote by $\Alg$ the outcome of the strategy adhering to threshold $\tau$. We employ stochastic dominance to compare $\Prob{\Alg \geq \ell}$ and $\Prob{Z \geq \ell}$.

\textbf{Case 1:} $\ell \in [0, \tau_m]$. Define the region $U = \{ (t, v) : 0 \leq t \leq 1, v \geq \tau(t) \}$. The measure of $U$ is given by
\begin{align*}
    \mu(U) &= \sum_{i=1}^n \Prob{X_i \text{ arrives in }U } \\
    &= \sum_{i=1}^n \int_0^1 \Prob{X_i \geq \tau(t)} \dif t \\
    &= \sum_{i=1}^n \sum_{j=1}^m \frac{1}{m} \Prob{X_i \geq \tau_j} \\
    &= \sum_{j=1}^m \frac{1}{m} \sum_{i=1}^n \Prob{X_i \geq \tau_j} \\
    &= \sum_{j=1}^m \frac{1}{m} c_j, 
\end{align*}
leading to 
\begin{equation}
    \Prob{\Alg \geq \ell} \geq 1 - e^{-\mu(U)} = 1 - e^{-\frac{1}{m}\sum_{j=1}^m c_j} \geq (1 - e^{-\frac{1}{m}\sum_{j=1}^m c_j}) \Prob{Z \geq \ell}. \eqlab{best1of2:iid:curvethresholds:first}
\end{equation}

\textbf{Case 2:} For $\ell \in [\tau_j, \tau_{j-1}]$ with $1 \leq j \leq m$, we consider values $\ell$ within the thresholds $\tau_j$ and $\tau_{j-1}$. Define $U_\ell=\{(t, v) : 0\leq t\leq 1, v\geq \ell\}$, representing the region of where values exceed $\ell$. The measure of $U_\ell$, $\mu(U_\ell)$, equals $\sum_{i=1}^n \Prob{X_i \geq \ell}$.

Consider $\mathcal{B}_j = \{(t, v) : 0\leq t\leq \frac{j-1}{m}, v\geq \tau(t) \}$, which captures the region before time $\frac{j-1}{m}$ with values above the threshold function $\tau(t)$. The measure of $\mathcal{B}_j$ is computed as 
\begin{align*}
    \mu(\mathcal{B}_j) &= \sum_{i=1}^n \Prob{X_i \text{ arrives in }B_j } \\
    &= \sum_{i=1}^n \int_0^{(j-1)/m} \Prob{X_i \geq \tau(t)} \dif t \\
    &= \sum_{i=1}^n \sum_{r=1}^{j-1} \frac{1}{m} \Prob{X_i \geq \tau_r} \\
    &= \frac{1}{m}\sum_{r=1}^{j-1}c_r.
\end{align*}

Assuming $\mathcal{B}_j$ contains no realizations, we guess the first value above $\tau(t)$ at time $t^\ast \geq \frac{j-1}{m}$. For $k\geq j$, we guess $t^\ast$ within the interval $\left[\frac{k-1}{m}, \frac{k}{m}\right]$. Consider the regions
\[
\mathcal{C}_{k, t^\ast} = \{(t, v) : \frac{k-1}{m} \leq t \leq t^\ast, v\geq \tau(t)\}, 
\]
and 
\[
\mathcal{D}_{k, t^\ast} = \{(t, v) : t^\ast \leq t \leq 1, v\geq c_k\}. 
\]
Consider the event $\xi_{t^\ast}$ that the region $\mathcal{B}_k \cup \mathcal{C}_{k, t^\ast}$ is empty and that there is a realization from time $t^\ast$ to time $t^\ast+\dif t^\ast$ above $\ell$, \emph{or} that the region $\mathcal{B}_k \cup \mathcal{C}_{k, t^\ast}$ is empty, that there is a realization from time $t^\ast$ to time $t^\ast+\dif t^\ast$ above $\tau(t^\ast)$ but below $\ell$, and that the region $\mathcal{D}_{k, t^\ast}$ contains at least one value, and the first such value exceeds $\ell$. This event imply the algorithm gets a value at least $\ell$, because it either immediately succeeds in getting a value above $\ell$, or it gets a value above $\tau(t^\ast)$ and below $\ell$, but then selects the first value in $\mathcal{D}_{k, t^\ast}$ which is above $\ell$. 

Consider the probability of the region $\mathcal{D}_{t^\ast}$ being non-empty, and the first realization in the region having value at least $\ell$. This probability can be computed as 

\begin{align*}
    \int_{t^\ast}^1 e^{-(t-t^\ast)c_k} \mu(U_\ell) \dif t = (1-e^{-(1-t^\ast)c_k}) \frac{\mu(U_\ell)}{c_k}
\end{align*}

Hence, the probability of event $\xi_{t^\ast}$ can be computed as 

\begin{align}
    \underbrace{\exp{\left(-\left(\frac{1}{m}\sum_{r=1}^{k-1}c_r + \left(t^\ast - \frac{k-1}{m}\right)c_k\right)\right)}}_{\text{Probability that }\mathcal{B}_k \cup \mathcal{C}_{k, t^\ast} \text{ is empty}} \times \left( \underbrace{\mu(U_\ell) \dif t^\ast}_{\text{value }\geq \ell \text{ from } t^\ast \text{ to } t^\ast + \dif t^\ast} + \underbrace{(c_k - \mu(U_\ell)) \dif t^\ast}_{\text{value in }[\tau_k, \ell) \text{ from }t^\ast \text{ to }t^\ast + \dif t^\ast} \cdot  \underbrace{\left(1 - e^{-(1-t^\ast)c_k}\right) \frac{\mu(U_\ell)}{c_k}}_{\text{first value in }\mathcal{D}_{k, t^\ast} \text{ is }\geq \ell}\right). \eqlab{term:best1of2:dtast}
\end{align}

Integrating \Eqref{term:best1of2:dtast} over $t^\ast$ from $\frac{k-1}{m}$ to $\frac{k}{m}$ yields the probability for a fixed $k \geq j$ as:

\begin{equation*}
    e^{-\frac{1}{m}\sum_{r=1}^{k-1}c_r} \cdot \frac{e^{-\frac{(m+1) c_k}{m}} \left(m e^{c_k} \ell' \left(e^{\frac{c_k}{m}}-1\right) \left(2 c_k - \ell'\right) + c_k \ell' e^{\frac{k c_k}{m}} \left(\ell' - c_k\right)\right)}{m c_k^2}, \eqlab{term:best1of2:dtast:simplified}
\end{equation*}

where $\ell' = \mu(U_\ell) \in [c_{j-1}, c_j]$. Summing the event of $\mathcal{B}_j$ being nonempty, and \Eqref{term:best1of2:dtast:simplified} over $k \geq j$, we obtain the total probability of receiving a value exceeding $\ell$:
\begin{align}
    \Prob{\Alg \geq \ell} &\geq 1-e^{-\frac{1}{m}\sum_{r=1}^{j-1}c_r} + \sum_{k=j}^m \left(e^{-\frac{1}{m}\sum_{r=1}^{k-1}c_r} \cdot  \frac{e^{-\frac{(m+1) c_k}{m}} \left(m e^{c_k} \ell' \left(e^{\frac{c_k}{m}}-1\right) \left(2 c_k - \ell'\right)+c_k \ell' e^{\frac{k c_k}{m}} \left(\ell' - c_k\right)\right)}{m c_k^2}\right) \nonumber \\ 
    &= f_j(c_1, ..., c_m, \ell') = \frac{f_j(c_1, ..., c_m, \ell')}{1-e^{-\ell'}}\Prob{Z \geq \ell'}. \eqlab{best1of2:iid:curvethresholds:second}
\end{align}

Therefore, using stochastic dominance by combining \Eqref{best1of2:iid:curvethresholds:first} and minimizing \Eqref{best1of2:iid:curvethresholds:second} over $\ell'\in [c_{j-1}, c_j]$, the competitive ratio is thereby bounded from below by:

\begin{equation*}
    \min \left\{ 1-e^{-\frac{1}{m}\sum_{i=1}^m c_i}, ~~ \min_{1\leq j \leq m} \inf_{\ell' \in [c_{j-1}, c_{j}]} \frac{f_j(c_1, ..., c_m, \ell')}{1-e^{-\ell}}  \right\}.
\end{equation*}
\end{proof}

\begin{algorithm}
\caption{$1-k^{-k/5}$ competitive algorithm for \IID \textsc{Top-$1$-of-$k$}. }\algolab{best1ofk:iid:lastimprovement}
\begin{algorithmic}
    \State Choose $\tau$ such that $\sum_{i=1}^n \Prob{X_i \geq \tau}=L = e^{\sqrt{k}}$.
    \For{$i = 1, \ldots, n$}
        \If{$X_i > \tau$}
            \State Accept $X_i$. 
            \State $\tau \gets X_i$.
            \State If we have accepted $k$ items, break.
        \EndIf
    \EndFor
\end{algorithmic}
\end{algorithm}

\paragraph{Specializing the algorithm for general $k$.} It is difficult to express the competitive ratio of the algorithms above for general $k$. Here, we give an explicit super-exponential dependence on $k$. \algoref{best1ofk:iid:lastimprovement} sets an initial threshold $\tau$, chosen so that $\sum_{i=1}^n \Prob{X_i \geq \tau} = L = e^{\sqrt{k}}$. Upon encountering a value $v$ that surpasses $\tau$, the algorithm selects $v$ and updates $\tau$ to this new $v$. This selection process continues until either $k$ values have been chosen or all random variables have been examined.

\begin{lemma}
\lemlab{best1ofk:iid:improvement:generalk}
    \algoref{best1ofk:iid:lastimprovement} achieves a competitive ratio of at least $1 - k^{-k/5}$.
\end{lemma}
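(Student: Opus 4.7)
My plan is to specialize the stochastic-dominance framework of \lemref{best1ofk:iid:firstimprovement} to $L = e^{\sqrt{k}}$ and then control the resulting Poisson expectation via concentration of records. The case $\ell \leq \tau$ is immediate from the Case~1 analysis there: the ratio is at least $1-e^{-L} = 1-\exp\!\bigl(-e^{\sqrt{k}}\bigr)$, which dwarfs $1-k^{-k/5}$ for every $k \geq 1$.

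For $\ell > \tau$, I would first check that the derivative-in-$\ell'$ argument of \lemref{best1ofk:iid:firstimprovement} still yields monotonicity in our regime $L \geq k$ (the opposite of the $k \geq \zeta_k$ regime handled explicitly there). Setting $\alpha_i := \sum_{j=0}^{k} \tfrac{{i+1 \brack j}}{(i+1)!} = \Prob{R(i+1) \leq k}$, where $R(m)$ counts left-to-right maxima in a uniformly random permutation of $m$ elements, an Abel summation rewrites the derivative as
\[
    g_{k-1}(u)(1 - \alpha_k) + \sum_{i=k}^\infty g_i(u)(\alpha_i - \alpha_{i+1}) \; \geq \; 0,
\]
where $g_i(u) = e^{-u} u^i/i!$ and $u = L - \ell'$; non-negativity comes for free since $\alpha_i$ is non-increasing in $i$ (adding an element can only add a record). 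So the worst case is $\ell' \to 0^+$ and the ratio collapses to $1 - \Ex_{M \sim \Poisson(L)}\!\bigl[\Prob{R(M+1) > k}\bigr]$.

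To upper bound the expectation on the right by $k^{-k/5}$, I would use the classical decomposition $R(m) = \sum_{i=1}^m I_i$ with independent $I_i \sim \Bern(1/i)$, giving $\Ex[R(m)] = H_m \sim \ln m$, combined with the multiplicative Chernoff bound $\Prob{R(m) > k} \leq (eH_m/k)^k$. Splitting the Poisson expectation at $M^\ast = L^2 = e^{2\sqrt{k}}$: on $\{m \leq M^\ast\}$ one has $H_m \leq 2\sqrt{k} + O(1)$, so the Chernoff bound is $(2e/\sqrt{k})^k$, which beats $k^{-k/5}$ for all sufficiently large $k$; on $\{m > M^\ast\}$, the Poisson tail $\Prob{\Poisson(L) > L^2}$ is doubly exponentially small in $\sqrt{k}$, hence entirely negligible. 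For the small range of $k$ outside the asymptotic regime, $k^{-k/5}$ is loose (e.g.\ $k^{-k/5} \geq 0.2$ for $k \leq 5$), and the exact competitive ratios already recorded in \lemref{best1ofk:iid:firstimprovement} (which are monotone in $k$, since a $k$-slot algorithm can ignore a slot) trivially dominate $1-k^{-k/5}$.

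The main obstacle is the gap between the Chernoff bound's natural exponent $k/2$ and the target $k/5$: the split point $M^\ast = L^2$ has to absorb the sub-exponential $e^k$ correction hidden inside the Chernoff estimate as well as the Poisson fluctuations in $M$, while leaving two separate error terms each strictly smaller than $k^{-k/5}$. Once the splitting is pinned down, the remaining verification is routine arithmetic.
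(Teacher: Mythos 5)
Your approach is essentially the same as the paper's: Case~1 is immediate, and Case~2 bounds the probability of $\geq k$ records among the (Poisson-many) values in $[\tau,\ell]$ via a Chernoff bound on the independent $\Bern(1/i)$ record indicators. You are, if anything, more careful than the paper on two points --- the Abel-summation check that the ratio is monotone in $\ell'$, and the split at $M^\ast = L^2$ to control the random sample size of the record process --- both of which the paper's proof leaves implicit (it simply asserts ``$R_1,\ldots,R_Y$ are independent'' and invokes a ``standard Chernoff bound''). Both your argument and the paper's carry the same residual slack that you flag as ``the main obstacle'': the estimate $(2e/\sqrt{k})^k$ beats $k^{-k/5}$ only for $k$ in the hundreds, so the intermediate range has to be absorbed either by looser accounting or by appealing to the numerical bounds from \lemref{best1ofk:iid:firstimprovement}; the paper is silent on this, so your acknowledgment of the gap is a fair reflection of the state of the argument, not a defect unique to your proof.
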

\begin{proof}
    We apply stochastic dominance to compare $\Prob{\Alg \geq \ell}$ against $\Prob{Z \geq \ell}$.
    
    \paragraph{Case 1: $\ell \in [0, \tau]$.} The presence of any value exceeding $\tau$ ensures the algorithm will choose a value greater than $\tau$, and thus $\ell$. This leads to:
    \begin{equation*}
        \frac{\Prob{\Alg \geq \ell}}{\Prob{Z \geq \ell}} \geq \Prob{\Alg \geq \ell} \geq 1 - e^{-L} \geq 1 - k^{-k/5}.
    \end{equation*}
    
    \paragraph{Case 2: $\ell \in [\tau, \infty]$.} Define $\ell' = \sum_{i=1}^n \Prob{X_i \geq \ell}$, with $\ell'$ ranging in $(0, L)$. Consider the $Y$ random variables with values within $[\tau, \ell]$, labeled as $X_{i_1}, \ldots, X_{i_Y}$. Let $R_j = 1$ if $X_{i_j}$ is the maximum among $X_{i_1}, \ldots, X_{i_j}$, and $0$ otherwise. The sum $M = \sum_{j=1}^Y R_j$ counts the right-to-left maxima. Moreover, $R_1, ..., R_Y$ are independent. If $M < k$ and at least one value exceeds $\ell$, the algorithm will select a value $\geq \ell$. The probability that $M$ reaches $k$ can be bounded by $k^{-k/5}$ using a standard Chernoff bound. Hence:
    \begin{equation*}
        \frac{\Prob{\Alg \geq \ell}}{\Prob{Z \geq \ell}} \geq \frac{\Prob{M < k} (1 - e^{-\ell'})}{1 - e^{-\ell'}} \geq 1 - k^{-k/5}.
    \end{equation*}
    The lemma is then established through stochastic dominance.
\end{proof}

\section{Prophet Secretary Non-\IID Case. }
\label{eta5}
\seclab{main}
We now go back to the non \IID prophet-secretary. In \cite{csz-pstbs-18}, Correa, Saona, and Ziliotto used Schur-convexity to study a class of algorithms known as \textit{blind} algorithms. In particular, they consider discrete blind algorithms. The algorithm is characterized by a \textit{decreasing} threshold function $\alpha: [0, 1]\to [0,1]$. Letting $\ICDFY{Z}{q}$ denote the $q$-th quantile of the maximum distribution (i.e., $\Prob{Z \leq \ICDFY{Z}{q}}=q$), the algorithm accepts realization $v_i$ if $v_i \geq \ICDFY{Z}{\alpha(i/n)}$ (i.e., if it is in the top $\alpha(i/n)$ quantile of $Z$). They characterized the competitive ratio $c$ of an algorithm that follows threshold function $\alpha$ (as $n\to \infty$) as \cite{csz-pstbs-18}
\begin{align}
    c \geq \min \pth{ 1-\int_{0}^1 \alpha(x)\dif x , \min_{x\in [0,1]} \pth{ \int_{0}^x \frac{1-\alpha(y)}{1-\alpha(x)}  \dif y + \int_{x}^1 e^{\int_0^y \log \alpha(w) dw } \dif y}   } \eqlab{Raimundo}.
    \end{align}
Looking at  \Eqref{Raimundo}, the reader might already see many parallels with \Eqref{monstrosity}, even though one is based on quantiles of the maximum, and the other is based on summation thresholds. Correa \etal  resorted to numerically solving a stiff, nontrivial optimal integro-differential equation. They find an $\alpha$ function such that $c \geq 0.665$ (and then resorted to other similar techniques to show the main $0.669$ result). They also showed than no blind algorithm can achieve a competitve ratio above $0.675$.

\subsection*{New analysis for the Non-\IID Case}
\begin{figure}
    \centering
    \includegraphics[width=0.75\textwidth]{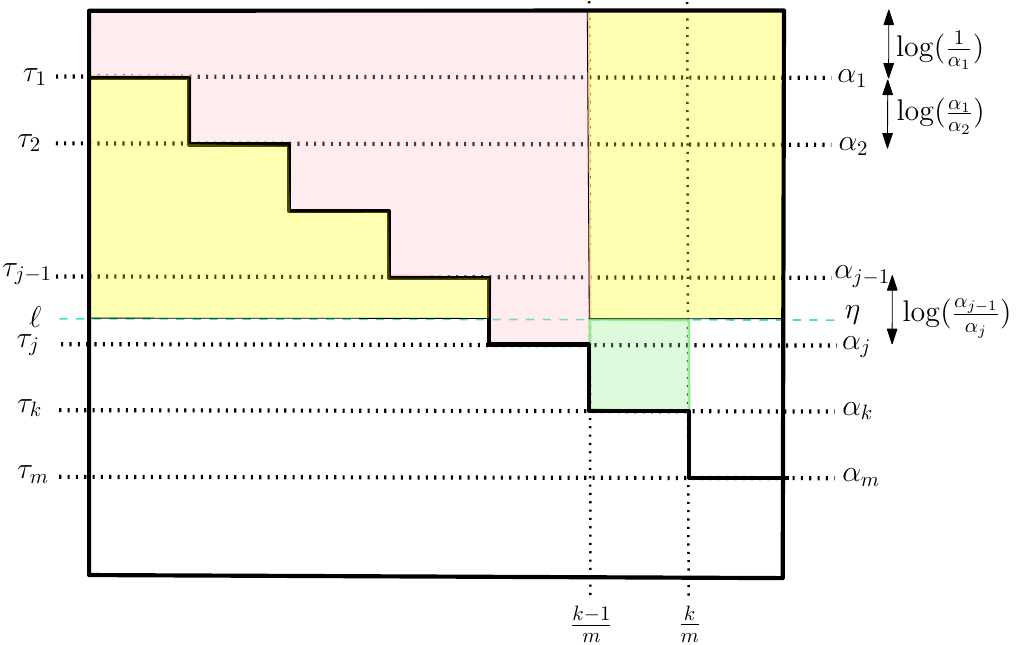}
    \caption{Analysis visualization of \lemref{prophet:secretary:improvement}}
    \figlab{mainproofidea:fig}
\end{figure}

\seclab{mainanalysis}
\paragraph{Algorithm} Using the shards machinery developed thus far, we introduce the new analysis for the prophet secretary. The algorithm employs a straightforward strategy with $m=16$ thresholds $\tau_1 > \ldots > \tau_m$. The threshold function $\tau : [0,1] \to \Re_{\geq 0}$ is defined as $\tau(t) = \tau_{\lceil tm \rceil }$, making $\tau$ a step function. The algorithm selects the first realization $(t_i, v_i)$ satisfying $v_i \geq \tau(t_i)$. It is important to note that the thresholds $\tau_i$ are chosen such that
\begin{equation*}
    \Prob{Z \leq \tau_i} = \alpha_i,
\end{equation*}
where $\alpha_0 = 1 > \alpha_1 > \ldots > \alpha_{m+1} = 0$.

\begin{lemma}
\lemlab{prophet:secretary:improvement}
        The competitive ratio $c$ of the above algorithm satisfies 
    \begin{equation}
            c \geq \min_{1\leq j \leq m+1} \min_{\alpha_{j} \leq \nu \leq \alpha_{j-1}}\frac{f_j({\alpha_1, ..., \alpha_m} , \eta)}{1-\eta}
            \eqlab{prophet-sec-expr},
    \end{equation}
    where
    \begin{align}
    f_j({\alpha_1, ..., \alpha_m}, \eta) &=\frac{1}{m}\sum_{k=1}^{j-1} \pth{1-\alpha_k} + \sum_{k=j}^m \pth{ \prod_{\nu=1}^{k-1} \alpha_{\nu} }^{\frac{1}{m}} w_k q_{t, k} \eqlab{fjterms}~~, \\
    w_k &= \sum_{\nu=0}^{j-1} e^{-s_{\nu} } (1-e^{-r_{\nu} })\frac{1}{m-(k-1)+\nu}  \eqlab{wkexpr}~~, \\
    r_{\nu} &=\frac{m-(k-1)+\nu}{m}\log \frac{\hat{\alpha}_{\nu}}{\hat{\alpha}_{\nu+1}} \eqlab{rkexpr} ~~,\\
    \hat{\alpha_\nu} &= \alpha_\nu \text{ if }\nu\leq j-1 \text{ and }\eta \text{ if }\nu=j \eqlab{alphakexpr}~~, \\
    s_\nu &= \sum_{\beta=0}^{\nu-1} r_\beta \eqlab{skexpr} ~~,\\
    q_{\eta, k} &= \sum_{\beta=0}^{\infty} e^{-\frac{1}{m}\log \frac{\eta}{\alpha_k} }\pth{\frac{1}{m}\log \frac{\eta}{\alpha_k}}^\beta \frac{1}{\beta!} \frac{1}{\beta+1} = \frac{1- \pth{\frac{\alpha_{k}}{\eta}}^{1/m} }{\frac{1}{m}\log \pth{ \frac{\eta}{\alpha_k} } } \eqlab{unstable}.
\end{align}

\end{lemma}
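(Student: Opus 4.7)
The plan is to apply the sharding framework from \secref{sharding}: shard each $X_i$ into $K \to \infty$ \IID copies with \CDF $F_i^{1/K}$, so that by \lemref{sharding-break} the count of shards with value above any threshold $\tau$ is Poisson with rate $-\log \Prob{Z \leq \tau}$. I then apply standard stochastic dominance across the $m+1$ intervals $[\tau_j, \tau_{j-1}]$ (with $\tau_0 = \infty$, $\tau_{m+1} = 0$). Since $\Prob{Z \leq \tau_i} = \alpha_i$, for any $\ell \in [\tau_j, \tau_{j-1}]$ setting $\eta = \Prob{Z \leq \ell} \in [\alpha_j, \alpha_{j-1}]$ gives $\Prob{Z \geq \ell} = 1 - \eta$, so it suffices to establish $\Prob{\Alg \geq \ell} \geq f_j(\alpha_1, \ldots, \alpha_m, \eta)$ for each $j$ and each such $\eta$; minimizing the ratio then delivers \Eqref{prophet-sec-expr}.

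For fixed $j$, I would partition the event $\{\Alg \geq \ell\}$ according to the chunk $k \in \{1, \ldots, m\}$ (each of time-length $1/m$) in which the algorithm first accepts. For $k \leq j-1$, any acceptance in chunk $k$ automatically has value $\geq \tau_k \geq \tau_{j-1} \geq \ell$, so this sub-event is precisely the telescoped shard event from \lemref{discrete_blind_lb}, truncated to the first $j-1$ chunks; its probability contributes the first sum $\frac{1}{m}\sum_{k=1}^{j-1}(1-\alpha_k)$ in \Eqref{fjterms}.

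For $k \geq j$ I first condition on ``no acceptance occurred in chunks $1,\ldots,k-1$'', which by \lemref{simplified-upper} has probability $\pth{\prod_{\nu=1}^{k-1}\alpha_\nu}^{1/m}$, matching the prefactor in the second sum of \Eqref{fjterms}. Given this conditioning, a shard in band $[\tau_{\nu+1},\tau_\nu]$ with $\nu \leq j-1$ can only be located in chunks $1,\ldots,\nu$ or chunks $k,\ldots,m$ (otherwise it would already have triggered an acceptance), giving a total allowed time-length of $(m-(k-1)+\nu)/m$ and hence the Poisson rate $r_\nu$ of \Eqref{rkexpr}, where the boundary convention $\hat{\alpha}_j = \eta$ correctly restricts the topmost band to values above $\ell$ for the edge case $\nu = j-1$. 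I then construct the shard event: for some $\nu \in \{0,\ldots,j-1\}$, the region above $\tau_\nu$ is devoid of shards (probability $e^{-s_\nu}$), the band $[\hat{\tau}_{\nu+1},\tau_\nu]$ contains at least one shard (probability $1-e^{-r_\nu}$), the unique maximum-valued such shard (which is a genuine realization and has value $\geq \ell$) arrives within chunk $k$ rather than some other part of its allowed time region (probability $1/(m-(k-1)+\nu)$, yielding $w_k$ in \Eqref{wkexpr}), and finally this target shard precedes all $\beta$ ``competitor'' shards of band $[\tau_k,\ell]$ that land in chunk $k$, Poisson-averaged via the same $1/(\beta+1)$ trick as in \lemref{prooffromthebook} to produce $q_{\eta,k}$ of \Eqref{unstable}. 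Disjointness of these time-value rectangles combined with \lemref{replace:poisson} justifies multiplying the independent Poisson probabilities.

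Summing the two contributions and dividing by $\Prob{Z \geq \ell} = 1-\eta$ yields $f_j(\alpha_1,\ldots,\alpha_m,\eta)/(1-\eta)$; taking the minimum over $j$ and the infimum over $\eta \in [\alpha_j,\alpha_{j-1}]$ produces \Eqref{prophet-sec-expr}. The main obstacle is the construction of the shard event for $k \geq j$: it must simultaneously (i) imply the algorithm accepts a value strictly exceeding $\ell$ rather than merely $\tau_k$, and (ii) decompose as a product of Poisson probabilities over disjoint time-value rectangles, with especially delicate bookkeeping for the truncated band at $\nu = j-1$ (where the cutoff moves from $\tau_j$ to $\ell$) and for tracking which chunks remain ``available'' to each band after conditioning. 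Once this event is in place, the remaining arithmetic is telescoping and Poisson moment computations entirely analogous to those in \lemref{prooffromthebook}, \lemref{simplified-upper}, and \lemref{discrete_blind_lb}.
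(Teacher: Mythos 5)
Your proposal reproduces the paper's proof essentially step for step: stochastic dominance over the $m+1$ intervals; a disjoint union of shard events (acceptance before chunk $j$, handled via \lemref{discrete_blind_lb}, plus one event $\zeta_k$ for each $k \geq j$); the factorization of $\Prob{\zeta_k}$ into the $A_k$-empty factor from \lemref{simplified-upper}, the $w_k$ factor from the semirow decomposition with boundary convention $\hat{\alpha}_j = \eta$, and the $q_{\eta,k}$ competitor factor via the $1/(\beta+1)$ Poisson averaging trick. The approach is correct and essentially identical to the paper's.
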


\begin{proof}
    We would like to compare $\Prob{Z\geq \ell}$ vs $\Prob{\Alg \geq \ell}$ as before. For this, we again break the analysis on where $\ell$ lies. 
    
    \paragraph{For $\ell\in [0, \tau_m)$} See \figref{mainproofidea:fig}. We use the trivial upper bound $\Prob{Z\geq \ell}\leq 1$. On the other hand, consider when $\Alg \geq \ell$. Using \lemref{discrete_blind_lb}, we have $\Prob{\Alg \geq \ell} \geq \frac{1}{m}\sum_{i=1}^m \left( 1-\alpha_i\right)$. Hence
\[
\frac{\Prob{\Alg \geq \ell}}{\Prob{Z\geq \ell}} \geq \frac{1}{m}\sum_{i=1}^m \left( 1-\alpha_i\right).
\]
This case is captured in \Eqref{prophet-sec-expr} when $j=m+1$ and $\eta=\alpha_{m+1}=0$. 

\paragraph{The case of $\ell \in [\tau_{j}, \tau_{j-1}]$}
Again, see \figref{mainproofidea:fig}. Let $\eta = \Prob{Z\leq \ell}$. Note $\eta \in [\alpha_j, \alpha_{j-1}]$ and $\Prob{Z> \ell}=1-\eta$. 
 Now, we lower bound $\Prob{\Alg \geq \ell}$. We again give an event $\xi$ on the shards that implies $\Alg \geq \ell$ and with $\Prob{\xi}=f_j({\bf \alpha}, \eta)$. This would imply by stochastic dominance the result. We recommend looking at \figref{mainproofidea:fig} throughout the explanation. 

Formally, $\xi$ consists of a \textit{disjoint union} of $m-j+2$ events. Let $T$ be the time (if any) when we select a value according to our strategy. The first event $\chi$ is the event that $T\leq j-1$. This would imply that the algorithm received a value at least $\tau_{j-1} \geq \ell$. 

    The subsequent $m - j + 1$ events, denoted as $\zeta_k$ for $j \leq k \leq m$, are defined such that the region
    \begin{equation*}
        A_{k} = \{(t, v) : 0 \leq t \leq \frac{k - 1}{m}, v \geq \tau(t) \}, 
    \end{equation*}
    (illustrated in pink in \figref{mainproofidea:fig}) lacks shards, whereas the region
    \begin{equation*}
        B_\ell = \{(t, v) : 0 \leq t \leq 1, v \geq \ell \} \setminus A_k,
    \end{equation*}
    (depicted in yellow in \figref{mainproofidea:fig}) contains at least one shard, and the highest value shard $(t^\ast, v^\ast) \in B_\ell$ appears within the time frame $t^\ast \in [\frac{k - 1}{m}, \frac{k}{m}]$, prior to any shard within the region
    \begin{equation*}
        C_k = \{(t, v) : \frac{k - 1}{m} \leq t \leq \frac{k}{m}, \tau(t) \leq v < \ell \},
    \end{equation*}
    (portrayed in green in \figref{mainproofidea:fig}).

    The event $\zeta_k$ demands further clarification. Absence of shards in $A_k$ (the pink region) permits the algorithm to proceed until time $(k - 1)/m$. The highest value shard $(t^\ast, v^\ast)$ in $B_\ell$ (the yellow region) represents a genuine realization of some $X_i$. This shard should arrive between time $(k - 1)/m$ and $k/m$. However, shards within $C_k$ (the green region) could potentially correspond to genuine realizations of some $X_j$ and precede $v^\ast$, leading the algorithm to opt for a value below $\ell$ in such instances. Therefore, it is imperative for $t^\ast$ to precede all shards within $C_k$, ensuring the algorithm secures an actual realization valued $\geq \ell$. Notably, all these events $\chi, \zeta_j, ..., \zeta_m$ are mutually exclusive.

Invoking \lemref{discrete_blind_lb}, the probability of $\chi$ happening is at least $\frac{1}{m}\sum_{k=1}^{j-1} 1-\alpha_k$. This is the first term in the RHS of \Eqref{fjterms}. 
Next, we compute the probability of $\zeta_k$. The probability that $A_k$ is devoid of shards is at least $\left(\prod_{\nu=1}^{k-1}\alpha_\nu\right)^{1/m}$, as per \lemref{simplified-upper}. This corresponds to the first factor in the RHS of \Eqref{fjterms}. The term $w_k$ in \Eqref{fjterms} and defined in \Eqref{wkexpr} denotes the probability that the highest value shard within $B_\ell$ (the yellow region) arrives between $(k-1)/m$ and $k/m$. We unpack the expression here. Specifically, $\hat{\alpha}_\nu$ is used in lieu of $\alpha_\nu$, defined in \Eqref{alphakexpr}, because the yellow region is bounded below by $\ell$ rather than $\tau_j$, necessitating the use of $\hat{\alpha}_j = \eta$ instead of $\alpha_j$. For $r<j$, $\hat{\alpha}_r = \alpha_r$ is used. Now consider the ``yellow semirow'' 
\begin{equation*}
        R_\nu = \{(t, \hat{v}) : 0\leq t \leq 1 \text{ and } \max(\ell, \tau_{\nu}) \leq \hat{v} \leq \tau_{\nu+1} \} \setminus A_k. 
\end{equation*}

$R_\nu$ has a Poisson rate of $r_\nu$ as defined in \Eqref{rkexpr}, as the length of the yellow semirow is $(m-(k-1)+\nu)/m$, and the entire row's Poisson rate is $\log \frac{\hat{\alpha}_\nu}{\hat{\alpha}_{\nu+1}}$. Similarly, $s_\nu$ defined in \Eqref{skexpr} denotes the Poisson rate of $\bigcup_{j<\nu}R_j$. The probability $w_k$ is derived by requiring some row $R_\nu, 0\leq \nu \leq j-1$, to contain at least one shard, while requiring $\bigcup_{j<\nu}R_j$ to be empty, and the highest value shard in $R_\nu$ to appear between time $(k-1)/m$ and $k/m$. This combined with $A_k$ being empty implies that there is some actual realization $X_i$ that arrives from time $(k-1)/m$ to $k/m$.

Finally, we unpack $q_{\eta, k}$. Since we have already conditioned that $A_k$ (red region) has no shards, the maximum value shard $(t^\ast, v^\ast)$ in $B_\ell$ (yellow region) arrives in time $(k-1)/m$ to $k/m$, then we only need to ensure that $t^\ast$ arrives before all shards in the region $C_k$, which is the expression $q_{\eta, k}$. The region $C_k$ has Poisson rate $\frac{1}{m}\log \frac{\eta}{\alpha_k}$. We count how many shards $0\leq \beta \leq \infty$ are in $C_k$, and require that $v^\ast$ arrives before all of them, which happens with probability $1/(\beta+1)$. 
\end{proof}

\paragraph{Optimization} The right hand side of \Eqref{prophet-sec-expr} can be maximized for $\boldsymbol \alpha$ satisfying $\alpha_0=1 > \alpha_1 > \ldots > \alpha_m > \alpha_{m+1}=0$. We used Python to optimize the expression and report $m=16$ alpha values in \apdxref{C} with $c\geq \constant$. All computations were done with doubles using a precision of 500 bits (instead of the default 64). We finally obtain the main result. 

\begin{theorem}
    There exists an $m=16$ threshold blind strategy for the prophet secretary problem that achieves a competitive ratio of at least $\constant$. 
\end{theorem}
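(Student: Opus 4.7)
The plan is to directly invoke \lemref{prophet:secretary:improvement}, which has already done all the conceptual work: it reduces the theorem to a parameter optimization. Concretely, to establish the theorem it suffices to exhibit a vector $(\alpha_1, \ldots, \alpha_{16})$ with $1 > \alpha_1 > \cdots > \alpha_{16} > 0$ such that, with $\alpha_0 = 1$ and $\alpha_{17} = 0$,
\[
\min_{1 \leq j \leq 17} \; \inf_{\alpha_j \leq \eta \leq \alpha_{j-1}} \frac{f_j(\alpha_1, \ldots, \alpha_{16}, \eta)}{1 - \eta} \;\geq\; 0.6724.
\]
The majorization argument built into the lemma then directly yields the claimed competitive ratio. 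So the proof is really an optimization result, and the writeup amounts to reporting an $\alpha$-vector in an appendix and certifying the bound.

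For the search step, I would phrase the objective as a min-max and drive a standard numerical solver: evaluate the inner $\inf$ over $\eta$ in each subinterval $[\alpha_j, \alpha_{j-1}]$ by golden-section search (the integrand is smooth in $\eta$), smooth the outer $\min$ over $j$ via a softmin so that a gradient-based method such as L-BFGS applies, and enforce the ordering constraint through a reparametrization $\alpha_i = \prod_{k \leq i} \sigma(\theta_k)$ in free variables $\theta_k$. With $m = 16$ the search space is low-dimensional enough to restart from many random seeds and to warm-start from the optimal $\alpha$-vector found at $m = 15$.

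The principal obstacle is numerical stability, not the optimization itself. The expression $q_{\eta,k}$ in \Eqref{unstable} is an indeterminate $0/0$ form as $\eta \to \alpha_k$, and the weights $w_k$ in \Eqref{wkexpr} are built from telescoping differences of near-equal exponentials defined through \Eqref{rkexpr}--\Eqref{skexpr}. In double precision, catastrophic cancellation easily perturbs the reported ratio by more than $10^{-3}$, which already exceeds the margin over the prior $0.669$ bound of \cite{csz-pstbs-18}. To handle this I would (i) carry out every evaluation in high-precision arithmetic (the 500-bit \texttt{mpmath} setup mentioned in the excerpt suffices), (ii) replace $1 - (\alpha_k/\eta)^{1/m}$ by an \texttt{expm1}-based formula, and (iii) substitute a Taylor expansion of $q_{\eta,k}$ in $(\eta - \alpha_k)$ whenever $|\eta - \alpha_k|$ is below a safe threshold.

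Finally, to turn a high-precision numerical bound into a rigorous proof rather than a grid observation, I would certify the inner $\inf$ over $\eta$ interval-by-interval. On each $[\alpha_j, \alpha_{j-1}]$ the function $\eta \mapsto f_j/(1-\eta)$ is smooth, and an explicit upper bound $L_j$ on its derivative can be computed symbolically from the closed forms in \Eqref{fjterms}--\Eqref{unstable}. Evaluating the function on a uniform grid of spacing $\delta_j$ and subtracting $L_j \delta_j$ from the grid minimum yields a guaranteed lower bound; choosing $\delta_j$ so that this certified bound still exceeds $0.6724$ closes the argument. The concrete $\alpha$-vector thus certified is reported in \apdxref{C}.
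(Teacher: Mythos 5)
Your proposal follows essentially the same route as the paper: the theorem is reduced entirely to \lemref{prophet:secretary:improvement}, so what remains is to exhibit an $\alpha$-vector and certify the min-max bound numerically, which is exactly what the paper does (with the vector and 500-bit \texttt{mpmath} verification code in \apdxref{C}). One small difference worth noting is in how the instability of $q_{\eta,k}$ near $\eta = \alpha_k$ is handled: you propose an \texttt{expm1}/Taylor rewrite, which gives an accurate two-sided approximation, whereas the paper exploits that the series defining $q_{\eta,k}$ has nonnegative terms and simply truncates it at 30 terms, yielding an unconditional \emph{lower} bound — the cleaner choice here, since only a one-sided bound is needed for the competitive ratio. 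Your interval-by-interval certification with a derivative bound $L_j$ and grid spacing $\delta_j$ is actually somewhat more rigorous than what the paper records for this theorem (the paper's code runs \texttt{scipy.optimize.minimize} with a coarse sanity-check grid); the paper applies a derivative-bounded grid argument of that flavor only in the Semi-Online section.
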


\begin{remark}
    The function $\frac{1- \pth{\frac{\alpha_{k}}{\eta}}^{1/m} }{\frac{1}{m}\log \pth{ \frac{\eta}{\alpha_k} } } $ in \Eqref{unstable} is  numerically unstable for close values of $\alpha_k, \eta$. To resolve this, we \textit{lower bound} it by truncating the summation on the RHS to $30$ terms (instead of $\infty$) and use that as a lower bound on $q_{\eta,k}$. This is referred to as ``stable\_qtk'' in the code. 
\end{remark}

\paragraph{Parameter optimization is not sufficient} Why does the above analysis yield a better competitive ratio for continuous blind strategies? It is important to stress that the set of $m=16$ parameters we derive would \emph{not} improve the analysis from \cite{csz-pstbs-18} from $0.669$ to $0.6724$; in fact, they give a \emph{worse} bound of $0.6675$! Thus it would be incorrect to suggest that we obtain a better competitive ratio because we simply found a better set of parameters. In particular, the constants $f_j(\alpha_1, ..., \alpha_m, \eta)$ we derive are significantly tighter than the $f_j(\alpha_1, ..., \alpha_m)$ that Correa \etal derive. This is because the new bounds utilize all aspects of the geometry involved as seen in the proof. In contrast, the work in \cite{csz-pstbs-18} do this separately using algebraic tools.  Hence we are optimizing for different objectives. 

\section{\IID \textsc{Semi-Online}.}
\label{eta6}
\seclab{iidsemionline}
In this section, we improve the $\approx 0.869$ competitive ratio result from \cite{hs-ttsopi-23} and give a $\approx 0.89$ competitive ratio algorithm for the \IID \textsc{Semi-Online} problem.  As a reminder from the introduction,  in this variant of the prophet inequality problem, the actual values of the variables remain undisclosed. Instead, the gambler is allowed to make $n$ queries, each asking whether ``$X_i \geq \tau_i$'' for a chosen $\tau_i$, which can be determined adaptively. Each random variable is eligible for only \emph{one} query. After all $n$ queries have been exhausted, the gambler selects the variable that holds the highest conditional expectation. Here, as in \cite{hs-ttsopi-23}, we are assuming $X_1, ..., X_n$ are \IID and $n\to \infty$. 

It is worth taking a moment to recap the algorithm from \cite{hs-ttsopi-23}. As a reminder, their algorithm defines thresholds $\tau_1< \tau_2 < \tau_3 < \tau_4=\infty$. It then runs \algoref{semionline:1}. 

\begin{algorithm}
\caption{\IID Semi-Online Algorithm \cite{hs-ttsopi-23}}\algolab{semionline:1}
\begin{algorithmic}[1]
    \State Set $r \gets 1$ and $i^\ast \gets 1$
    \For{$i = 1, \ldots, n$}
        \If{$X_i \geq \tau_r$}
            \State $r \gets r + 1$
            \State $i^\ast \gets i$
        \EndIf
    \EndFor
    \State \Return $X_{i^\ast}$
\end{algorithmic}
\end{algorithm}

Intuitively, the algorithm "raises" its threshold every time a positive response to a query is received, targeting a higher conditional expectation. In their work, \cite{hs-ttsopi-23} optimize the parameters as quantiles of the maximum, selecting $\tau_1 = \Xi(2.035135)$, $\tau_2 = \Xi(0.5063)$, and $\tau_3 = \Xi(0.05701)$, which results in an algorithm that is approximately $0.869$ competitive. The analysis for $\Prob{\Alg \geq \ell}$ is nuanced, considering that the presence of a realization exceeding $\ell$ does not guarantee that subsequent realizations won't fall between a higher threshold and $\ell$, potentially leading to the last successful realization being below $\ell$. Essentially, it's crucial to ensure that the final realization that passes (i.e., receives a "yes" response) is indeed above $\ell$.

A limitation of the current algorithm is its performance when the initial $n/2$ tests fall below $\tau_1$ (happening with a constant probability), and thus reducing the likelihood of later realizations surpassing $\tau_1$. Consequently, the algorithm may fail to achieve any success in its later stages, rendering $\tau_2$ and $\tau_3$ unused.

To address this issue, we employ a similar strategy but perhaps counter intuitively with non-increasing functions. Specifically, we define $k$ non-increasing \textbf{functions} $\tau_1, \tau_2, \ldots, \tau_k: [0, 1] \to \Re_{\geq 0}$, with $\tau_{k+1} = \infty$. The algorithm is then adapted by replacing line 3 in \algoref{semionline:1} with "If $X_i \geq \tau_r(t_i)$", where $t_i$ represents the arrival time of $X_i$.

A standard analysis of this algorithm would be exceedingly tedious, necessitating case-by-case analysis due to the dependencies that arise upon conditioning on the presence of $t$ points within a certain quantile range of the distribution, leading to several complicated nested summations. Indeed, even the application of constant functions (as proposed by \cite{hs-ttsopi-23}) introduces technical challenges, even with just two thresholds.

In this section, we demonstrate how utilizing Poissonization and dynamic programming enables us to establish a lower bound on the competitive ratio. It's important to note that the results from \cite{hs-ttsopi-23} assumes $n \to \infty$, an assumption we also adopt here.

\subsection{Dynamic programming to compute the competitive ratio}
\paragraph{The algorithm} To simplify the exposition, we will have $k$ threshold functions $\tau_1, ..., \tau_k$ which are all decreasing \textbf{step} functions. In particular, for some $p\in \mathbb{N}_{\geq 1}$, from time $(i-1)/p$ to time $i/p$ for $1\leq i\leq p$, the threshold for $\tau_j(x)$ will be $\Xi(c_{ij})$. Hence, we are optimizing for $kp$ parameters $\{c_{i,j}\}$. 

\paragraph{Finely discretizing time} Even with Poissonization, the \textbf{exact} analysis of such strategy would still be painful and involve several nested summations. To counter this, we break time into discretized chunks of $1/m$ using a clock (with $m\in \mathbb{N}_{\geq 1}$). We use the modified \algoref{semionline:2}. 
\begin{algorithm}
\caption{Modified \IID Semi-Online Algorithm}\algolab{semionline:2}
\algolab{}
\begin{algorithmic}[1] 
    \State set $r \leftarrow 1$ and $i^\ast \leftarrow 1$
    \State \textsc{clock} $\leftarrow 0$
    \For{$i = 1, \ldots, n$}
        \If{$X_i \geq \tau_r(t_i)$ and $t_i \geq \textsc{clock}$}
            \State $r \leftarrow r + 1$
            \State $i^\ast \leftarrow i$
            \State \textsc{clock} $\leftarrow \lceil m t_i \rceil / m$
        \EndIf
    \EndFor
    \State \Return $X_{i^\ast}$
\end{algorithmic}
\end{algorithm}

In particular, once we see a value above $\tau_r(t_i)$, we ``skip'' the time to the next multiple of $1/m$. As $m$ increases, the performance of the algorithm should mimic the continuous counterpart. We also insure that $ p | m$ ($p$ divides $m$) so that the discretized times are aligned with the $p$ phases of any of the functions $\tau_j$. 

\paragraph{Dynamic Program} Let us fix $\ell$ and aim to compute $\Prob{\Alg \geq \ell}$ for applying stochastic dominance. Define $\textsc{Prob}[b, j, i]$ for $b \in \{\textsc{T}, \textsc{F}\}$, $0 \leq i < m$, and $1 \leq j \leq k+1$ as the probability that the last successful test among variables arriving from time $t = i/m$ to $t = 1$ is above $\ell$, under the current use of threshold function $\tau_j$, conditioned on whether the last successful query we saw (if any) before time $i/m$ was (or was not) above $\ell$ (indicated by $b = \textsc{T}$ or $b = \textsc{F}$, respectively).

For instance, $\textsc{Prob}[\textsc{T}, 2, 120]$ represents the probability that the last successful test among variables arriving from time $t = 120/m$ to $t = 1$ is above $\ell$, given that we are currently using threshold function $\tau_2$ and that we have seen a successful query above $\ell$ before time $120/m$. Notably, $\textsc{Prob}[\textsc{F}, 1, 0]$ equals $\Prob{\Alg \geq \ell}$.

\paragraph{Recurrence} 
\begin{lemma}
\lemlab{recurrence}
Define $C[j, i] = \sum_{\beta=1}^n \Prob{X_\beta \geq \tau_j(i/m)}$ and $\ell' = \sum_{\beta=1}^n \Prob{X_\beta \geq \ell}$. The computation of $\textsc{Prob}[b, j, i]$ is subject to the following recurrence: if $i \geq m$ or $j = k+1$, then $\textsc{Prob}[\textsc{T}, j, i] = 1$ and $\textsc{Prob}[\textsc{F}, j, i] = 0$. Otherwise, we have
\begin{align*}
    \textsc{Prob}[b, j, i] = \begin{cases}
    e^{-C[j, i]/m} \textsc{Prob}[b, j, i+1] + (1 - e^{-C[j, i]/m}) \left( \frac{\ell'}{C[j, i]} \textsc{Prob}[\textsc{T}, j+1, i+1] \right. \\ 
    \left. + \frac{C[j, i] - \ell'}{C[j, i]} \textsc{Prob}[\textsc{F}, j+1, i+1] \right) & \text{if } \ell' \leq C[j, i], \\
    \\
    e^{-C[j, i]/m} \textsc{Prob}[b, j, i+1] + (1 - e^{-C[j, i]/m}) \textsc{Prob}[\textsc{T}, j+1, i+1] & \text{if } \ell' > C[j, i].
\end{cases}
\end{align*}
In particular, we can compute $\textsc{Prob}(\textsc{F}, 1, 0)$ in $O(mk)$ time. 
\end{lemma}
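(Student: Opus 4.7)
The plan is to derive the recurrence by conditioning on what happens inside the discretized time slab $[i/m,(i+1)/m]$, exploiting both the Markovian structure of the algorithm's state and the Poissonization framework from \secref{warmupiid}. Because $p \mid m$, the threshold function $\tau_j$ is constant across this slab at the value $\tau_j(i/m)$, and by \lemref{replace:poisson} the arrivals in the rectangle $R_{j,i} = [i/m,(i+1)/m]\times[\tau_j(i/m),\infty)$ form, in the $n\to\infty$ limit already assumed, a Poisson point process of total rate $C[j,i]/m$, \emph{independent} of the arrival process outside $R_{j,i}$. The sample space splits according to whether $R_{j,i}$ contains zero or at least one point, with probabilities $e^{-C[j,i]/m}$ and $1-e^{-C[j,i]/m}$ respectively. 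In the ``zero arrivals'' case the algorithm's state at time $(i+1)/m$ is still $(b,j)$, so the remaining contribution is exactly $\textsc{Prob}[b,j,i+1]$; this is the first term appearing in both branches of the claimed recurrence.

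For the ``at least one arrival'' case, I would invoke that the algorithm accepts the earliest-in-time point of $R_{j,i}$ and then pushes the clock to $(i+1)/m$ while incrementing $r$ to $j+1$, so the successor state is $(b',j+1,i+1)$ with $b'$ determined by whether the accepted value exceeds $\ell$. To compute $\Prob{b'=\textsc{T}}$, I would split $R_{j,i}$ by the horizontal line $v=\ell$. When $\ell \ge \tau_j(i/m)$---equivalently $\ell'\le C[j,i]$---the rectangle decomposes into two disjoint sub-rectangles with Poisson rates $\ell'/m$ and $(C[j,i]-\ell')/m$. Since the Poisson intensity in $R_{j,i}$ is a product of Lebesgue time-measure and a value marginal, the time and value coordinates of the first arrival are independent and the value is distributed as the normalized value marginal; hence the first arrival lies above $\ell$ with probability exactly $\ell'/C[j,i]$. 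When instead $\ell'>C[j,i]$ we have $\ell<\tau_j(i/m)$, so every point of $R_{j,i}$ automatically exceeds $\ell$ and $b'$ is deterministically $\textsc{T}$. Substituting these probabilities into the linear combination of $\textsc{Prob}[\textsc{T},j+1,i+1]$ and $\textsc{Prob}[\textsc{F},j+1,i+1]$ produces both branches of the recurrence.

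The base cases are immediate: if $i\ge m$ no further time slabs remain, and if $j=k+1$ then $\tau_{k+1}\equiv\infty$ blocks any further acceptance, so in both settings the value returned is the last accepted variable (as recorded by $b$), giving $\textsc{Prob}[\textsc{T},j,i]=1$ and $\textsc{Prob}[\textsc{F},j,i]=0$. The running-time claim then follows because there are $2(k+1)(m+1)=O(mk)$ table entries, each computable in $O(1)$ time from strictly later entries in the $j$- or $i$-coordinate; a single backward sweep fills the table in $O(mk)$ total time, and $\Prob{\Alg\ge\ell}=\textsc{Prob}[\textsc{F},1,0]$ is read off at the end.

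The main subtlety I expect is making the independence and ``distribution of the first arrival'' steps fully rigorous: strictly, \lemref{replace:poisson} only couples the true arrival distribution to its Poisson surrogate up to a total variation distance that vanishes as $n\to\infty$. The cleanest route is to verify the recurrence exactly for the Poisson surrogate and then transfer it to the original \IID problem with an additive $o(1)$ error via the coupling, which is harmless under the $n\to\infty$ regime inherited from \cite{hs-ttsopi-23}. A minor side point is the probability-zero event that some $t_i$ coincides with a slab boundary $i/m$, which can be ignored by continuity of the arrival-time distribution.
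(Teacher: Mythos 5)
Your proposal is correct and follows essentially the same route as the paper: condition on whether the Poissonized rectangle $[i/m,(i+1)/m]\times[\tau_j(i/m),\infty)$ is empty, and when it is not, determine the bit $b'$ from whether the first arrival exceeds $\ell$, which happens with probability $\ell'/C[j,i]$ when $\ell'\leq C[j,i]$ and with probability $1$ otherwise. The only cosmetic difference is that you justify the $\ell'/C[j,i]$ factor via the product structure of the Poisson intensity, whereas the paper isolates it as an explicit integral computation (\lemref{poisson_nice_property}); these are the same fact.
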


Before we prove \lemref{recurrence}, we need the following auxiliary lemma.  
\begin{lemma}
\lemlab{poisson_nice_property}
    Let $\tau_1=\Xi(\ell_1)$ and $\tau_2=\Xi(\ell_2)$ for $\ell_1 < \ell_2$. Then 
    \[
    \ProbCond{\text{The first realization above $\tau_2$ is also above }\tau_1}{\text{There is a realization above }\tau_2} = \frac{\ell_1}{\ell_2} 
    \]
\end{lemma}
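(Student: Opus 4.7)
The plan is to apply the Poissonization framework of \lemref{replace:poisson} directly in the time-value plane. Because $\Xi$ is decreasing, the hypothesis $\ell_1 < \ell_2$ forces $\tau_1 > \tau_2$, so the strip of arrivals above $\tau_2$ splits disjointly as $H \sqcup M$ with $H = [0,1]\times [\tau_1,\infty)$ and $M = [0,1]\times [\tau_2,\tau_1)$ having measures $\mu(H) = \ell_1$ and $\mu(M) = \ell_2 - \ell_1$ respectively. I would invoke \lemref{replace:poisson} to conclude that, in the limit $n \to \infty$, the arrivals in $H$ and $M$ form two \emph{independent} Poisson point processes at these rates; independence is inherited from the multi-coordinate coupling in \lemref{distance}, which assigns independent Poisson counts to disjoint canonical boxes.

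With the two independent processes in hand, the argument reduces to the standard merging/thinning property of Poisson processes. Their superposition projected onto the time axis is a Poisson process of rate $\ell_2$, and each point is independently labeled ``high'' (belonging to $H$) with probability $\ell_1/\ell_2$, with labels that are independent of arrival times. Conditioning on the merged process being nonempty, i.e., the event that there is a realization above $\tau_2$, does not alter either the independence structure or the marginal label probabilities. Hence the earliest point in the merged process is labeled ``high'' with probability $\ell_1/\ell_2$, which is exactly the claimed conditional probability.

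The only real obstacle is justifying the asymptotic independence of arrival counts in two disjoint time-value regions, which would be messy to verify by direct combinatorics on $n$ IID variables at finite sample size; routing through the multi-coordinate Poissonization of \lemref{distance} and \lemref{replace:poisson} handles this for free, after which everything collapses to the elementary coloring property of a Poisson process.
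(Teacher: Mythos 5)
Your proof is correct and reaches the same conclusion via a genuinely different route. The paper's proof is a one-line direct computation: it integrates the density that the first point of the merged process falls at time $x$ and is ``high,'' namely $\int_0^1 e^{-\ell_2 x}\ell_1\,\dif x$, and divides by $1-e^{-\ell_2}$. You instead invoke the superposition/coloring duality for Poisson processes---the two independent streams of rates $\ell_1$ and $\ell_2-\ell_1$ are equivalent, in distribution, to a single rate-$\ell_2$ stream whose points are independently colored ``high'' with probability $\ell_1/\ell_2$, with colors independent of arrival order---and then observe that conditioning on nonemptiness leaves the color of the earliest point Bernoulli$(\ell_1/\ell_2)$. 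Both routes lean on the same Poissonization input (independence of counts in the two disjoint bands $[\tau_1,\infty)$ and $[\tau_2,\tau_1)$, delivered by \lemref{replace:poisson}), but your argument makes the ratio $\ell_1/\ell_2$ transparent as a structural marking probability rather than the output of an integral, which arguably scales better if the lemma ever needs to be generalized (e.g., to the first $r$ realizations, or to time-inhomogeneous rates), while the paper's version is terser. The one step you flag as a worry---asymptotic independence of counts in disjoint time-value regions---is not actually an obstacle; it is exactly what the multinomial Poisson coupling in \lemref{distance} and \lemref{replace:poisson} already provides, so your hedging there is unnecessary.
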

\begin{proof}
    The conditional probability is 
    \[
    \frac{\int_0^1 e^{-\ell_2 x }\ell_1 \dif x}{1-e^{-\ell_2}} = \frac{\ell_1}{\ell_2}.
    \]
\end{proof}

Finally, we are able to prove \lemref{recurrence}
\begin{proofof}{ \lemref{recurrence}}
    The base cases are straightforward. Consider the scenario where $\ell' \leq C[j, i]$. This presents us with three distinct cases.
    \begin{enumerate}
        \item In the absence of any realization from $i/m$ to $(i+1)/m$ exceeding $\tau_j(i/m)$, which occurs with probability $e^{-C[j, i]/m}$, the probability in question is simply $\textsc{Prob}[b, j, i+1]$.
        \item If there is a realization exceeding $\tau_j(i/m)$, with the first such realization surpassing $\ell$—an event with probability $(1-e^{-C[j, i]/m})\frac{\ell'}{C[j, i]}$ as per \lemref{poisson_nice_property}—the last successful test will be above $\ell$. Consequently, we transition to $\tau_{j+1}$ and time $(i+1)/m$ immediately, in line with the clock mechanism. This scenario aligns with $\textsc{Prob}[\textsc{T}, j+1, i+1]$.
        \item If there is a realization surpassing $\tau_j(i/m)$, but the first realization falls below $\ell$, then the most recent successful test is now deemed to be below $\ell$. Hence, we progress to $\tau_{j+1}$ starting from time $(i+1)/m$, which is represented by $\textsc{Prob}[\textsc{F}, j+1, i+1]$.
    \end{enumerate}
    If $\ell'>C[j, i]$, then if there is no realization above $\tau_j(i/m)$, we continue to $\textsc{Prob}[b, j, i+1]$. Finally, if there is a realization, then the last realization is above $\ell$ now, and we proceed with $\textsc{Prob}[\textsc{T}, j+1, i+1]$. 
\end{proofof}

\paragraph{Optimization} 

\begin{figure}
    \centering
    \includegraphics[width=0.7\textwidth]{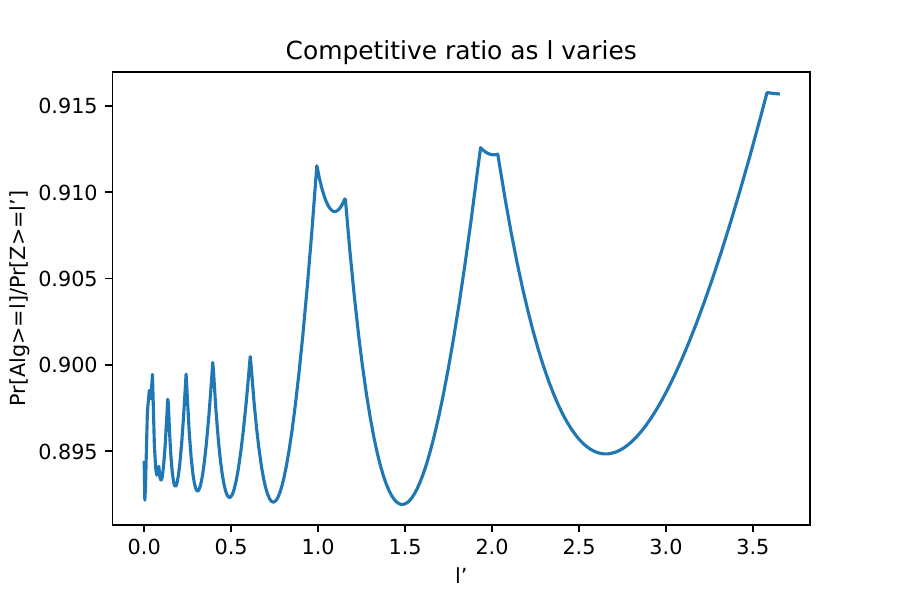}
    \caption{$\Prob{\Alg \geq \ell}/\Prob{Z\geq \ell}$ for $\ell'$ from $0$ to $\max_{i,j}(c_{i,j})$}
    \figlab{fig:iid-semionline}
\end{figure}
\begin{figure}
    \centering
    \includegraphics[width=0.7\textwidth]{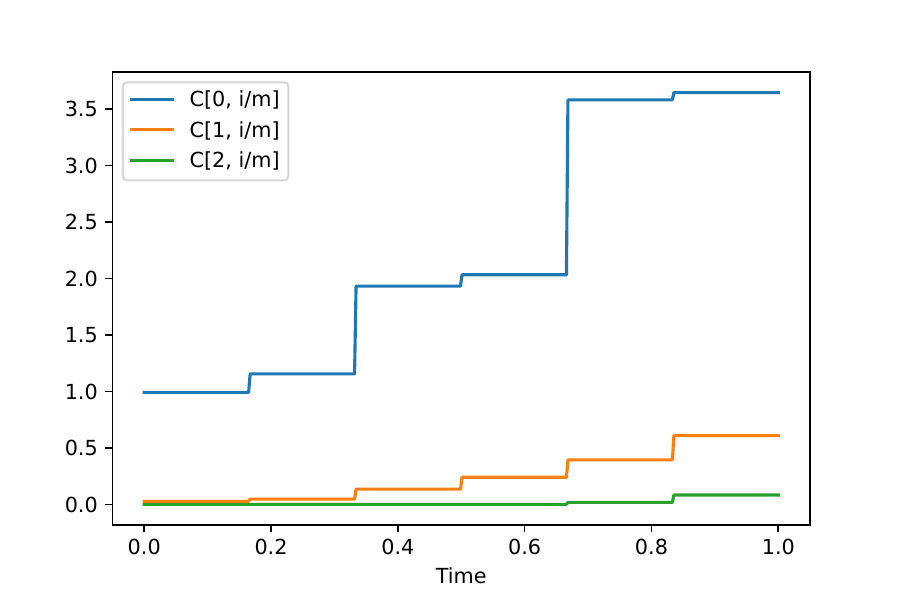}
    \caption{The threshold functions as time varries. Note that the values displayed here are the $c_{j, i}$ (i.e., on expectation, this is how many points should be above the threshold function at this time). The blue, orange, and green thresholds correspond to $\tau_1, \tau_2, \tau_3$ respectively.}
    \figlab{fig:iid-semionline-thresholds}
\end{figure}

We set $m=420$, $p=6$, and $k=3$. Hence we are optimizing for $kp=18$ parameters $c_{i,j}$. In \apdxref{D}, we provide the set of parameters we use along with the code. See \figref{fig:iid-semionline} for the plot of the competitive ratio as $\ell'$ varies from $0$ to $\max_{i,j}(c_{i,j})$. The minimum is at least $0.89$. 

\paragraph{Formal error verification.} For a fixed $\ell' \in [0, \max_{i, j}c_{i,j}]$, define $f(\ell') = \textsc{Prob}(\textsc{F}, 1, 0) - 0.8901 \cdot (1-e^{-\ell'})$. We would like to bound $\frac{d^2f}{d \ell'^2}$. To this effect, define $D_1[b, j, i], D_2[b, j, i]$ as the first and second derivative respectively of $f(\ell')$ with respect to $\ell'$. We can compute both using the recurrence, and bound $|D_2(\ell')|$. Formal error bounds can then be done by standard discretizing ideas. We verified the bounds on $(0, \max_{i,j}c_{i,j}]$ with discrete intervals of size $\epsilon =0.00015$, which ensures an error of $<10^{-6}$ in our claimed competitive ratio.  

\section{\IID and non \IID \textsc{Semi-Online-Load-Minimization}. }
\label{eta7}
\seclab{loadminimization}
We briefly recap the problem. In this setting, we are allowed to ask $n$ queries in total, but a variable can be asked multiple queries. The maximum time any variable is asked is the load. The objective is to find a $1-o(1)$ competitive algorithm in this setting while minimizing the load. \cite{hs-ttsopi-23} give an algorithm with $O(\log n)$ load for \IID random variables, and leave the non \IID case as a future problem. 

In this section, we give an $O(\log^\ast n)$ load algorithm for the non \IID case, hence also improving on the \IID load. 

\paragraph{Bruteforce} If we have a small number of random variables $Y_1, ..., Y_r$, then we can find the maximum with $O(r)$ expected queries and $O(r)$ expected maximum load. We can find which of two random variables (say $Y_1, Y_2$) are larger using $O(1)$ calls on expectation. We query with $\tau$, set to be the median of $Y_1$. Then with probability $1/2 \Prob{Y_2\geq \tau} + 1/2 \Prob{Y_2 \leq \tau} = 1/2$, the realizations are on different sides and we are done in one iteration. However, if the query answers ``yes'' or ``no'' to both, then we update $Y_1, Y_2$ to be the new conditional distributions on this information (for example, if both are ``yes'', then we update the variables to be $Y_1 | Y_1 \geq \tau, Y_2 | Y_2 \geq \tau$), and repeat this process. With probability $1/2^i$, we are done in $i$ iterations. So after $2=O(1)$ expected calls, we know which random variable is larger. Now we apply this process iteratively to $Y_1, ..., Y_r$ using on expectation $O(r)$ queries and load. 

\paragraph{Algorithm} Uniformly sample $\sigma\in \mathbb{S}_n$. First, we throw away $n'=\lceil \sqrt{n} \rceil $ variables, $X_{\sigma(1)}, ..., X_{\sigma(n')}$. We now have $n-n'$ random variables $X_{\sigma(n'+1)}, ..., X_{\sigma(n)}$ and an extra budget of $n'$ queries to use for these random variables. Next, we shard the random variables $X_{\sigma(n')}, ..., X_{\sigma(n)}$ into $\{Y_{\sigma(i) j}\}$. We define $\tau_1$ such that $\sum_{i=n'+1}^n \sum_{j=1}^K \Prob{Y_{\sigma(i),j} \geq \tau_1}= c\log n$ for a sufficiently large constant $c$. 

\paragraph{Log reduction} For $X_{\sigma(n'+1)}, ..., X_{\sigma(n)}$ we first use the threshold $\tau_1$ described above. If at least one query answer is ``yes'', then we continue to the next iteration by including only the random variables that answered yes. In iteration $t$, we use the threshold $\Xi(c \log^{(t)}n) $, the  $\log$ function nested $t$ times (for example $\log^{(2)}n=\log \log n$). By sharding and Poissonization, if we are in iteration $t$, then with probability $1-e^{-c \log^{(t)}n} = 1-\frac{1}{\left( \log^{(t-1)}n \right) ^c}$, we continue to the following iteration, and with probability $\frac{1}{\left( \log^{(t-1)}n \right) ^c}$, the answer will be ``no'' for all random variables being considered (since none are above the new threshold). In that case, we run the bruteforce solution using $O(\log^{(t)} n)$ queries and load on expectation.  So in total, the maximum load on any random variable is on expectation

\[
O(\log^\ast n) + \sum_{t=1}^{O(\log^\ast n)} \frac{O(\log^{t}n)}{O(\log^{(t-1)}n)^c} = O(\log^\ast n).
\]

Clearly, the algorithm always succeeds if $X_{\sigma(n'+1) }, ..., X_{\sigma(n)}$ contains the maximum realization from $X_1, ..., X_n$, which happens with high probability. We now make this more formal. 

\begin{lemma}
    \[
    \Ex{\Alg} \geq (1-\frac{1}{n^{O(1)}}) \Ex{Z}~,
    \]
    where $\Alg$ is the value returned by the algorithm. 
\end{lemma}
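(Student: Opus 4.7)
The plan is to show that except on a failure event of probability $1/n^{\Omega(1)}$, the algorithm outputs exactly $Z$. Let $I^{\ast} := \argmax_{i \in [n]} X_{i}$ (unique almost surely by continuity), and define the good event $\mathcal{G}$ as the intersection of (a) $I^{\ast} \in \{\sigma(n'+1), \ldots, \sigma(n)\}$ with (b) the algorithm terminates inside its $n$-query budget. On $\mathcal{G}$, the log-reduction maintains the invariant that its surviving candidate set always contains $X_{I^{\ast}}$: since $X_{I^{\ast}}$ is the maximum of the retained block, if some threshold in the reduction rejects it then every variable in the current set is rejected, so brute-force takes over; the brute-force primitive described in the ``Bruteforce'' paragraph then returns the maximum of its input, which is $X_{I^{\ast}} = Z$. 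Because $\Alg \in [0, Z]$ always, it suffices to bound $\Ex{Z \cdot \IX{\mathcal{G}^{c}}}$.

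For failure of (a), I would exploit that the permutation $\sigma$ is uniform and independent of $(X_{1}, \ldots, X_{n})$, so $\{I^{\ast} \in \{\sigma(1), \ldots, \sigma(n')\}\}$ is independent of $Z$ with probability exactly $n'/n = O(1/\sqrt{n})$. This gives $\Ex{Z \cdot \IX{\text{(a) fails}}} = (n'/n)\Ex{Z} = O(\Ex{Z}/\sqrt{n})$, which is the dominant $1/n^{\Omega(1)}$ contribution and already of the desired form.

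For failure of (b), \lemref{sharding-break} gives the expected size of the set entering iteration $t$ as $\mu_{t} \leq c \log^{(t-1)} n$, using $-\log(1-x) \geq x$ to pass from the sharded summation threshold to a direct bound on counts of full realizations. Since this size is a sum of independent Bernoullis, a Chernoff bound yields $\Prob{\text{size exceeds } 2\mu_{t}} \leq e^{-\Omega(\mu_{t})}$; the tail of the total query count is dominated by iteration $2$, whose mean $c \log n$ is tiny compared to the post-iteration-$1$ budget $n' = \sqrt{n}$, so by Chernoff the budget is exceeded with probability at most $e^{-\Omega(\sqrt{n})}$. Factoring in the geometric tails of the brute-force primitive, $\Prob{\text{(b) fails}} \leq 1/n^{\omega(1)}$.

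The main obstacle is that, unlike event (a), the budget-exhaustion event in (b) is \emph{not} independent of $Z$, so one cannot simply multiply probabilities. I would handle this by Cauchy--Schwarz: $\Ex{Z \cdot \IX{\text{(b) fails}}} \leq \sqrt{\Ex{Z^{2}} \cdot \Prob{\text{(b) fails}}}$, and this absorbs into the $1/n^{\Omega(1)}$ slack under a mild polynomial-moment bound on $Z$ (one may truncate $Z$ at $\mathrm{poly}(n) \cdot \Ex{Z}$, losing a negligible additive $1/n^{\Omega(1)} \cdot \Ex{Z}$ by Markov). Combining the bounds from (a) and (b) gives $\Ex{\Alg} \geq \Ex{Z} - \Ex{Z \cdot \IX{\mathcal{G}^{c}}} \geq (1 - 1/n^{\Omega(1)}) \Ex{Z}$, as claimed.
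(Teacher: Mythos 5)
Your decomposition differs from the paper's, which is much terser: the paper simply writes $\Ex{\Alg}\geq(1-n'/n)\,\Ex{Z\,\IX{Z\geq\tau_1}}$, observing that when $Z\geq\tau_1$ and the maximizer survives the discard (probability $1-n'/n$, independent of the values) the algorithm returns $Z$, and then finishes with $\Ex{Z\,\IX{Z\geq\tau_1}}\geq\Prob{Z\geq\tau_1}\Ex{Z}\geq(1-n^{-c})\Ex{Z}$ (positive correlation of the monotone functions $z$ and $\IX{z\geq\tau_1}$). That route needs no moment assumptions on $Z$ and never engages with the query budget. Your event (a) and its treatment via independence of $\sigma$ match the paper exactly; the divergence is in how you handle the remaining failure modes, and that is where the gaps are.

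First, your claim that $\Prob{\text{(b) fails}}\leq 1/n^{\omega(1)}$ is wrong. The event that \emph{every} retained variable falls below $\tau_1$ has probability $e^{-c\log n}=n^{-c}$ --- only polynomially small --- and on that event, by your own invariant ("if some threshold rejects it then every variable in the current set is rejected, so brute-force takes over"), the brute-force primitive is invoked on all $n-n'$ retained variables at expected cost $\Theta(n)$ against a remaining budget of $n'=\lceil\sqrt n\rceil$. Your Chernoff argument only controls the tail \emph{after} a successful first iteration. Second, the Cauchy--Schwarz patch does not close: $\Ex{Z^2}$ need not be finite, and the proposed truncation at $M=\mathrm{poly}(n)\Ex{Z}$ via Markov bounds $\Prob{Z>M}$ but not $\Ex{Z\,\IX{Z>M}}$, which for heavy-tailed $Z$ (e.g.\ $Z=n^{1000}$ with probability $n^{-1000}$) can be a constant fraction of $\Ex{Z}$. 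Both gaps are repairable by the paper's device: on the problematic sub-event where the maximizer is retained but all retained values lie below $\tau_1$, one has $Z<\tau_1$ deterministically, and $\tau_1\leq\Ex{Z}/\Prob{Z\geq\tau_1}\leq(1+o(1))\Ex{Z}$, so $\Ex{Z\,\IX{Z<\tau_1}}\leq\tau_1 n^{-c}=O(n^{-c})\Ex{Z}$ with no second moment needed. For the residual superpolynomially rare budget overruns you would still need either a deterministic cap on the brute-force (which the paper does not supply --- its cost bounds are only in expectation, a point the paper's own proof silently ignores) or a similar conditioning that bounds $Z$ on the failure event.
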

\begin{proof}
We have that 
    \[
    \Ex{\Alg} = \sum_{z\in [0, \infty)} \mathds{1}_{z\geq \tau_1}z \Prob{\Alg \text{ selects the maximum }|Z=z}.
    \]
For $z\geq \tau_1$, with probability at least $\geq 1-n'/n = 1-1/n^{O(1)}$), the maximum is in $X_{\sigma(n'+1)}, ..., X_{\sigma(n)}$ and the algorithm succeeds in finding it. So we have    
\[
    \Ex{\Alg} \geq (1-1/n^{O(1)}) \sum_{z\in [0, \infty)} \mathds{1}_{z\geq \tau_1}z  = (1-1/n^{O(1)}) \Prob{Z\geq \tau}\Ex{Z} \geq (1-1/n^{O(1)})(1-1/n^{O(1)}) \Ex{Z}.
    \]
The result follows. 
\end{proof}

\section{Conclusion and future work. }
\seclab{conclusion}
The main ingredient in all our analysis is breaking the non \IID random variables into shards (in the case of non \IID random variables), and arguing about the competitive ratio of the algorithm using events on the shards, rather than on the random variables directly. This is possible due to our application of Poissonization technique. This analysis gives significantly simpler proofs of known results, but also better competitive ratios for several well studied prophet inequalities. 

A conjecture in the field is that the optimal competitive ratio for the non \IID prophet inequality with order-selection is the same as the optimal prophet-inequality ratio for \IID random variables (i.e., $\approx 0.745$). One possible way of achieving this is choosing a different time of arrival distribution for each random variable. This is an idea that was employed in the recent result by Peng and Teng \cite{bz-ospi-22}. Together with the shards point of view, it might be possible to argue that the behavior of the shards (with different time of arrival distributions) can mimic the realizations more closely than otherwise using a uniform time of arrival, allowing the results for the \IID case to go through. We leave this as a potential future direction.  

\bibliographystyle{alpha}
\bibliography{references}

\newcommand{\etalchar}[1]{$^{#1}$}
 \providecommand{\CNFX}[1]{ {\em{\textrm{(#1)}}}}
\begin{thebibliography}{GMTS23}

\bibitem[ACK18]{ack-pss1b-18}
Yossi Azar, Ashish Chiplunkar, and Haim Kaplan.
\newblock Prophet secretary: Surpassing the 1-1/e barrier.
\newblock In {\'{E}}va Tardos, Edith Elkind, and Rakesh Vohra, editors, {\em Proceedings of the 2018 {ACM} Conference on Economics and Computation, Ithaca, NY, USA, June 18-22, 2018}, pages 303--318. {ACM}, 2018.

\bibitem[AEE{\etalchar{+}}17]{aeehk-b1op-17}
Melika Abolhassani, Soheil Ehsani, Hossein Esfandiari, MohammadTaghi Hajiaghayi, Robert~D. Kleinberg, and Brendan Lucier.
\newblock Beating 1-1/e for ordered prophets.
\newblock In Hamed Hatami, Pierre McKenzie, and Valerie King, editors, {\em Proceedings of the 49th Annual {ACM} {SIGACT} Symposium on Theory of Computing, {STOC} 2017, Montreal, QC, Canada, June 19-23, 2017}, pages 61--71. {ACM}, 2017.

\bibitem[AGSC02]{ags-rpiwm-02}
David Assaf, Larry Goldstein, and Ester Samuel-Cahn.
\newblock Ratio prophet inequalities when the mortal has several choices.
\newblock {\em The Annals of Applied Probability}, 12(3):972--984, 2002.

\bibitem[ASC00]{as-srpimm-00}
David Assaf and Ester Samuel-Cahn.
\newblock Simple ratio prophet inequalities for a mortal with multiple choices.
\newblock {\em Journal of Applied Probability}, 37(4):1084--1091, 2000.

\bibitem[BC23]{bc-piosb-22}
Archit Bubna and Ashish Chiplunkar.
\newblock Prophet inequality: Order selection beats random order.
\newblock In {\em Proceedings of the 24th ACM Conference on Economics and Computation}, EC '23, page 302–336, New York, NY, USA, 2023. Association for Computing Machinery.

\bibitem[Cam60]{c-atpbd-60}
Lucien~Le Cam.
\newblock An approximation theorem for the poisson binomial distribution.
\newblock {\em Pacific Journal of Mathematics}, 10:1181--1197, 1960.

\bibitem[CFH{\etalchar{+}}21]{cfhov-ppmot-21}
Jos{\'{e}}~R. Correa, Patricio Foncea, Ruben Hoeksma, Tim Oosterwijk, and Tjark Vredeveld.
\newblock Posted price mechanisms and optimal threshold strategies for random arrivals.
\newblock {\em Math. Oper. Res.}, 46(4):1452--1478, 2021.

\bibitem[CSZ21]{csz-pstbs-18}
Jose Correa, Raimundo Saona, and Bruno Ziliotto.
\newblock Prophet secretary through blind strategies.
\newblock {\em Mathematical Programming}, 08 2021.

\bibitem[dH12]{h-ptcm-12}
Frank den Hollander.
\newblock Probability theory : The coupling method.
\newblock 2012.

\bibitem[EFN18]{efn-pso-18}
Tomer Ezra, Michal Feldman, and Ilan Nehama.
\newblock Prophets and secretaries with overbooking.
\newblock In {\em Proceedings of the 2018 ACM Conference on Economics and Computation}, EC '18, page 319–320, New York, NY, USA, 2018. Association for Computing Machinery.

\bibitem[EHLM17]{ehlm-ps-17}
Hossein Esfandiari, MohammadTaghi Hajiaghayi, Vahid Liaghat, and Morteza Monemizadeh.
\newblock Prophet secretary.
\newblock {\em SIAM Journal on Discrete Mathematics}, 31(3):1685--1701, 2017.

\bibitem[EHLM19]{ethlm-psmpc-19}
Hossein Esfandiari, Mohammad~Taghi Hajiaghayi, Brendan Lucier, and Michael Mitzenmacher.
\newblock Prophets, secretaries, and maximizing the probability of choosing the best.
\newblock {\em International Conference on Artificial Intelligence and Statistics. AISTATS}, 2019.

\bibitem[GM66]{gm-rms-66}
John~P. Gilbert and Frederick Mosteller.
\newblock Recognizing the maximum of a sequence.
\newblock {\em Journal of the American Statistical Association}, 61(313):35--73, 1966.

\bibitem[GMTS23]{gmts-pisrofos-23}
Giordano Giambartolomei, Frederik Mallmann-Trenn, and Raimundo Saona.
\newblock Prophet inequalities: Separating random order from order selection.
\newblock {\em ArXiv}, abs/2304.04024, 2023.

\bibitem[{Gur}23]{gurobi}
{Gurobi Optimization, LLC}.
\newblock {Gurobi Optimizer Reference Manual}, 2023.

\bibitem[HK82]{hk-csrse-82}
T.~P. Hill and Robert~P. Kertz.
\newblock Comparisons of stop rule and supremum expectations of i.i.d. random variables.
\newblock {\em Ann. Probab.}, 10(2):336--345, 05 1982.

\bibitem[HKS07]{hks-aomdpi-07}
Mohammad~Taghi Hajiaghayi, Robert Kleinberg, and Tuomas Sandholm.
\newblock Automated online mechanism design and prophet inequalities.
\newblock In {\em Proceedings of the 22nd National Conference on Artificial Intelligence - Volume 1}, AAAI'07, page 58–65. AAAI Press, 2007.

\bibitem[HS23]{hs-ttsopi-23}
Martin Hoefer and Kevin Schewior.
\newblock {Threshold Testing and Semi-Online Prophet Inequalities}.
\newblock In Inge~Li G{\o}rtz, Martin Farach-Colton, Simon~J. Puglisi, and Grzegorz Herman, editors, {\em 31st Annual European Symposium on Algorithms (ESA 2023)}, volume 274 of {\em Leibniz International Proceedings in Informatics (LIPIcs)}, pages 62:1--62:15, Dagstuhl, Germany, 2023. Schloss Dagstuhl -- Leibniz-Zentrum f{\"u}r Informatik.

\bibitem[JMZ22]{jwj-tgmpiosk-22}
Jiashuo Jiang, Will Ma, and Jiawei Zhang.
\newblock {\em Tight Guarantees for Multi-unit Prophet Inequalities and Online Stochastic Knapsack}, pages 1221--1246.
\newblock 2022.

\bibitem[KS77]{ks-sfv-77}
Ulrich Krengel and Louis Sucheston.
\newblock Semiamarts and finite values.
\newblock {\em Bull. Amer. Math. Soc.}, 83(4):745--747, 07 1977.

\bibitem[KS78]{ks-sapfv-78}
Ulrich Krengel and Louis Sucheston.
\newblock On semiamarts, amarts, and processes with finite value.
\newblock {\em Probability on Banach spaces}, 4:197--266, 1978.

\bibitem[KW12]{kw-mpi-12}
Robert Kleinberg and Seth~Matthew Weinberg.
\newblock Matroid prophet inequalities.
\newblock In {\em Proceedings of the Forty-Fourth Annual ACM Symposium on Theory of Computing}, STOC '12, page 123–136, New York, NY, USA, 2012. Association for Computing Machinery.

\bibitem[KW19]{kw-mpiam-19}
Robert Kleinberg and S.~Matthew Weinberg.
\newblock Matroid prophet inequalities and applications to multi-dimensional mechanism design.
\newblock {\em Games Econ. Behav.}, 113:97--115, 2019.

\bibitem[PT22]{bz-ospi-22}
Bo~Peng and Zhihao~Gavin Tang.
\newblock Order selection prophet inequality: From threshold optimization to arrival time design.
\newblock In {\em 2022 IEEE 63rd Annual Symposium on Foundations of Computer Science (FOCS)}, pages 171--178, 2022.

\bibitem[R\'62]{lefttoright}
Alfred R\'enyi.
\newblock Th\'eorie des \'el\'ements saillants d'une suite d'observations.
\newblock {\em Annales de la facult\'e des sciences de l'universit\'e de Clermont. Math\'ematiques}, 8(2):7--13, 1962.

\bibitem[SC84]{s-ctsrm-84}
Ester Samuel-Cahn.
\newblock Comparison of threshold stop rules and maximum for independent nonnegative random variables.
\newblock {\em The Annals of Probability}, 12(4):1213--1216, 1984.

\bibitem[SHL24]{oracle-augmented}
Har-Peled Sariel, Elfarouk Harb, and Vasilis Livanos.
\newblock Oracle-augmented prophet inequalities.
\newblock 2024.

\bibitem[Sin18]{s-couup-18}
Sahil Singla.
\newblock {\em Combinatorial Optimization Under Uncertainty: Probing and Stopping-Time Algorithms}.
\newblock PhD thesis, CMU, 2018.
\newblock \url{http://reports-archive.adm.cs.cmu.edu/anon/2018/CMU-CS-18-111.pdf}.

\bibitem[Wan86]{w-cma-86}
Y.~H. Wang.
\newblock Coupling methods in approximations.
\newblock {\em The Canadian Journal of Statistics / La Revue Canadienne de Statistique}, 14(1):69--74, 1986.

\end{thebibliography}

\begin{appendices}

\section{Missing proofs}
\apdxlab{A}
\subsection{Proof of \lemref{folklore}}
\label{proof-folklore}
\begin{proof}
For $x\in [0,1]$, the process that independently chooses a time $t_i$ uniformly at random from $[0,1]$ has $\Prob{t_i \leq x}=x$. 

For the second process, let $\sigma$ be the random permutation drawn from $\mathbb{S}_n$. For $x\in [0,1]$,
\[
\Prob{T_i \leq x} = \sum_{j=1}^n \Prob{t_{(j)}\leq x }\Prob{\sigma(i)=j}.
\]
Where $t_{(j)}$ is the $j$-th order statistic of $t_1, \ldots, t_n$ generated by the algorithm. But then
\begin{align*}
    \Prob{T_i \leq x} &= \sum_{j=1}^n \Prob{t_{(j)}\leq x }\Prob{\sigma(i)=j} = \sum_{j=1}^n \frac{1}{n} \sum_{\beta=j}^n {n \choose \beta} x^\beta (1-x)^{n-\beta} \\
&= \frac{1}{n}\sum_{\beta=1}^n {n \choose \beta} x^\beta (1-x)^{n-\beta} \beta = \frac{1}{n} n x = x.
\end{align*}
To show independence, we have for $a,b \in [n]$ such that $a\neq b$, and $x,y\in [0,1]$ such that $x\leq y$
\begin{align*}
    \Prob{T_a\leq x, T_b \leq y} &= \sum_{i=1}^n \sum_{j=i+1}^n \Prob{t_{(i)}\leq x, t_{(j)}\leq y}\Prob{\sigma(a)=i, \sigma(b)=j}  \\ 
    &= \frac{1}{n(n-1)}\sum_{i=1}^n \sum_{j=i+1}^n \Prob{t_{(i)}\leq x, t_{(j)}\leq y} \\
    &= \frac{1}{n(n-1)} \sum_{i=1}^n \sum_{j=i+1}^n \frac{n!}{(i-1)!(j-i-1)!(n-j)!} \int_0^x \int_0^y u^{i-1}(v-u)^{j-i-1}(1-v)^{n-j} dv du\\
    &= \frac{1}{n(n-1)} \int_0^x \int_0^y \sum_{i=1}^n \sum_{j=i+1}^n \frac{n!}{(i-1)!(j-i-1)!(n-j)!} u^{i-1}(v-u)^{j-i-1}(1-v)^{n-j} dvdu\\
    &= \frac{1}{n(n-1)} \int_0^x \int_0^y \frac{n!}{(n-2)!} dvdu = xy = \Prob{T_a\leq x}\Prob{T_b\leq y}.
\end{align*}
Where the interchange of summation and integral follows by Fubini's theorem. Higher order independence follows similarly as above. 
\end{proof}

\subsection{Proof of \lemref{replace:poisson}}
\label{proof-replace:poisson}
\begin{proof}
    Consider the categorical random variable $Y_r \in \Re^{k\times k}$ for which canonical box (if any) realization $r$ arrives in. Hence, it is a categorical random variable parameterized by $p_i \in \Re^{k\times k}$. We have that $\hat{p}_i=\Prob{X_i\geq \tau_k}$. But recall that $\sum_{i=1}^n \Prob{X_i \geq \tau_k} = q$ and so by \IID symmetry and continuity, we have $\hat{p}_i = \frac{q}{n}$. Hence, by \lemref{distance}
    \[
    d(S_n, T_n) \leq \sum_{i=1}^n \frac{2q^2}{n^2} = \frac{2q^2}{n}. 
    \]
    The final remark follows by the additivity of Poisson distributions (i.e., if $X\sim \Poisson(\lambda_1), Y\sim \Poisson(\lambda_2)$, then $X+Y\sim \Poisson(\lambda_1+\lambda_2)$). Taking $k,n \to \infty$, then the variational distance is $0$, and the number of realizations that falls into $\circledcirc$ is the sum of the realizations in the canonical boxes inside $\circledcirc$ (that are coupled with the Poisson variables). 
\end{proof}

\section{Code for \IID prophet inequality getting $\approx 0.7406$}
\apdxlab{B}
\begin{enumerate}
    \item numpy (Tested with version 1.21.5), Scipy (Tested with version 1.7.3)
\end{enumerate} 
To copy the code directly, use \href{https://ideone.com/5gsEqx}{this link}
\begin{python}
import numpy as np
from scipy.optimize import minimize
import scipy

m = 10 #m parameter from paper

def lamb(j, cs):
    return 1/m * sum(cs[i] for i in range(1, j))

#Computes f_j(alphas, alphat) in time O(m^2)
def fj(j, cs, l):
    part1 = 1-np.exp(-lamb(j, cs))
    part2 = 0
    for k in range(j, m+1):
        part2 += np.exp(-lamb(k, cs)) * (1-np.exp(-cs[k]/m)) * l/cs[k]
    return part1+part2

def evaluate_competitive_ratio(cs):
    for i in range(1, len(cs)):
        if cs[i]<cs[i-1]:
            raise Exception("Values are not increasing")
    
    competitive_ratio = 1-np.exp(-1/m * float(sum(cs)) )
    competitive_ratio = min(sum([np.exp(-lamb(k, cs)) * (1-np.exp(-cs[k]/m))/cs[k]  for k in range(1, m+1)]), competitive_ratio)

    for j in range(2, m+1): 
        alphat_bounds = [(cs[j-1],cs[j])]
        x0 = (cs[j-1]+cs[j])/2.0
        
        res = minimize(lambda l: fj(j, cs, l[0])/(1-np.exp(-l[0])), 
                       x0=x0, 
                       bounds=alphat_bounds)
        """As a sanity check, make sure res.fun <= a few values in the middle to make sure minimization worked"""
        for xx in np.linspace(alphat_bounds[0][0], alphat_bounds[0][1], 1000):
            assert res.fun <= fj(j, cs, xx)/(1-np.exp(-xx)), (alphat_bounds, xx, res)
        
        competitive_ratio = min(competitive_ratio, res.fun)
    return competitive_ratio
    
cs = [0.        , 0.07077646, 0.2268947 , 0.42146915, 0.60679691,
       0.8570195 , 1.17239753, 1.51036256, 1.9258193 , 2.88381902,
       3.97363258]
c = evaluate_competitive_ratio(cs)
print(c)
\end{python}

\newpage 
\section{Code for Prophet Secretary}
Requires libraries:
\apdxlab{C}
\begin{enumerate}
    \item numpy (Tested with version 1.21.5)
    \item scipy (Tested with version 1.7.3)
    \item mpmath (Tested with version 1.2.1)
\end{enumerate} 
To copy the code directly, use \href{https://ideone.com/AuSVbj}{this link}

\begin{python}
import numpy as np
from scipy.optimize import minimize
import mpmath as mp
from mpmath import mpf
import scipy

m = 16 #m parameter from paper
mp.dps = 500 #This will force mpmath to use a precision of 
             #500 decimal places, just as a sanity check

comp_ratio = mpf('0.6724') #This is the competitive ratio we claim

def stable_qtk(x):
    #The function (1-e^(-x))/x is unstable for small x, so we will 
    # Lower bound it using the summation in Equation 13 in the paper
    ans = mpf('0')
    for beta in range(30):
        ans += mp.exp(-x) * x**beta / mp.factorial(beta) * 1/(beta+1)
    return ans

#Computes f_j(alphas, eta) in time O(m^2)
def fj(j, alphas, eta):
    part1 = mpf('0')
    for k in range(1, j): #Goes from 1 to j-1 as in paper
        part1 += mpf('1')*1/m * (1-alphas[k])

    #alphas_hat[nu]=alphas[nu] if nu<=j-1 and eta if nu==j
    alphas_hat = [alphas[nu] for nu in range(j)] + [eta]    
    part2 = mpf('0')
    for k in range(j, m+1): #Goes from j to m as in paper
        product = mpf('1')
        for nu in range(1, k): #Goes from 1 to k-1
            product *= (alphas[nu]**(1/m))
        
        wk = mpf('0')
        s_nu = mpf('0')
        for nu in range(j): #from 0 to j-1
            r_nu = (m-(k-1)+nu)/m * mp.log(alphas_hat[nu]/alphas_hat[nu+1])
            wk += mp.exp(-s_nu)*(1-mp.exp(-r_nu)) * 1/(m-(k-1)+nu)
            s_nu += r_nu
            
        q_t_k = stable_qtk( 1/m * mp.log(eta/alphas[k]) )
        part2 += product * wk * q_t_k
    
    return part1 + part2

def verify_competitive_ratio(alphas):
    assert np.isclose(np.float64(alphas[0]), 1) #first should be 1
    assert np.isclose(np.float64(alphas[-1]), 0) #Last should be 0
    assert len(alphas)==(m+2)
    
    defeciency = mpf('1')  #This quantity is the mninimum of
                    # f_j(alpha1, ..., alpha m, eta)- comp_ratio*(1-eta)
                    #This needs to be >=0 at the end of the code

    for j in range(1, m+2): #Goes from 1 to m+1 as in paper
        eta_bounds = [(np.float64(alphas[j]),np.float64(alphas[j-1]))] 
        
        x0 = [np.float64((alphas[j]+alphas[j-1])/2)]
        res = minimize(lambda alphat: 
                           fj(j, alphas, alphat[0]) - comp_ratio*(1-alphat[0]), 
                       x0=x0, 
                       bounds=eta_bounds)
        """
        As a sanity check, we will evaluate fj(alphas, x) - comp_ratio*(1-x) for x in 
        eta_bounds and assert that res.fun (the minimum value we got) is <= that. 
        This is just a sanity check to increase the confidence that the minimizer 
        actually got the right minimum
        """
        opt = fj(j, alphas, res.x[0]) - comp_ratio*(1-res.x[0])
        trials = np.linspace(eta_bounds[0][0], eta_bounds[0][1], 200) #200 breaks 
        min_in_trials = min([ fj(j, alphas, x) - comp_ratio*(1-x) for x in trials ])
        assert opt <= min_in_trials
        """
        End of sanity check
        """
        defeciency = min(defeciency, opt)
            
    if defeciency>=mpf('0'):
        print("Claimed bound is True")
    else:
        print("Claimed bound is False")

alphas = [mpf('1.0'), mpf('0.66758603836404173'), mpf('0.62053145929311715'), 
mpf('0.57324846512425975'),
 mpf('0.52577742556626594'), mpf('0.47816906417879007'), mpf('0.43049233470891257'),
 mpf('0.38283722646593055'), mpf('0.33533950489086961'), mpf('0.28831226925828957'),
 mpf('0.23273108361807243'), mpf('0.19315610994691487'), mpf('0.16547915613363387'),
 mpf('0.13558301500280728'), mpf('0.10412501367635961'), mpf('0.071479537771643828'),
 mpf('0.036291830527618585'), mpf('0.0')]

verify_competitive_ratio(alphas) #Takes roughly 20 seconds
\end{python}

\newpage 
\section{Code for \IID \textsc{Semi-Online}}
\apdxlab{D}
To copy the code directly, use \href{https://ideone.com/6G7hou}{this link}

\begin{python}
import numpy as np
from scipy.optimize import minimize
import scipy
import math
np.random.seed(0)

k = 3
m = 420
p = 6
comp_ratio = 0.8901 #The competitive ratio we claim.

def verify_competitive_ratio(cs, eps):
    assert m
    
    C = [[cs[outer + inner] for inner in range(p-1, -1, -1) for _ in range(m//p)] for outer in range(0, len(cs)-1,p)]
    C = np.array(C)
    
    Prob = [[[0 for i in range(m+5)] for j in range(k+5)] for b in range(2)]
    
    def cost(l):
        l = l[0]
        for i in range(m+1):
            for j in range(k+1):
                for b in range(2):
                    if j>=k or i>=m:
                        Prob[b][j][i]=b
        
        for i in range(m+1, -1, -1):
            for j in range(k+1, -1, -1):
                for b in range(2):
                    if j>=k or i>=m:
                        continue 
                    if l<=C[j, i]:
                        Prob[b][j][i] = np.exp(-C[j,i]/m)*Prob[b][j][i+1] + (1-np.exp(-C[j,i]/m))*(l/C[j, i] * Prob[1][j+1][i+1] + (C[j, i]-l)/C[j, i] * Prob[0][j+1][i+1])
                    else:
                        Prob[b][j][i] = np.exp(-C[j,i]/m)*Prob[b][j][i+1] + (1-np.exp(-C[j,i]/m))*Prob[1][j+1][i+1]
        
        return Prob[0][0][0] - comp_ratio*(1-np.exp(-l))
    
    defeciency = 1 #This is the minimum of Prob[0][0][0] - comp_ratio*(1-e^(-l))
                    #Needs to be >=0 at end of execution to that claimed ratio is True
        
    defeciency = min(defeciency, 1-np.exp(-sum(C[0])/m) - comp_ratio) #Case of l'>max(C)
    
    """Run global optimization on cost in the range (0, max(C)]"""
    res = scipy.optimize.shgo(cost, bounds=[(0,C.max())], iters=10,  
                              options={'disp':False, 'f_tol':1e-9})
    defeciency = min(defeciency, res.fun)

    "As a sanity check, make sure that global minimizer succeeded"
    ls = np.linspace(0.0, C.max(), math.ceil(C.max()/eps))
    for l in ls:
        assert res.fun <= cost([l])
    """End of sanity check"""
    
    if defeciency>=0:
        print(f"Claimed bound of {comp_ratio} is True")
    else:
        print(f"Claimed bound of {comp_ratio} is False")

def MonteCarlo(cs):
    C = [[cs[outer + inner] for inner in range(p-1, -1, -1) for _ in range(m//p)] for outer in range(0, len(cs)-1,p)]
    C = np.array(C)

    N = 4000 #Large n, bound converges for n->Infinity
    epochs = 10000 
    prophet = 0
    alg = 0
    for _ in range(epochs):
        X = np.random.uniform(0, 1, (N, 2)) #First dim=time, second dim=Xi~U(0, 1)
        X = X[X[:, 0].argsort()] #Sort by time of arrival 
        r = 0
        clock = 0
        i_star = None
        for i in range(len(X)):
            ti, vi = X[i][0], X[i][1]
            if r<k and vi>=1- C[r][math.floor(ti*m)]/N  and ti>=clock:
                r = r + 1
                i_star = i
                clock = math.ceil(ti*m)/m
        prophet += X[:, 1].max()
        if i_star:
            alg += X[i_star][1]
    
    print(f"The competitive ratio on n={N} IID U(0, 1) random variables is {alg/prophet} using {epochs} epochs.")

cs0 = [3.64589394e+00, 3.58116098e+00, 2.03323633e+00, 1.93319241e+00,
       1.15603731e+00, 9.92652855e-01, 6.10147568e-01, 3.94833386e-01,
       2.41093283e-01, 1.36659577e-01, 4.80563875e-02, 2.83455285e-02,
       8.39298670e-02, 1.91858842e-02, 0.00133218127, 1.33218127e-03,
       1.05769060e-03, 1.05769044e-03]
  
verify_competitive_ratio(cs0, eps=0.0001) #Takes  3-4 minutes
MonteCarlo(cs0) #Takes ~1 min
\end{python}

\newpage 
\section{Code for \IID \textsc{Top-$1$-of-$2$}.}
Requires libraries:
\apdxlab{E}
\begin{enumerate}
    \item numpy (Tested with version 1.21.5)
    \item scipy (Tested with version 1.7.3)
\end{enumerate} 
To copy the code directly, use \href{https://ideone.com/pnbjas}{this link}
\begin{python}
import numpy as np
from scipy.optimize import minimize 
import scipy 

m = 10 #m parameter from paper
comp_ratio = 0.883 #The competitive ratio we claim.

def lamb(j, cs):
    return 1/m * sum(cs[i] for i in range(1, j))

def fj(j, cs, l):
    part1 = 1-np.exp(-lamb(j, cs))
    part2 = 0
    for k in range(j, m+1):
        term = np.exp(- lamb(k, cs))
        term *= np.exp(-1*(m+1)/m*cs[k])
        inner = (m*np.exp(cs[k])*l*(np.exp(cs[k]/m)-1)*(2*cs[k]-l) 
                    + cs[k]*l*np.exp(k*cs[k]/m)*(l-cs[k]))
        term *= inner 
        term /= (m*cs[k]**2)
        part2 += term
    return part1+part2

def verify_competitive_ratio(cs):
    for i in range(1, len(cs)):
        if cs[i]<cs[i-1]:
            raise Exception("cs should be sorted in ascending order")
    
    defeciency = 1 # This is the min value of f_j(c1, ..., cm, \ell') - comp_ratio*(1-e^(-\ell'))
                    #This needs to be >=0 by the end of execution so that the bound we claim
                    #Is truthful.
    
    defeciency = min(defeciency, 1-np.exp(-1/m * float(sum(cs))) - comp_ratio)
    for j in range(1, m+1): 
        ell_bounds = [(cs[j-1],cs[j])]
        
        res = scipy.optimize.shgo(lambda l: fj(j, cs, l[0]) - comp_ratio*(1-np.exp(-l[0])), 
                                  iters=10,
                       bounds=ell_bounds, options={'disp':False, 'f_tol':1e-9})
        
        """As a sanity check, make sure res.fun <= some values in the middle to make sure minimization worked"""
        for xx in np.linspace(ell_bounds[0][0], ell_bounds[0][1], 10000):
            assert res.fun <= fj(j, cs, xx) - comp_ratio*(1-np.exp(-xx)), (j, ell_bounds, xx, res)
        """End of sanity check"""
        defeciency = min(defeciency, res.fun)
    
    if defeciency>=0:
        print(f"Claimed bound of {comp_ratio} is True")
    else:
        print(f"Claimed bound of {comp_ratio} is False")

cs = [0., 0.35598315, 0.56202538, 0.86407969, 1.22558122, 1.65459166,
       2.14361195, 2.5868228 , 3.07922161, 4.0722262 , 5.21637928]

verify_competitive_ratio(cs)
\end{python}

\end{appendices}

\end{document}